\newtheoremstyle{mythmstyle} 
  {3pt}       
  {3pt}       
  {\itshape}  
  {}          
  {\bfseries} 
  {:}         
  {.5em}      
  {}          
\theoremstyle{mythmstyle}
\newcommand{\thistheoremname}{}
\newtheorem*{genericthm}{\thistheoremname}
\newenvironment{mythm}[1]{
  \renewcommand{\thistheoremname}{#1}
  \begin{genericthm}
}
{\end{genericthm}}
\theoremstyle{mythmstyle}
\newtheorem{lemma}{Lemma}[section]
\newtheorem{theorem}{Theorem}[section]
\newtheorem{corollary}{Corollary}[section]
\newtheorem{definition}{Definition}[section]
 \newtheorem{remark}{Remark}[section]
\newtheorem{example}{Example}[section]
\newtheorem{condition}{Condition}[section]
\newcommand{\lp}{\left(}
\newcommand{\rp}{\right)}
\newcommand{\lb}{\left[}
\newcommand{\rb}{\right]}
\newcommand{\lbp}{\left\{}
\newcommand{\rbp}{\right\}}
\newcommand{\lba}{\left\lvert}
\newcommand{\rba}{\right\rvert}
\newcommand{\lnorm}{\left\lVert}
\newcommand{\rnorm}{\right\rVert}
\newcommand{\mcal}{\mathcal}
\newcommand{\what}{\widehat}
\newcommand{\wtild}{\widetilde}
\newcommand{\mbf}{\mathbf}
\newcommand{\mbb}{\mathbb}
\newcommand{\msf}{\mathsf}
\newcommand{\mrm}{\mathrm}
\newcommand{\la}{\leftarrow}
\newcommand{\ra}{\rightarrow}
\newcommand{\da}{\downarrow}
\newcommand{\lfl}{\left\lfloor}
\newcommand{\rfl}{\right\rfloor}
\newcommand{\eqDef}{\triangleq}
\newcommand{\by}{\times}
\newcommand{\diid}{\overset{\text{i.i.d.}}{\sim}}
\newcommand{\st}{\text{ s.t. }}
\newcommand{\E}{\mathbb{E}}
\newcommand{\loss}{\ell}
\newcommand{\Risk}{\mathrm{R}}
\renewcommand{\Pr}{\mathbb{P}}
\newcommand{\Ber}{\mathrm{Ber}}
\newcommand{\Unif}{\mathrm{Unif}}
\newcommand{\Indc}[1]{\mathds{1}\lbp #1\rbp}
\newcommand{\circnum}[1]{\raisebox{.5pt}{\textcircled{\raisebox{-.9pt} {#1}}}}
\DeclareMathOperator*{\argmax}{arg\,max}
\DeclareMathOperator*{\argmin}{arg\,min}
\newcommand{\SBM}{$\mathtt{SBM}$}
\newcommand{\hSBM}{$\mathtt{hSBM}$}
\title{On the Minimax Misclassification Ratio of Hypergraph Community Detection}
\author{
I (Eli) Chien, Chung-Yi Lin, and I-Hsiang Wang
\thanks{I (Eli) Chien was with the Department of Electrical Engineering, National Taiwan University, Taipei 10617, Taiwan. He is now with the Department of Electrical and Computer Engineering, University of Illinois at Urbana-Champaign, Champaign, Illinois, USA (email:\url{ichien3@illinois.edu}).}
\thanks{
C.-Y. Lin is with the Graduate Institute of Communication Engineering, National Taiwan University, Taipei 10617, Taiwan (email: \url{r05942127@ntu.edu.tw}).}
\thanks{ 
I.-H. Wang is with the Department of Electrical Engineering and the Graduate Institute of Communication Engineering, National Taiwan University, Taipei 10617, Taiwan (email: \url{ihwang@ntu.edu.tw}).}
\thanks{The material in this paper was partly presented at the IEEE International Symposium on Information Theory, Aachen, Germany, June 2017 \cite{ISIT17_CDHG} and to partly appear in the International Conference on Artificial Intelligence and Statistics, Canary Islands, Spain, April 2018 \cite{AISTATS18_CDHG}.}
}
\begin{document}

\maketitle

\begin{abstract}

In this paper, community detection in hypergraphs is explored. Under a generative hypergraph model called ``{$d$-wise hypergraph stochastic block model}" ($d$-{\hSBM}) which naturally extends the Stochastic Block Model ({\SBM}) from graphs to $d$-uniform hypergraphs, the asymptotic minimax mismatch ratio (the risk function in the community detection problem) is characterized. For proving the achievability, we propose a two-step polynomial time algorithm that provably achieves the fundamental limit. The first step of the algorithm is a hypergraph spectral clustering method which achieves partial recovery to a certain precision level. The second step is a local refinement method which leverages the underlying probabilistic model along with parameter estimation from the outcome of the first step. 
To characterize the asymptotic performance of the proposed algorithm, we first derive a sufficient condition for attaining weak consistency in the hypergraph spectral clustering step. Then, under the guarantee of weak consistency in the first step, we upper bound the worst-case risk attained in the local refinement step by an exponentially decaying function of the size of the hypergraph and characterize the decaying rate. Compared to the existing works in {\SBM}, the main technical challenge lies in the complex structure of error events since community relations become much more complicated. 
We tackle the challenge by a series of non-trivial generalizations. To ensure the optimality of the refinement algorithm, all possible misclassification patterns among a wealth of community relations should be considered and treated with care when concentration inequalities are applied. Moreover, in proving the performance guarantee of the spectral clustering algorithm, we have to deal with random matrices whose entries are no longer independent as opposed to its graph counterpart. 
For proving the converse, the lower bound of the minimax mismatch ratio is set by finding a smaller parameter space which contains the most dominant error events, inspired by the analysis in the achievability part. 
It turns out that the minimax mismatch ratio decays exponentially fast to zero as the number of nodes tends to infinity, and the rate function is a weighted combination of several divergence terms, each of which is the R\'{e}nyi divergence of order $1/2$ between two Bernoulli distributions. The Bernoulli distributions involved in the characterization of the rate function are those governing the random instantiation of hyperedges in $d$-{\hSBM}. 
Experimental results on synthetic data validate our theoretical finding that the refinement step is critical in achieving the optimal statistical limit.

\end{abstract}

\section{Introduction}
\label{sec:intro}

Community detection (clustering) has received great attention recently across many applications, including social science, biology, computer science, and machine learning, while it is usually an ill-posed problem due to the lack of ground truth. A prevalent way to circumvent the difficulty is to formulate it as an inverse problem on a graph $\mcal{G} = \lbp \mcal{V},\mcal{E}\rbp$, where each node $i \in\mcal{V} = [n] \eqDef \{1,\ldots, n\}$ is assigned a community (label) $\sigma\lp i\rp \in [k] \eqDef \lbp1,\ldots,k\rbp$ that serves as the ground truth. 
The ground-truth \emph{community assignment}
$\sigma:[n] \ra [k]$
is hidden while the graph $\mcal{G}$ is revealed. Each edge in the graph models a certain kind of \emph{pairwise} interaction between the two nodes. The goal of community detection is to determine $\sigma$ from $\mcal{G}$, by leveraging the fact that
different combination of community relations leads to different likeliness of edge connectivity. 
When the graph $\mcal{G}$ is passively observed, community detection can be viewed as a statistical estimation problem, where the community assignment $\sigma$ is to be estimated from a statistical experiment governed by a generative model of random graphs. 
A canonical generative model is the \emph{stochastic block model} ({\SBM}) \cite{HollandLaskey_83} (also known as planted partition model \cite{CondonKarp_01}) which generates randomly connected edges from a set of labeled nodes. The presence of the $\binom{n}{2}$ edges is governed by $\binom{n}{2}$ independent Bernoulli random variables, and the parameter of each of them depends on the community assignments of the two nodes in the corresponding edge.

Through the lens of statistical decision theory, the fundamental statistical limits of community detection provides a way to benchmark various community detection algorithms. Under {\SBM}, the fundamental statistical limits have been characterized recently. One line of work takes a Bayesian perspective, where the unknown labeling $\sigma$ of nodes in $\mcal{V}$ is assumed to be distributed according to certain prior, and one of the most common assumption is i.i.d. over nodes. Along this line, the fundamental limit for exact recovery is characterized \cite{AbbeBandeira_16} in the full generality, while partial recovery remains open in general. See the survey \cite{Abbe_17} for more details and references therein. A second line of work takes a minimax perspective, and the goal is to characterize the minimax risk, which is typically the \emph{mismatch ratio} between the true community assignment and the recovered one. In \cite{ZhangZhou_16}, a tight asymptotic characterization of the minimax mismatch ratio for community detection in {\SBM} is found. 
Along with these theoretical results, several algorithms have been proposed to achieve these limits, including 
degree-profiling comparison \cite{AbbeSandon_15} for exact recovery, spectral MLE \cite{YunProutiere_16} for almost-exact recovery, and a two-step mechanism \cite{GaoMa_15} under the minimax framework.


However, graphs can only capture pairwise relational information, while such dyadic measure may be inadequate in many applications, such as the task of 3-D subspace clustering \cite{AgarwalLim_05} and the higher-order graph matching problem in computer vision \cite{DuchenneBach_11}. 
Moreover, in a co-authorship network such as the DBLP bibliography database where collaboration between scholars usually comes in a group-wise fashion, it seems more appropriate to represent the co-writing relationship in a single collective way rather than inscribing down each pairwise interaction \cite{ZhangHu_17}. 
Therefore, it is natural to model such \emph{beyond-pairwise} interaction by a hyperedge in a hypergraph and study the clustering problem in a hypergraph setting \cite{ZienSchlag_99}. \emph{Hypergraph partitioning} has been investigated in computer science, and several algorithms have been proposed, including spectral methods based on clique expansion \cite{AgarwalBranson_06}, hypergraph Laplacian \cite{ZhouHuang_06}, game-theoretic approaches \cite{BuloPelillo_13}, tensor method \cite{GhoshdastidarDukkipati_15}, linear programming \cite{LiDau_16}, to name a few. Existing approaches, however, mainly focus on optimizing a certain score function entirely based on the connectivity of the observed hypergraph and do not view it as a statistical estimation problem. 

In this paper, we investigate the community detection problem in hypergraphs through the lens of statistical decision theory. Our goal is to characterize the fundamental statistical limit and develop computationally feasible algorithms to achieve it. As for the generative model for hypergraphs, one natural extension of the {\SBM} model to a hypergraph setting is the \emph{hypergraph stochastic block model} ({\hSBM}), where the presence of an order-$h$ hyperedge $\msf{e}\subset\mcal{V}$ (i.e. $\lba\msf{e}\rba=h\leq M$, the maximum edge cardinality) is governed by a Bernoulli random variable with parameter $\theta_{\msf{e}}$ and the presence of different hyperedges are mutually independent. Despite the success of the aforementioned algorithms applied on many practical datasets, it remains open how they perform in {\hSBM} since the the fundamental limits have not been characterized and the probabilistic nature of {\hSBM} has not been fully utilized. 


As a first step towards characterizing the fundamental limit of community detection in hypergraphs, in this work we focus on the ``$d$-wise hypergraph stochastic block model" ($d$-{\hSBM}), where all the hyperedges generated in the hypergraph stochastic block model are of order $d$. 
Our main contributions are summarized as follows. 
\begin{itemize}
\item
First, we characterize of the asymptotic minimax mismatch ratio in $d$-{\hSBM} for any order $d$. 
\item 
Second, we propose a polynomial time algorithm which provably achieves the minimax mismatch ratio in the asymptotic regime, under mild regularity conditions. 
\end{itemize}
To the best of our knowledge, this is the first result which characterizes the fundamental limit on the minimax risk of community detection in random hypergraphs, together with a companion efficient recovery algorithm. 
The proposed algorithm consists of two steps. The first step is a global estimator that roughly recovers the hidden community assignment to a certain precision level, and the second step refines the estimated assignment based on the underlying probabilistic model. 



It is shown that the minimax mismatch ratio in $d$-{\hSBM} converges to zero exponentially fast as $n$, the size of the hypergraph, tends to infinity. The rate function, which is the exponent normalized by $n$, turns out to be a linear combination of R\'{e}nyi divergences of order 1/2. Each divergence term in the sum corresponds to a pair of community relations that would be confused with one another when there is only one misclassification, and the weighted coefficient associated with it indicates the total number of such confusing patterns. Probabilistically, there may well be two or more misclassifications, with each confusing relation pair pertaining to a R\'{e}nyi divergence when analyzing the error probability. However, we demonstrate technically that these situations are all dominated by the error event with a single misclassified node, which leaves out only the ``neighboring" divergence terms in the asymptotic expression. The main technical challenge resolved in this work is attributed to the fact that the community relations become much more complicated as the order $d$ increases, meaning that more error events may arise compared to the dichotomy situation (i.e. same-community and different-community) in the graph {\SBM} case. In the proof of achievability, we show that the second refinement step is able to achieve the fundamental limit provided that the first initialization step satisfies a certain weak consistency condition. The core of the second-step algorithm lies in a local version of the maximum likelihood estimation, where concentration inequalities are utilized to upper bound the probability of error. Here, an additional regularity condition is required to ensure that the probability parameters, which corresponds to the appearance of various types of hyperedge, do not deviate too much from each other. We would like to note that this constraint can be relaxed as long as the number of communities $k$ considered does not scale with $n$. For the first step, we use the tools from perturbation theory such as the Davis-Kahan theorem to prove the performance of the proposed spectral clustering algorithm. Since entries in the derived hypergraph Laplacian matrix are no longer independent, a union bound is applied here to make the analysis tractable with the concentration inequalities. The converse part of the minimax mismatch ratio follows a standard approach in statistics by finding a smaller parameter space where we can analyze the risk. We first lower bound the minimax risk by the Bayesian risk with a uniform prior. Then, the Bayesian risk is transformed to a local one by exploring the closed-under-permutation property of the targeted parameter space. Finally, we identify the local Bayesian risk with the risk function of a hypothesis testing problem and apply the Rozovsky lower bound in the large deviation theory to obtain the desired converse result.

\subsection*{Related Works}
The hypergraph stochastic block model is first introduced in \cite{GhoshdastidarDukkipati_14} as the planted partition model in random uniform hypergraphs where each hyperedge has the same cardinality. The uniform assumption is later relaxed in a follow-up work \cite{GhoshdastidarDukkipati_17} and a more general {\hSBM} with mixing edge-orders is considered. In \cite{AngeliniCaltagirone_15}, the authors consider the sparse regime and propose a spectral method based on a generalization of non-backtracking operator. Besides, a weak consistency condition is derived in \cite{GhoshdastidarDukkipati_17} for {\hSBM} by using the hypergraph Laplacian. Departing from {\SBM}, an extension to the \emph{censored block model} to the hypergraph setting is considered in \cite{AhnLee_16}, where an information theoretic limit on the sample complexity for exact recovery is characterized. 
As for the proposed two-step algorithm, the \emph{refine-after-initialize} concept has also been used in graph clustering \cite{AbbeSandon_15,YunProutiere_16,GaoMa_15} and ranking \cite{ChenSuh_15}.

This paper generalizes our previous works in the two conference papers \cite{ISIT17_CDHG,AISTATS18_CDHG} in three ways. First, \cite{ISIT17_CDHG} only explores the extension from the graph {\SBM} to $3$-{\hSBM} case where the observed hyperedges are $3$-uniform, as compared to a more general $d$-{\hSBM} model for any order $d$ analyzed in \cite{AISTATS18_CDHG} and here. In addition, the number of communities $k$ is allowed to be scaled with the number of vertices $n$ in this work, rather than being a constant as assumed in \cite{AISTATS18_CDHG}. This slight relaxation actually leads to another regularization condition imposed on the connecting probabilities, which is an non-trivial technical extension. Finally, we also demonstrate that our proposed algorithms: the hypergraph spectral clustering algorithm and the local refinement scheme are able to achieve the partial recovery and the exact recovery criteria, respectively. 

The rest of the paper is organized as follows.
We first introduce the random hypergraph model $d$-{\hSBM} and formulate the community detection problem in \autoref{sec:formu}. Previous efforts on the minimax result in graph {\SBM} and $3$-{\hSBM} are refreshed in \autoref{sec:moti}, which motivates the key quantities that characterize the fundamental limit in $d$-{\hSBM}. The main contribution of this work, the characterization of the optimal minimax mismatch ratio under $d$-{\hSBM} for any general $d$, is presented in \autoref{sec:main}. We propose two algorithms in \autoref{sec:algo} along with an analysis on the time complexity. Theoretical guarantees for the proposed algorithms as well as the technical proofs are given in \autoref{sec:theo}, while the converse part of the main theorem is established in \autoref{sec:conv}. {In \autoref{sec:expe}, we implement the proposed algorithms on synthetic data and present some experimental results. The paper is concluded with a few discussions on the extendability of the two-step algorithm and the minimax result to a weighted $d$-{\hSBM} setting in \autoref{sec:disc}.


\subsection*{Notations}
\begin{itemize}
\item Let $\lba S \rba$ denote the cardinality of the set $S$ and $[N] \eqDef \{ 1,2,\ldots,N \}$ for $N\in\mbb{N}$. 
\item $\mcal{S}_n$ is the symmetric group of degree $n$, which contains all the permutations from $[n]$ to itself.
\item The function $\sigma(\bm{x})\eqDef(\sigma(x_1),\ldots,\sigma(x_n))$ represents the community label vector associated with a labeling function $\sigma:[n]\ra[k]$ for a node vector $\bm{x}=(x_1,\ldots,x_n)$. 
\item For any community assignment $\bm{y}=(y_1,\ldots,y_n)\in[k]^n$ and permutation $\pi\in\mcal{S}_k$, $\bm{y}_{\pi}\eqDef(y_{\pi(1)},\ldots,y_{\pi(n)})$ denotes the permuted assignment vector.
\item The asymptotic equality between two functions $f(n)$ and $g(n)$, denoted as $f\asymp g$ (as $n\ra\infty$), holds if $\lim_{n\ra\infty} f(n)/g(n) = 1$. 
\item Also, $f\approx g$ is to mean that $f$ and $g$ are in the same order if $f(n)/C \leq g(n) \leq Cf(n)$ for some constant $C \geq 1$ independent of $n$. $f \lesssim g$, defined by $\lim_{n\ra\infty} \lp f(n)-g(n) \rp \leq 0$, means that $f(n)$ is asymtotically smaller than $g(n)$. $f \gtrsim g$ is equivalent to $g \lesssim f$. 
These notations are equivalent to the standard Big O notations $\Theta(\cdot)$, $O(\cdot)$, and $\Omega(\cdot)$, which we also use in this paper interchangeably. 
\item $\lnorm\bm{x}\rnorm_2$, $\lnorm\bm{x}\rnorm_1$ is the $\ell_2$ and $\ell_1$ norm for a vector $\bm{x}$, respectively.
\item $d_\mrm{H}(\bm{x},\bm{y})$ is the Hamming distance between two vectors $\bm{x}$ and $\bm{y}$.
\item For a matrix $\mbf{M}$, we denote its operator norm by $\lnorm\mbf{M}\rnorm_{\mrm{op}}$ and its Frobenius norm by $\lnorm\mbf{M}\rnorm_{\mrm{F}}$.
\item For a $d$-dimensional tensor $\mbf{T}$, we denote its $(l_1,l_2,...,l_d)$-th element by $\mrm{T}_{\bm{l}}$, where $\bm{l} \eqDef (l_1,...,l_d)$, and we write $\mbf{T} = \lb \mrm{T}_{\bm{l}}\rb$.
\item Finally, let $\mrm{O}(k_1,k_2) = \lbp \mbf{V}\in \mbb{R}^{k_1 \by k_2} \mid \mbf{V}^{\mrm{T}}\mbf{V} = \mbf{I}_{k_2}\rbp$ for $k_1 \geq k_2$ be the set of all orthogonal $k_1\by k_2$ matrices.
\end{itemize}

\section{Problem Formulation}
\label{sec:formu}
\subsection{Community Relations}
\label{subsec:Kappa_d}

Before introducing the random hypergaph model $d$-{\hSBM}, we first describe the community relations among $d$ nodes, which serves as the basic building block of our model. Let $\mcal{K}_d \eqDef \lbp r_i\rbp$ be the set of all possible community relations under $d$-{\hSBM} and $\kappa_d \eqDef \lba\mcal{K}_d\rba$ denotes the total number of them. In contrast to the dichotomy situation (same community or not) concerning the appearance of an edge between two nodes in the usual symmetric {\SBM}, there is a growing number of community relations in $d$-{\hSBM} as the order $d$ increases. In order not to mess up with them, we use the idea of \emph{majorization} \cite{MarshallOlkin_11} to organize $\mcal{K}_d$ with each $r_i$ in the form of a \emph{histogram}. Specifically, the histogram operator $\msf{hist}(\cdot)$ is used to transform a vector $\bm{r}\in[k]^d$ into its histogram vector $\msf{hist}(\bm{r})$. For convinience, we sort the histogram vector in descending order and append zero's if necessary to make $\msf{hist}(\bm{r})$ a length-$d$ vector. The notion of majorization is introduced as follows. For any $\bm{a},\bm{b}\in\mbb{R}^d$, we say that $\bm{a}$ majorizes $\bm{b}$, written as $\bm{a}\succ\bm{b}$, if $\sum_{i=1}^k a_i^{\da} \geq \sum_{i=1}^k b_i^{\da}$ for $k=1,\ldots,d$ and $\sum_{i=1}^d a_i = \sum_{i=1}^d b_i$, where $x_i^{\da}$'s are elements of $\bm{x}$ sorted in descending order. Observe that each community relation $r_i$ in $\mcal{K}_d$ can be uniquely represented, when sorted in descending order, by a $d$-dimensional histogram vector $h_i$. We arrange the elements in $\mcal{K}_d$ in majorization (pre)order such that $\bm{h}_i\succ\bm{h}_j$ if and only if $i<j$. For example, $r_1$ is relation \emph{all-same} with the most concentrated histogram $\bm{h}_1 = (d,0,\ldots,0)$ and $r_2$ is the \emph{only-1-different} relation with $\bm{h}_2 = (d-1,1,0,\ldots,0)$. Likewise, $r_{\kappa_d-1}$ is the \emph{only-2-same} relation with $\bm{h}_{\kappa_d-1} = (2,1,\ldots,1,0)$ and the last one in $\mcal{K}_d$, relation \emph{all-different} $r_{\kappa_d}$, has a histogram vector $\bm{h}_{\kappa_d} = (1,\ldots,1)$ being the all-one vector.

\begin{example}[$\mcal{K}_4$ in $4$-{\hSBM}]
$\lba\mcal{K}_4\rba=\kappa_4=5$ with histogram vectors being
\begin{center}
\begin{tabular}{clcc}
\multicolumn{2}{c}{\bf \textrm{Relation}} & {\bf \textrm{Histogram}}     & {\bf \textrm{Connecting Probability}} \\
\hline
$r_1$ & $($\emph{all-same}$)$         & $\bm{h}_1=(4,0,0,0)$ & $p_1$ \\
$r_2$ & $($\emph{only-1-different}$)$ & $\bm{h}_2=(3,1,0,0)$ & $p_2$ \\
$r_3$ &                               & $\bm{h}_3=(2,2,0,0)$ & $p_3$ \\
$r_4$ & $($\emph{only-2-same}$)$      & $\bm{h}_4=(2,1,1,0)$ & $p_4$ \\
$r_5$ & $($\emph{all-different}$)$    & $\bm{h}_5=(1,1,1,1)$ & $p_5$ \\
\end{tabular}
\end{center}
\end{example}


\subsection{Random Hypergraph Model: $d$-{\hSBM}}
\label{subsec:d-hSBM}

In a $d$-uniform hypergraph, the adjacency relation among the $n$ nodes in $\mcal{G}=(\mcal{V},\mcal{E})$ can be equivalently represented by a $d$-dimensional $n\by\cdots\by n$ random tensor $\mbf{A} \eqDef \lb\mrm{A}_{\bm{l}}\rb$ (the size of each dimension being $n$), where $\bm{l} = (l_1,\ldots,l_d) \in [n]^d$ is the access index of an element in the tensor. The following two natural conditions on this adjacency tensor come from the basic properties of an undirected hypergraph:
\begin{equation*}
\begin{array}{ll}
\text{No self-loop:} &
\mrm{A}_{\bm{l}} \neq 0 \iff \lba \{l_1,\ldots,l_d\} \rba = d\quad \text{(all $d$ elements in $\bm{l}$ are distinct)}\\
\text{Symmetry:} & 
\mrm{A}_{\bm{l}} = \mrm{A}_{\bm{l}_{\pi}} \;\forall \pi\in\mcal{S}_d.
\end{array}
\end{equation*}
For each $\bm{l}$, $\mrm{A}_{\bm{l}}$ is a Bernoulli random variable with success probability $\mrm{Q}_{\bm{l}}$. The parameter tensor $\mbf{Q} \eqDef \lb\mrm{Q}_{\bm{l}}\rb$ depends only on the community assignments of the associated nodes in the hyperedge and forms a block structure.
The block structure is characterized by a symmetric community connection $d$-dimensional tensor $\mbf{B}\in[0,1]^{k\by\cdots\by k}$ where $\mrm{Q}_{\bm{l}} = \mrm{B}_{\sigma(\bm{l})}$.

To setup the parameter space considered in our statistical study, below we first introduce some further notations.
Let $n_t = \lba\{ i\mid\sigma(i)=t \}\rba$ be the size of the $t$-th community for $t\in[k]$. Besides, let $\bm{p} = (p_1,\ldots,p_{\kappa_d}) \in(0,1)^{\kappa_d}$ where $p_i$ denotes the success probability of the Bernoulli random variable that corresponds to the appearance of a hyperedge with relation $r_i\in\mcal{K}_d$.
We make a natural assumption
that $p_i\geq p_j \;\forall i<j$. The more concentrated a group is, the higher the chances that the members will be connected by an hyperedge.

\begin{remark}
We would like to note that there is nothing peculiar about the assumption that $p_i$'s are in decreasing order and the condition can be relaxed. All that is required are that the connecting probabilities $p_i$'s are well separated and the difference between each $p_i-p_j$ are within the same order. See \autoref{sec:theo} for a more formal statement of our main result.
\end{remark}

The parameter space considered here is a \emph{homogeneous} and \emph{approximately equal-sized} case where each $n_t \approx \lfl\frac{n}{k}\rfl$. Formally speaking (let $n^{\prime} \eqDef \lfl\frac{n}{k}\rfl$),
\begin{equation}
\label{eq:theta_d_0}
\Theta_d^0(n,k,\bm{p},\eta) \eqDef \Big\{ (\mbf{B},\sigma) \mid \sigma: [n]\ra[k], \; n_t \in \lb(1-\eta)n^{\prime},(1+\eta)n^{\prime}\rb \;\forall t \in[k] \Big\}
\end{equation}
where $\mbf{B}$ has the property that $\mrm{B}_{\sigma(\bm{l})} = p_i$ if and only if $\msf{hist}(\sigma(\bm{l})) = \bm{h}_i$. In other words, only the histogram of the community labels within a group matters when it comes to connectivity. $\eta$ is a parameter that controls how much $n_t$ could vary. We assume the more interesting case that $\eta\geq\frac{1}{n^{\prime}}$ where the community sizes are not restricted to be exactly equal. Interchangeably, we would write $\bm{l}\overset{\sigma}{\sim}r_i$ to indicate the community relation within nodes $l_1,\ldots,l_d$ under the assignment $\sigma$. Throughout the paper, we will assume that the order $d$ of the observed hypergraph is a constant, while the other parameters, including the total number of communities $k$ and the hyperedge connection probability $\bm{p}$, can be coupled with $n$. Specifically, $k$ can either be a constant or it can also scale with $n$. Moreover, as pointed out in \cite{ISIT17_CDHG}, the regime where the hypergraph {\SBM} is weakly recoverable could be orderly lower than the one considered in {\SBM} of graphs \cite{AbbeSandon_15}. To guarantee the solvability of weak recovery in $d$-{\hSBM}, we set the probability parameter $\bm{p}$ should at least be in the order of $\Omega\lp\frac{1}{n^{d-1}}\rp$. Therefore, we would write $\bm{p} = \frac{1}{n^{d-1}} (a_1,\ldots,a_{\kappa_d})$ where $a_i=\Omega(1)$ for all $i=1,\ldots,\kappa_d$. We would like to note that the probability regime $\Omega\lp\frac{1}{n^{d-1}}\rp$ considered here is first motivated in \cite{ISIT17_CDHG}. Under $3$-{\hSBM}, the authors in \cite{ISIT17_CDHG} consider $\bm{p} = \Omega\lp\frac{1}{n^2}\rp$ for the probability parameter, which is orderly lower than the one ($\Theta\lp\frac{1}{n}\rp$) required for partial recovery in \cite{AbbeSandon_15} and the minimax risk in \cite{ZhangZhou_16} for graph {\SBM}. The motivation is that, since the total number of random variables in a random $3$-uniform hypergraph is roughly $n$-times larger than those in a traditional random graph, the underlying hypergraph is allowed to be $n$-times sparser and still retain a risk of the same order. In light of this, we relax the probability parameter $\bm{p}$ from $\Omega\lp\frac{1}{n}\rp$ to $\Omega\lp\frac{1}{n^{d-1}}\rp$ in $d$-{\hSBM}.

\subsection{Performance Measure}
\label{subsec:def_Rstar}

To evaluate how good an estimator $\what{\sigma}:\mcal{G}\ra[k]^n$ is, we use the \emph{mismatch ratio} as the performance measure to the community detection problem.
The un-permuted loss function is defined as
\begin{equation*}
\loss_0(\sigma_1,\sigma_2) \eqDef \frac{1}{n}\> d_{\mrm{H}}(\sigma_1,\sigma_2)
\end{equation*}
where $d_{\mrm{H}}$ is the Hamming distance. It directly counts the proportion of misclassified nodes between an estimator and the ground truth assignment. Concerning the issue of possible re-labeling, the mismatch ratio is defined as the loss function which maximizes the agreements between an estimator and the ground truth after an alignment by label permutation.
\begin{equation}
\label{eq:mm_ratio}
\loss(\what{\sigma},\sigma) \eqDef \min_{\pi\in\mcal{S}_k} \loss_0(\what{\sigma}_{\pi},\sigma)
\end{equation}
As convention, we use $\Risk_{\sigma}(\what{\sigma}) \eqDef \E_{\sigma}\loss(\what{\sigma},\sigma)$ to denote the corresponding risk function.
Finally, the minimax risk for the parameter space $\Theta_d^0(n,k,\bm{p},\eta)$ under $d$-{\hSBM} is denoted as
\begin{equation*}
\Risk_d^{\ast} \eqDef \adjustlimits\inf_{\what{\sigma}}\sup_{(\mbf{B},\sigma)\in\Theta_d^0} \Risk_{\sigma}(\what{\sigma}).
\end{equation*}

\begin{remark}
Notice that in a symmetric (homogeneous) {\SBM} \cite{Abbe_17}, the connectivity tensor $\mbf{B}$ is uniquely determined by the labeling function $\sigma$. Therefore, 
we would drop the subscript $\mbf{B}$ in $\Pr_{\mbf{B},\sigma}\lbp\cdot\rbp$ and write $\Pr_{\sigma}\lbp\cdot\rbp$ when it comes to the uncertainty arising from the random hypergraph model with the underlying assignment being $\sigma$. Similarly, we would write $\E_{\sigma}\lb\cdot\rb$ instead of $\E_{\mbf{B},\sigma}\lb\cdot\rb$ for ease of notation.
\end{remark}

\section{Prior Works}
\label{sec:moti}
For the case $d=2$, the asymptotic minimax risk $\Risk_2^{\ast}$ is characterized in \cite{ZhangZhou_16}, which decays to zero  exponentially fast as $n\ra\infty$. In addition, the (negative) exponent of $\Risk_2^{\ast}$ is determined by the R\'{e}nyi divergence of order $1/2$ between two Bernoulli distributions $\Ber(p)$ and $\Ber(q)$
\begin{equation}
\label{eq:Renyi_div_Ber}
I_{pq} \eqDef -2\log \lp \sqrt{pq}+\sqrt{1-p}\sqrt{1-q} \rp
\end{equation}
where $p$ is the success probability of a same-community edge while $q$ stands for a different-community one. Extending from traditional graph {\SBM} to a hypergraph setting, the authors in \cite{ISIT17_CDHG} generalize the minimax result obtained in \cite{ZhangZhou_16} to the $3$-{\hSBM} model as follows
\begin{equation*}
-\log \Risk_3^{\ast} \asymp \binom{n^{\prime}}{2} I_{pq} + (k-2)(n^{\prime})^2 I_{qr}
\end{equation*}
where the probability parameter $\bm{p}=(p,q,r)$ corresponds to the community relations with histograms $(3,0,0)$, $(2,1,0)$ and $(1,1,1)$, respectively. Observe that the exponent of the minimax risk in $3$-{\hSBM} does not depend on the divergence term $I_{pr}$ explicitly. That is, $\Risk_3^{\ast}$ consists of only those neighboring divergence terms whose histogram vectors have a $\ell_1$ distance of $2$. Besides, associated with each divergence term $I_{p_i p_j}$ is a weighted coefficient, i.e. $\binom{n^{\prime}}{2}$ for $I_{pq}$ and $(k-2)(n^{\prime})^2$ for $I_{qr}$. These coefficients appears in the hypothesis testing problem when deriving the lower bound of the minimax result. Essentially, they represent the total number of random variables that appear either as a relation-$r_i$ hyperedge or as a relation-$r_j$ hyperedge when the community label of this targeted node is being tested.

It turns out that the optimal minimax risk in $d$-{\hSBM} also decays to zero exponentially fast, given that the outcome of the initialization algorithm satisfies a certain condition. The exponent, as stated formally later, is a weighted combination of divergence terms. To specify the weight in this weighted average, we introduce further notations below. We use $\mcal{N}_d \eqDef \lbp(r_i,r_j) \mid i<j, \; \lnorm \bm{h}_i-\bm{h}_j\rnorm_1 = 2\rbp$ to denote the collection of ordered pairs of relations in $\mcal{K}_d$ that are \emph{neighbors} to each other. Second, there is a combinatorial number associated with every pairwise divergence term.
Precisely, let us consider a least favorable sub-parameter space of $\Theta_d^0$.
\begin{equation}
\label{eq:ThetaL}
\Theta_d^L(n,k,\bm{p},\eta) \eqDef \Big\{ (\mbf{B},\sigma)\in\Theta_d^0 \mid n_t \in \lbp n^{\prime}-1,n^{\prime},n^{\prime}+1 \rbp \;\forall t\in[k], \ n_{\sigma(1)} = n^{\prime}+1 \Big\}
\end{equation}
In $\Theta_d^L$, each community takes on only three possible sizes. In addition, there are exactly $n^{\prime}+1$ members in the community where the first node belongs. We pick a $\sigma_0$ in $\Theta_d^L$ and construct a new assignment $\sigma[\sigma_0]$ based on $\sigma_0$:
\begin{equation*}
\sigma[\sigma_0](i) =
\begin{cases}
\argmin_{2\leq t\leq k} \lbp n_t = n^{\prime} \rbp & \text{, for } i=1 \\
\sigma_0(i)                                        & \text{, for } 2\leq i\leq n.
\end{cases}
\end{equation*}
In other words, assignments $\sigma[\sigma_0]$ and $\sigma_0$ only disagree on the label of the first node. For each pair $(r_i,r_j)$ in $\mcal{N}_d$, we define the weighted coefficient
\begin{equation*}
m_{r_i r_j} \eqDef \Big|\big\{ \bm{l}=(1,l_2,\ldots,l_d) \mid \bm{l}\overset{\sigma_0}{\sim}r_i, \bm{l}\overset{\sigma[\sigma_0]}{\sim}r_j \big\}\Big|
\end{equation*}
as the number of relation-$r_i$ hyperedges that we mistake as relation-$r_j$ hyperedges. Note that the above definition is independent to the choice of $\sigma_0\in\Theta_d^L$ due to the community size constraints.

\begin{example}[$\mcal{N}_4$ in $4$-{\hSBM}]
\label{ex:mrirj}
$\lba\mcal{N}_4\rba=5$ with elements
\begin{center}
\begin{tabular}{cc}
{\bf \textrm{Relation Pair}} & {\bf \textrm{Weighted Coefficient}} \\
\hline
$(r_1,r_2)$ & $m_{r_1 r_2} \asymp \binom{n^{\prime}}{3}$ \\
$(r_2,r_3)$ & $m_{r_2 r_3} \asymp \binom{n^{\prime}}{2}n^{\prime}$ \\
$(r_3,r_4)$ & $m_{r_3 r_4} \asymp n^{\prime}(k-2)\binom{n^{\prime}}{2}$ \\
$(r_2,r_4)$ & $m_{r_2 r_4} \asymp \binom{n^{\prime}}{2}(k-2)n^{\prime}$ \\
$(r_4,r_5)$ & $m_{r_4 r_5} \asymp n^{\prime}\binom{k-2}{2}(n^{\prime})^2$ \\
\end{tabular}
\end{center}
Note that $m_{r_1 r_2}$ is the smallest while $m_{r_4 r_5}$ is the largest.
\end{example}


\section{Main Contribution}
\label{sec:main}
The optimal minimax risk for the homogeneous and approximately equal-sized parameter space $\Theta_d^0(n,k,\bm{p},\eta)$ under the probabilistic model $d$-{\hSBM} is characterized as follows.

\begin{mythm}{Main Theorem}
\label{thm:minimax_risk}
Suppose $\bm{p}=\frac{1}{n^{d-1}}(a_1,\cdots,a_{\kappa_d}) = \Omega\lp\frac{1}{n^{d-1}}\rp$ and $a_i\approx a_j \;\forall i,j=1,\ldots,\kappa_d $. If
\begin{equation}
\label{eq:main_condi_main}
\frac{\sum_{i<j:(r_i,r_j)\in\mcal{N}_d} m_{r_i r_j}I_{p_i p_j}}{k^d \log k} \ra \infty
\end{equation}
and
\begin{equation}
\label{eq:main_condi_order}
\frac{\sum_{i<j:(r_i,r_j)\in\mcal{N}_d} m_{r_i r_j}I_{p_i p_j}}{k^d \max_{(i,j):(r_i,r_j)\in\mcal{N}_d} \frac{p_1-p_{\kappa_d}}{p_i-p_j}} \ra \infty
\end{equation}
as $n\ra\infty$. Then
\begin{equation}
\label{eq:minimax_result}
\log \Risk_d^{\ast} \asymp - \sum_{i<j:(r_i,r_j)\in\mcal{N}_d} m_{r_i r_j}I_{p_i p_j}.
\end{equation}
If $k$ is a constant, then \eqref{eq:minimax_result} holds without the further assumption \eqref{eq:main_condi_order}.
\end{mythm}

\begin{remark}
In this work, we assume that the order of the hypergraph $d$ is a constant. More generally, one may also wonder how the characterized minimax risk changes when this order $d$ is also allowed to scale with $n$. Certainly, the expression above for the optimal minimax risk depends on the hypergraph order $d$, yet only implicitly. To obtain an explicit form of $\Risk_d^{\ast}$ in terms of $d$, we have to get further estimates of those weighted coefficient $m_{r_i r_j}$'s as well as the corresponding R\'{e}nyi divergence term $I_{p_i p_j}$'s. The latter can be estimated by $I_{p_i p_j}\approx\frac{(a_i-a_j)^2}{n^{d-1}a_i}$ when $a_i\approx a_j=\omega\lp\frac{1}{n^{d-1}}\rp$ as assumed in the main theorem. On the other hand, as commented in Example~\ref{ex:mrirj}, it is not hard to see that $m_{r_i r_j}$ achieves its minimum at $\binom{n^{\prime}}{d-1}$ between $r_1$ with $\bm{h}_1=(d,0,\ldots,0)$ and $r_2$ with $\bm{h}_2=(d-1,1,0,\ldots,0)$ while it attains its maximum at $n^{\prime}\binom{k-2}{d-2}\lp n^{\prime}\rp^{d-2}$ between $r_{\kappa_d-1}$ with $\bm{h}_{\kappa_d-1}=(2,1,\ldots,1,0)$ and $r_{\kappa_d}$ with $\bm{h}_{\kappa_d}=(1,\ldots,1)$. Therefore, when the differences $a_i-a_j$ are constant $\forall i<j:(r_i,r_j)\in\mcal{N}_d$, the last term with $m_{r_{\kappa_d-1} r_{\kappa_d}}$ dominates other terms in the summation $\sum_{i<j:(r_i,r_j)\in\mcal{N}_d}m_{r_i r_j}I_{p_i p_j}$ seeing that the parameter $k$ is coupled with $n$. In particular, the error exponent for the optimal minimax risk in equation \eqref{eq:minimax_result} is in the order of $\frac{1}{(d-2)!}$. 
Surprisingly, the minimax risk would decay more slowly due to the factorial term in the denominator as the order $d$ increases. However, we would like to note that this observation is valid only under the assumption that the considered hypergraphs are in the \emph{sparse} regime $\Theta\lp\frac{1}{n^{d-1}}\rp$ where there are roughly $n\log n$ hyperedges generated no matter how large the order $d$ is. 
\end{remark}

The minimax risk is provably achieved, through Theorem~\ref{thm:theo_upper} in Section~\ref{sec:theo}, by the proposed two-step algorithm. Roughly speaking, we first demonstrate that the second-step refinement is capable of obataining an acurate parameter estimation as long as the first-step initialization satisfies a weak consistency condition. Then, the local MLE step is proved to achieve a mismatch ratio as the desired minimax risk, with which the local majority voting could recover the true community label for each node with the guaranteed risk. Finally, we show that our proposed spectral clustering algorithm with the hypergraph Laplacian matrix are qualified as a first-step initialization algorithm. We will compare our theoretical findings to those for the graph case \cite{GaoMa_15} below as well as for the hypergraph setting \cite{GhoshdastidarDukkipati_17} later in Section~\ref{sec:theo}. On the other hand, the converse part is established through Theorem~\ref{thm:lower} in Section~\ref{sec:conv}.

\subsection{Implications to Exact Recovery}
\label{subsec:main_imp}

Since we consider a minimax framework, the theoretical guarantees of our two-step algorithm are also sufficient to ensure the partial recovery and the exact recovery as considered under the Bayesian perspective \cite{AbbeSandon_15}. Before presenting the theorems in regard to the community recovery in the Bayesian case, let's first refresh on these two recovery notions. The definitions of different recovery criterions discussed here can be found in the comprehensive survey \cite{Abbe_17}. We paraphrase them below for completeness. Please refer to the survey for more details and the references therein. In terms of the mismatch ratio \eqref{eq:mm_ratio},

\begin{definition}[Revised Definition~4 in \cite{Abbe_17}]
Consider a $\sigma\in\Theta_d(n,k,\bm{p},\eta)$ and a corresponding random hypergraph $G$. The following recovery requirements are solved if there exists an algorithm $A$ which takes $G$ as input and estimates $\what{\sigma}=A(G)$ such taht
\begin{itemize}
\item \textbf{Partial Recovery:} $\Pr\lbp \ell(\what{\sigma},\sigma)\leq\alpha\rbp=o(1),\,\alpha\in(0,1)$
\item \textbf{Exact Recovery:} $\Pr\lbp \ell(\what{\sigma},\sigma)=0\rbp=o(1)$
\end{itemize}
where the probability is taken over the random realizations of $G$ and the asymptotic notation is with respect to the growth of $n$.
\end{definition}

Our proposed two-step algorithm in Section~\ref{sec:algo} can provably satisfy the exact recovery criterion.
\begin{theorem}
\label{thm:exact_reco}
If
\begin{equation}
\label{eq:exact_reco_condi}
\liminf_{n\ra\infty}\frac{\sum_{i<j:(r_i,r_j)\in\mcal{N}_d} m_{r_i r_j}I_{p_i p_j}}{\log n} > 1,
\end{equation}
then Algorithm~\ref{alg:refine} combined with Algorithm~\ref{alg:spec_init} is able to solve the exact recovery problem.
\end{theorem}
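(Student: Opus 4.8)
The plan is to obtain exact recovery as a direct corollary of the achievability (upper-bound) half of the Main Theorem, realized by the two-step algorithm, through a first-moment argument. Observe first that the permuted loss is quantized in units of $1/n$: if $\what\sigma$ does not match $\sigma$ exactly after the best relabeling, then $\min_{\pi\in\mcal{S}_k} d_{\mrm{H}}(\what\sigma_\pi,\sigma)\geq 1$, so $\loss(\what\sigma,\sigma)\geq 1/n$. Consequently $\lbp \loss(\what\sigma,\sigma)\neq 0\rbp=\lbp \loss(\what\sigma,\sigma)\geq 1/n\rbp$, and Markov's inequality yields
\begin{equation*}
\Pr_\sigma\lbp \loss(\what\sigma,\sigma)\neq 0\rbp \leq n\,\E_\sigma\loss(\what\sigma,\sigma) = n\,\Risk_\sigma(\what\sigma).
\end{equation*}
Hence it suffices to prove that the two-step estimator has worst-case risk decaying strictly faster than $1/n$, i.e. $n\sup_{(\mbf{B},\sigma)\in\Theta_d^0}\Risk_\sigma(\what\sigma)\ra 0$.

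Second, I would feed in the achievability guarantee of Theorem~\ref{thm:theo_upper}, which certifies that Algorithm~\ref{alg:refine} seeded by Algorithm~\ref{alg:spec_init} attains
\begin{equation*}
\sup_{(\mbf{B},\sigma)\in\Theta_d^0}\Risk_\sigma(\what\sigma) \leq \exp\lp -(1-o(1))\,S\rp, \qquad S\eqDef\sum_{i<j:(r_i,r_j)\in\mcal{N}_d} m_{r_i r_j}I_{p_i p_j}.
\end{equation*}
Combining the two displays gives $n\sup_\sigma\Risk_\sigma(\what\sigma)\leq \exp\lp\log n-(1-o(1))S\rp$. By hypothesis \eqref{eq:exact_reco_condi} there is $\delta>0$ with $S\geq(1+\delta)\log n$ for all large $n$, so the exponent is at most $\log n-(1-o(1))(1+\delta)\log n=-(\delta-o(1))\log n\ra-\infty$. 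Therefore $n\sup_\sigma\Risk_\sigma(\what\sigma)\ra 0$, which by the reduction above establishes exact recovery uniformly over $\Theta_d^0$.

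The substantive work --- and the step I expect to be the main obstacle --- is to certify that the hypotheses of Theorem~\ref{thm:theo_upper} are actually in force under \eqref{eq:exact_reco_condi}. Two points require care. First, the spectral initialization (Algorithm~\ref{alg:spec_init}) must reach the weak-consistency precision needed to seed the refinement; I would argue that \eqref{eq:exact_reco_condi} forces $S\ra\infty$, which together with the standing assumptions $\bm{p}=\Omega(n^{-(d-1)})$ and $a_i\approx a_j$ places the model in the regime where the spectral-gap (Davis--Kahan) bound delivers weak consistency uniformly over $\Theta_d^0$. Second, the regularity conditions \eqref{eq:main_condi_main}--\eqref{eq:main_condi_order} that drive the refinement's concentration estimates must hold; since $S\gtrsim\log n$, condition \eqref{eq:main_condi_main} reduces to $k^d\log k=o(\log n)$, while under $a_i\approx a_j$ with comparable gaps the factor $\max_{(r_i,r_j)\in\mcal{N}_d}\frac{p_1-p_{\kappa_d}}{p_i-p_j}$ is $O(1)$, so \eqref{eq:main_condi_order} likewise reduces to a mild growth restriction on $k$ --- both automatic when $k$ is a constant, as in the corresponding clause of the Main Theorem. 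The genuinely delicate point is to propagate the uniformity over the full parameter space $\Theta_d^0$ (not merely the least-favorable family $\Theta_d^L$ used in the converse) through both algorithmic stages, so that the risk bound, and hence the first-moment estimate, holds simultaneously for every $(\mbf{B},\sigma)$.
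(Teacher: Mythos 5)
Your proposal is correct and follows essentially the same route as the paper's own proof: Markov's inequality combined with the observation that the mismatch ratio is quantized in units of $1/n$, so a worst-case risk decaying faster than $1/n$ (supplied by the exponential bound of Theorem~\ref{thm:theo_upper} under the hypothesis \eqref{eq:exact_reco_condi}) forces $\Pr\lbp \loss(\what{\sigma},\sigma)\neq 0\rbp\ra 0$. Your closing discussion of verifying the standing hypotheses \eqref{eq:main_condi_main}--\eqref{eq:main_condi_order} of the achievability theorem is in fact more careful than the paper, which invokes $\Risk_d^{\ast}\leq\exp(-(1-o(1))\sum m_{r_i r_j}I_{p_i p_j})$ while leaving those regularity conditions implicit.
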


\begin{proof}
With \eqref{eq:exact_reco_condi}, for any $n\in\mbb{N}$ there exists a small constant $c>0$ such that
\begin{equation*}
\frac{\sum_{i<j:(r_i,r_j)\in\mcal{N}_d} m_{r_i r_j}I_{p_i p_j}}{\log n} > 1 + c.
\end{equation*}
By the Markov inequality, we have
\begin{align*}
\Pr\lbp \ell(\what{\sigma},\sigma)<\frac{1}{n}\rbp &\leq n\cdot\E\lb \ell(\what{\sigma},\sigma)\rb \\
 &\leq n\Risk_d^{\ast} < n^{-c}.
\end{align*}
Note that the event that mismatch ratio is smaller than $\frac{1}{n}$ is equivalent to the event that it is identical to $0$.
\end{proof}

We would like to note that partial recovery is immediate from the exact recovery. Indeed, the required condition can be relaxed from \eqref{eq:exact_reco_condi} and depends on the extent of distortion $\alpha$.

\subsection{Comparison with \cite{GaoMa_15}}
\label{subsec:main_comp}

We can recover the minimax result obtained in \cite{GaoMa_15} by specializing $d=2$ in the main theorem above. In \cite{GaoMa_15}, the authors consider the traditional {\SBM} model under a homogeneous parameter space with connecting probability being $\bm{p}=(p,q)=(\frac{a}{n},\frac{b}{n})$. We would like to note that the parameter space considered in \cite{GaoMa_15} is more general in the sense that it needs not be nearly equal-sized case. To be more specific, the size of each community $n_t$ is allowed to vary within $\lb \frac{1}{\beta}n^{\prime}, \beta n^{\prime} \rb$ (where $n^{\prime}=\lfloor\frac{n}{k}\rfloor$). However, the parameter $\beta$ controlling this variation is itself only restricted in the range $[1,\sqrt{5/3}]$ for some technical issue, which makes the attempted relaxation on the community size less interesting. In light of this, we compare only the minimax result in \cite{GaoMa_15} with $\beta=1$, which is $\Theta_2^0$ in our notation. The overall result for $\Theta_2^0(n,k,\bm{p},\eta)$, when combining the spectral initialization step with the local refinement step proposed therein (denoted as $\what{\sigma}_{\mrm{GM}}$), can be summarized as follows: Suppose $\bm{p}=(\frac{a}{n},\frac{b}{n})$ and $a\approx b$. If
\begin{equation}
\label{eq:Gao_condi}
\frac{(a-b)^2}{ak^3\log k} \ra \infty
\end{equation}
as $n\ra\infty$. Then, there exists a sequence $\zeta_n \ra0$ such that
\begin{equation}
\label{eq:Gao_result}
\sup_{(\mbf{B},\sigma)\in\Theta_2^0} \Pr_{\sigma}\Big\{ \loss(\what{\sigma}_{\mrm{GM}},\sigma) \geq \exp\big( -(1-\zeta_n)n^{\prime}I_{pq} \big)\Big\} \ra 0.
\end{equation}

Indeed, conditiion \eqref{eq:Gao_condi} required is exactly the same as \eqref{eq:main_condi_main} by using the approximation $I_{pq}\approx\frac{(a-b)^2}{na}$. Note that there is only one community relation pair $(r_1,r_2)$ in $\mcal{N}_2$ and the weighted coefficient is $m_{r_1 r_2} = n^{\prime}\asymp\frac{n}{k}$. In fact, the situation is very simple in {\SBM} since there are only two possible community relations, i.e. intra-community (relation \emph{all-same}) and inter-community (relation \emph{all-diff}). On the contrary, the relational information gets more and more complicated as $d$ increases. This inevitable ``curse of dimension" is reflected in the second assumption \eqref{eq:main_condi_order} we made in the main theorem.
First, recall that we set all of the probability parameters $\bm{p}=(p_1,\ldots,p_{\kappa_d})$ in the same order as the condition $a\approx b$ required in \cite{GaoMa_15}. Apart from that, we also need to make sure that the differences $p_i-p_j$ associated with the pairs $(r_i,r_j)\in\mcal{N}_d$ remain in the same order to successfully upper-bound the error probability in the proof of achievablity. Under the traditional {\SBM}, it is not hard to see that the assumption \eqref{eq:main_condi_order} is weaker than the assumption \eqref{eq:main_condi_main}. Therefore, the overall requirement is equivalent to \eqref{eq:Gao_condi} made in \cite{GaoMa_15} without any further assumption.

\section{Proposed Algorithms}
\label{sec:algo}

In this section, we propose our main algorithm for community detection in random hypergraphs, which is later in \autoref{sec:theo}\
proved to achieve the minimax risk $\Risk_d^{\ast}$ in the $d$-{\hSBM} asymptotically.
The algorithm (Algorithm~\ref{alg:refine}) comprises two major steps. In the first step, for each $u\in[n]$, it generate an estimated assignment $\what{\sigma}_u$ of all nodes except $u$ by applying an initialization algorithm $\msf{Alg}_{init}$ on the sub-hypergraph without the vertex $u$. For example, we can apply the hypergraph clustering method described in \autoref{subsec:algo_step1} on $\mbf{A}_{-u}$, the $(n-1)\by\cdots\by(n-1)$ sub-tensor of $\mbf{A}$ when the $u$-th coordinate is removed in each dimension. Then, in the second step, the label of $u$ under $\what{\sigma}_u$ is determined by maximizing a local likelihood function described in \autoref{subsec:algo_step2}. Note that the parameters of the underlying $d$-{\hSBM} need not be known in advance, as it could conduct a parameter estimation before computing the local likelihood function if necessary. Finally, with $n$ estimated assignments $\{\what{\sigma}_u : u\in[n]\}$, the algorithm combines all of them together and forms a consensus via majority neighbor voting.

\subsection{Refinement Scheme}
\label{subsec:algo_step2}

Let us begin with the \emph{global} likelihood function defined as follows. Let
\begin{equation}
\label{eq:loglike_global}
L(\sigma;\mbf{A}) \eqDef \adjustlimits\sum_{\{i\mid r_i\in\mcal{K}_d\}}\sum_{\{\bm{l}\mid\bm{l}\overset{\sigma}{\sim}r_i\}} \Big( \log p_i\mrm{A}_{\bm{l}} + \log(1-p_i)(1-\mrm{A}_{\bm{l}}) \Big)
\end{equation}
denote the log-likelihood of an adjacency tensor $\mbf{A}$ when the hidden community structure is determined by $\sigma$. For each $u\in[n]$, we use
\begin{equation}
\label{eq:loglike_local}
L_u(\sigma,t;\mbf{A}) \eqDef \adjustlimits\sum_{\{i\mid r_i\in\mcal{K}_d\}}\sum_{\{\bm{l}\mid l_1=u, \bm{l}\overset{\sigma}{\sim}r_i\}} \Big( \log p_i\mrm{A}_{\bm{l}} + \log(1-p_i)(1-\mrm{A}_{\bm{l}}) \Big)
\end{equation}
to denote those likelihood terms in \eqref{eq:loglike_global} pertaining to the $u$-th node when its label is $t$.
It is not hard to see that $L_u(\sigma,t;\mbf{A})$ is a sum of independent Bernoulli random variables. However, $L_u(\sigma,t;\mbf{A})$ is not independent of $L_v(\sigma,s;\mbf{A})$ for any $u\neq v$ since those random hyperedges that might enclose vertex $u$ and vertex $v$ simultaneously appear in both of the summands of the likelihood terms. The global likelihood function and the local likelihood function is related by
\begin{equation*}
L(\sigma;\mbf{A}) = \frac{1}{d}\sum_{u\in[n]}L_u(\sigma,t;\mbf{A}).
\end{equation*}
This is because each likelihood term in \eqref{eq:loglike_global} is counted exactly $d$ times when summing over all possible equation \eqref{eq:loglike_local}'s.
For each node $u\in[n]$, based on the estimated assignment of the other $n-1$ nodes, we use the following \emph{local MLE method} to predict the label of $u$.
\begin{equation*}
\what{\sigma}_u(u) \eqDef \argmax_{t\in[k]} L_u(\sigma,t;\mbf{A})
\end{equation*}
When the connectivity tensor $\mbf{B}$ that governs the underlying random hypergraph model $d$-{\hSBM}
is unknown when evaluating the likelihood, we will use $\what{L}(\sigma;\mbf{A})$ and $\what{L}_u(\sigma,t;\mbf{A})$ to denote the global and local likelihood function with the true $\mbf{B}$ replaced by its estimated counterpart $\what{\mbf{B}}$. Since the presence of each edge is independent based on our probabilistic model, we use the sample mean $\what{\mbf{B}}^u$ to estimate the real parameters. Note that the superscript $u$ is to indicate the fact that the estimation $\what{\mbf{B}}^u$ is calculated with node $u$ taken out.
Finally, consensus is drawn by using the majority neighbor voting.
In fact, the consensus step looks for a consensus assignment for the $n$ possible different community assignments obtained in the local MLE method in Algorithm~\ref{alg:refine}. Since all these $n$ assignments will be close to the ground truth up to some permutation, this step combines all of them to conclude a single community assignment as the final output.


\begin{algorithm2e}
\label{alg:refine}
\caption{Main Algorithm}
\SetAlgoLined
\DontPrintSemicolon
\SetKwInOut{KwInAligned}{Input}
\SetKwInput{KwPNV}{Local MLE}
\SetKwInput{KwCSS}{Consensus}
    \KwInAligned{Observation tensor $\mbf{A}\in\{0,1\}^{n\by\cdots\by n}$,\\
        number of communities $k$,\\
        initialization algorithm $\msf{Alg}_{init}$.
    }
    \KwPNV{\;
        \For{$u=1$ \KwTo $n$}{
            Apply $\msf{Alg}_{init}$ on $\mbf{A}_{-u}$ to obtain $\what{\sigma}_u(v)\;\forall v\neq u$.\;
            Define $\wtild{C}_i^u=\{v\vert \what{\sigma}_u(v)=i\}$ for all $i\in[k]$.\;
            Estimate entries of $\mbf{B}$ by its sample mean $\what{\mbf{B}}^u$. Specifically, estimate each probability parameter $p_i$ with\;
            \begin{equation*}
            \what{p}_i = \frac{\sum_{\bm{l}}\mrm{A}_{\bm{l}}\Indc{\bm{l}\overset{\what{\sigma}_u}{\sim}r_i}}{\sum_{\bm{l}}\Indc{\bm{l}\overset{\what{\sigma}_u}{\sim}r_i}}\;
            \end{equation*}
            Assign the label of node $u$ according to
            \begin{equation}
            \label{eq:local_MLE}
            \what{\sigma}_u(u) = \argmax_{t\in[k]} \what{L}_u(\what{\sigma}_u,t;\mbf{A})\;
            \end{equation}
        }
    }
    \KwCSS{Define $\what{\sigma}(1)=\what{\sigma}_1(1)$. For $u=2,\ldots,n$, define
        \begin{equation}
        \label{eq:css_method}
        \what{\sigma}(u) = \argmax_{t\in[k]}\big| \{v\vert\what{\sigma}_1(v) = t\} \cap \{v\vert\what{\sigma}_u(v) = \what{\sigma}_u(u)\} \big|\;
        \end{equation}
    }
    \KwOut{Community assignment $\what{\sigma}_2$.}
\end{algorithm2e}

\subsection{Spectral Initialization}
\label{subsec:algo_step1}

In order to devise a good initialization algorithm $\msf{Alg}_{init}$, we develop a hypergraph version of the unnormalized spectral clustering \cite{VonLuxburg_07} with regularization \cite{ChinRao_15}.
In particular, a modified version of the \emph{hypergraph Laplacian} described below is employed. Let $\mbf{H} = \lb\mrm{H}_{ve}\rb$ be the $\lba\mcal{V}\rba \by \lba\mcal{E}\rba$ incidence matrix, where each entry $\mrm{H}_{ve}$ is the indicator function whether or not node $v$ belongs to the hyperedge $e$. Note that the incidence matrix $\mbf{H}$ contains the same amount of information as the the adjacency tensor $\mbf{A}$. Let
\begin{equation*}
d_u \eqDef \sum_{e\in\mcal{E}} \mrm{H}_{ue}
\end{equation*}
denote the degree of the $u$-th node, and
\begin{equation*}
\bar{d} \eqDef \frac{1}{n} \sum_{u \in [n]} d_{u}
\end{equation*}
be the average degree across the hypergraph. The \emph{unnormalized} hypergraph Laplacian is defined as
\begin{equation}
\label{eq:hyper_Lapla}
\mcal{L}(\mbf{A}) \eqDef \mbf{H}\mbf{H}^{\mrm{T}}-\mbf{D}
\end{equation}
where $\mbf{D} = \msf{diag} (d_{1},\ldots,d_{n})$ is a diagonal matrix representing the degree distribution in the hypergraph with adjacency tensor $\mbf{A}$ and $(\cdot)^{\mrm{T}}$ is the usual matrix transpose. Note that $\mcal{L}(\mbf{A})$ can be thought of as an encoding of the higher-dimensional connectivity relationship $\mbf{A}$ into a two-dimensional matrix.

Before we directly apply the spectral method, high-degree abnormals in the tensor $\mbf{A}$ is first trimmed to ensure the performance of the clustering algorithm. Specifically, we use $\mbf{A}_{\tau}$ to denote the modification of $\mbf{A}$ where all coordinates pertaining to the set $\{u\in[n]\mid d_{u}\geq\tau\}$ are replaced with all-zero vectors. Let $\mbf{H}_{\tau}$ and $\mbf{D}_{\tau}$ be the corresponding incidence matrix and degree matrix of $\mbf{A}_{\tau}$, respectively. The spectrum we are looking for is the trimmed version of $\mcal{L}$, denoted as
\begin{equation}
\label{eq:trimmed_Laplacian}
\msf{T}_{\tau}(\mcal{L}(\mbf{A})) \eqDef \mbf{H}_{\tau}\mbf{H}_{\tau}^{\mrm{T}} - \mbf{D}_{\tau}
\end{equation}
where the operator $\msf{T}_{\tau}(\cdot)$ represents the trimming process with a degree threshold $\tau$. We use
\begin{equation*}
\msf{SVD}_k\lp \msf{T}_{\tau}(\mcal{L}(\mbf{A}))\rp \eqDef \what{\mbf{U}} = \lb\bm{u}_1^{\mrm{T}} \; \cdots \; \bm{u}_n^{\mrm{T}}\rb^{\mrm{T}} \in \mbb{R}^{n\by k}
\end{equation*}
to denote the $k$ leading singular vectors generated from the singular value decomposition of the trimmed matrix $\msf{T}_{\tau}(\mcal{L}(\mbf{A}))$.
Note that in a conventional spetral clustering algorithm, each node $i\in[n]$ is represented by a reduced $k$-dimensional row vector $\bm{u}_i$.
The spectral clustering algorithm is described in Algorithm~\ref{alg:spec_init}.

\begin{algorithm2e}
\label{alg:spec_init}
\caption{Spectral Initialization}
\SetKwInput{KwCLS}{Cleanup}
\SetAlgoLined
\DontPrintSemicolon
\SetKwInOut{KwInAligned}{Input}
    \KwInAligned{Spectrum $\msf{SVD}_k\lp \msf{T}_{\tau}(\mcal{L}(\mbf{A}))\rp = \lb\bm{u}_1^{\mrm{T}} \; \cdots \; \bm{u}_n^{\mrm{T}} \rb^{\mrm{T}}$,\\
        number of communities $k$,\\
        cirtical radius $r=\mu\sqrt{\frac{k}{n}}$ with some $\mu>0$.
    }
    Set $S=[n]$.\;
    \For{$t=1$ \KwTo $k$}{
        Let $s_t = \argmax_{i\in S} \big|\{ j\in S: \lnorm\bm{u}_j-\bm{u}_i\rnorm_2 < r \}\big|$.\;
        Set $\what{C}_t = \{ j\in S: \lnorm\bm{u}_j-\bm{u}_{s_t}\rnorm_2 < r \}$.\;
        Label $\what{\sigma}(i)=t \;\forall i\in\what{C}_t$.\;
        Update $S \la S\setminus\what{C}_t$.\;
    }
    \KwCLS{If $S\neq\varnothing$, then for any $i\in S$, set
        \begin{equation*}
        \label{eq:spec_init_cls}
        \what{\sigma}(i) = \argmin_{t\in[k]} \frac{1}{\big|\what{C}_t\big|} \sum_{j\in\what{C}_t} \lnorm\bm{u}_j-\bm{u}_i\rnorm_2\;
        \end{equation*}
    }
    \KwOut{Community assignment $\what{\sigma}_1$.}
\end{algorithm2e}

Similar to classical spectral clustering, we make use of the row vectors of $\what{\mbf{U}}$ to cluster nodes. In each loop, we first choose the node which covers the most nodes with radius $r$ in $S$ to be the clustering center. Then, we assign all nodes whose distance from this center is smaller than $r$ to this cluster. At the end of the loop, we remove all nodes within this cluster from $S$. The final cleanup step \eqref{eq:spec_init_cls} in the algorithm is to assign those nodes that deviate too much from all $k$ clusters. It assigns each remaining node to the cluster between which it has the minimum average distance.

\begin{remark}
It is noteworthy that Algorithm~\ref{alg:spec_init} is just one method which is eligible to serve as a qualified first-step estimator $\msf{Alg}_{init}$. As mentioned above, the minimax risk is asymptotically achievable with Algorithm~\ref{alg:refine} as long as the initialization algorithm does not mis-classify too many nodes. The weak consistency requirement is stated explicitly in \autoref{sec:theo} when theoretical guarantees are discussed.
\end{remark}

\subsection{Time Complexity}
\label{subsec:algo_time}

Algorithm~\ref{alg:spec_init} has a time complexity of $O(n^3)$, the bottleneck of which being the $\msf{SVD}_k$ step. Still, the computation of $\msf{SVD}$ could be done approximately in $O(n^2\log n)$ time with high probability \cite{YunProutiere_16} if we are only interested in the first $k$ spectrums. As for the refinement scheme, the sparsity of the underlying hypergraph can be utilized to reduce the complextiy since the whole network structure could be stored in the incidence matrix $\mbf{H}$ equivalently as in the $d$-dimensional adjacency tensor $\mbf{A}$. As a result, the parameter estimation stage only requires $O(dm)$ where $m=|\mcal{E}|$ is the total number of hyperedges realized. Similary, the time complexity would be $O(kdm)$ and $O(kn^2)$ for the calculation of likelihood function and the consensus step, respectively. Hence, the overall complexity for Algorithm~\ref{alg:refine} and Algorithm~\ref{alg:spec_init} combined are $O(n^3\log n+nkm+kn^2)$ for a constant order $d$. It further reduces to $O(n^3\log n)$ in the sparse regime $\bm{p}=O(\log n/n^{d-1})$ where $m=O(n\log n)$ with high probability. 

\begin{remark}
It is possible to simplify our algorithm in the same way as in \cite{GaoMa_15}, where the SVD is done only once. The time complexity of the simplified version of our algorithm will be $O(n^2\log n)$ in the sparse regime. This is comparable to the any other state-of-art min-cut algorithm, which usually exhibit time complexity at least $O(|V||E|)$. 
Although we are not able to provide any theoretical guarantee for this simplified version, as in \cite{GaoMa_15}, empirically it seems to have the same performance as the original algorithm. Proving its asymptotic optimality is left as future work. 
\end{remark}

\section{Theoretical Guarantees}
\label{sec:theo}


Combining the first-step and second-step algorithm in Section~\ref{sec:algo}, we have the following overall performance guarantee which serves as the achievability part of the Main Theorem in Section~\ref{sec:main}.

\begin{theorem}
\label{thm:theo_upper}
Suppose $\bm{p}=\frac{1}{n^{d-1}}(a_1,\cdots,a_{\kappa_d}) = \Omega\lp\frac{1}{n^{d-1}}\rp$ and $a_i\approx a_j \;\forall i,j=1,\ldots,\kappa_d $. If \eqref{eq:main_condi_main} and \eqref{eq:main_condi_order} holds as $n\ra\infty$. Then, the combined estimator $\what{\sigma}_2$ (Algorithm~\ref{alg:refine}) along with estimator $\what{\sigma}_1$ (Algorithm~\ref{alg:spec_init}) is able to achieve a risk of
\begin{equation}
\label{eq:theo_upper}
\sup_{(\mbf{B},\sigma)\in\Theta_d^0} \Risk_{\sigma}(\what{\sigma}_2) \leq \exp\bigg( -(1+\zeta_n) \sum_{i<j:(r_i,r_j)\in\mcal{N}_d} m_{r_i r_j} I_{p_i p_j} \bigg)
\end{equation}
for some vanishing sequence $\zeta_n \ra0$.
If $k$ is a constant, then \eqref{eq:theo_upper} holds without the further assumption \eqref{eq:main_condi_order}.
\end{theorem}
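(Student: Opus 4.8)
The plan is to analyze the two stages of Algorithm~\ref{alg:refine} separately: first to show that the spectral initialization (Algorithm~\ref{alg:spec_init}) is \emph{weakly consistent}, i.e.\ it mislabels only a vanishing fraction of nodes with high probability, and then to show that, on this event, the local-MLE refinement pushes the per-node error probability down to the target exponential rate. Because $\Risk_\sigma(\what\sigma_2)=\E_\sigma\loss(\what\sigma_2,\sigma)$ and the mismatch ratio in \eqref{eq:mm_ratio} is an average of node-wise indicators, linearity of expectation reduces the task to bounding the misclassification probability of a single representative node, uniformly over $(\mbf B,\sigma)\in\Theta_d^0$, and then arguing that the consensus step does not inflate this rate.

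For weak consistency I would pass to the population Laplacian $\E_\sigma[\mcal{L}(\mbf A)]$, which has rank $k$ and a block eigenstructure whose leading $k$-dimensional eigenspace separates the communities with an eigengap governed by the separation of $\bm p$. The Davis--Kahan theorem then bounds the distance between the empirical singular subspace $\what{\mbf U}$ and the population eigenspace (up to an element of $\mrm O(k,k)$) by $\lnorm \msf{T}_\tau(\mcal{L}(\mbf A))-\E_\sigma[\mcal{L}(\mbf A)]\rnorm_{\mrm{op}}$ divided by the eigengap. The genuinely hypergraph-specific difficulty is that the entries of $\mbf H\mbf H^{\mrm T}$ are \emph{not} independent---two entries sharing an incident hyperedge are correlated---so the standard matrix-concentration route does not apply directly. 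I would tame the heavy upper tail of the degree sequence in the sparse regime $\bm p=\Omega(n^{-(d-1)})$ by the trimming operator $\msf{T}_\tau$, and then bound the operator-norm perturbation through a union bound over a net combined with concentration of the associated quadratic forms; the greedy cluster-extraction loop of Algorithm~\ref{alg:spec_init} converts the resulting small subspace perturbation into a small misclassified fraction.

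The heart of the argument is the refinement step. Fix a node $u$ with true label $t^\ast$ and a competitor $t\neq t^\ast$. The key structural fact is that relabeling the \emph{single} node $u$ from $t^\ast$ to $t$ shifts the histogram of every incident hyperedge $\bm l=(u,l_2,\dots,l_d)$ by moving exactly one coordinate between two bins, so whenever the relation changes it changes to a \emph{neighbor}, i.e.\ a pair in $\mcal{N}_d$ with $\lnorm\bm h_i-\bm h_j\rnorm_1=2$. Hence the local log-likelihood gap $\what L_u(\what\sigma_u,t;\mbf A)-\what L_u(\what\sigma_u,t^\ast;\mbf A)$ splits into groups of independent Bernoulli contributions indexed by neighboring pairs $(r_i,r_j)$. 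A Chernoff bound at the symmetric tilt $1/2$ applied group-wise yields exactly the R\'{e}nyi-$\tfrac12$ divergence $I_{p_ip_j}$ of \eqref{eq:Renyi_div_Ber}, weighted by the number of hyperedges in that group; minimizing the resulting exponent over the competitor $t$ isolates the least-favorable configuration encoded by $\Theta_d^L$, whose group sizes are precisely the $m_{r_ir_j}$. A union bound over competitors and over the relation-configuration patterns is then absorbed by the normalization $k^d\log k$ in \eqref{eq:main_condi_main}, while the regularity $a_i\approx a_j$ together with \eqref{eq:main_condi_order} keeps the per-pair divergences of comparable order so that no single group dominates the fluctuations.

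I expect the main obstacle to be the two coupled approximations hidden in the refinement analysis. First, the likelihood is evaluated with the \emph{estimated} assignment $\what\sigma_u$ and the \emph{estimated} parameters $\what{\mbf B}^u$ rather than the truth, so I must show that the vanishing misclassification fraction from the first step perturbs both the counts $m_{r_ir_j}$ and the divergences $I_{p_ip_j}$ by only a $(1+o(1))$ factor, degrading the exponent by at most the allowed $\zeta_n$; controlling $\what{\mbf B}^u$ uniformly is delicate because the hyperedges feeding the estimate overlap with those entering the likelihood, and this coupling is exactly what forces \eqref{eq:main_condi_order} when $k$ scales with $n$ and lets it be dropped when $k$ is constant. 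Second, for a bound tight enough to meet the converse I must verify that patterns routed through non-neighboring relations, or induced by residual errors in the background assignment, yield strictly smaller exponents and are therefore negligible. Finally I would confirm that the consensus rule \eqref{eq:css_method} aligns the $n$ near-correct assignments under one common permutation without inflating the per-node error, which completes \eqref{eq:theo_upper}.
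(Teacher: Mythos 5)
Your proposal is correct and follows essentially the same route as the paper: weak consistency of the spectral step via Davis--Kahan, trimming, and a union-bound/net argument to handle the dependent Laplacian entries (Theorem~\ref{SPEC} with Lemmas~\ref{SPECLEMMADHSBM}, \ref{SPECLEMMA6}, and \ref{HOMOEIG4anyD}), followed by a per-node local-MLE analysis in which a Chernoff bound at tilt $1/2$ over the neighboring relation pairs in $\mcal{N}_d$ yields the weighted R\'{e}nyi exponent $\sum m_{r_ir_j}I_{p_ip_j}$, with parameter-estimation accuracy and the consensus alignment handled exactly as in Lemmas~\ref{lma:param_estim}, \ref{lma:localMLE_risk}, and \ref{lma:consensus}. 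The only slight imprecision is your attribution of condition \eqref{eq:main_condi_order} to the overlap between estimation and likelihood hyperedges (the leave-one-out construction avoids that overlap); in the paper it instead controls the multiplicative penalty from the $O(k\gamma)$ wrongly-typed Bernoulli variables injected by residual first-step errors, but this does not affect the validity of your plan.
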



\noindent
In what follows, we first state the theoretical guarantees of Algorithm~\ref{alg:refine} as well as Algorithm~\ref{alg:spec_init} and demonstrate how they in combination aggregate to the upper bound result. The detailed proofs of the intermediate theorems are established later in Subsection~\ref{subsec:theo_refine} and Subsection~\ref{subsec:theo_specclus}, respectively.

The algorithm proposed in Section~\ref{sec:algo} consists of two steps. We first get a rough estimation through the first step, which is a spectral clustering on the hypergraph Laplacian matrix. After that, for each node we perform a local maximum likelihood estimation, which serves as the second step, to further adjust its community assignment. It turns out that this refining mechanism is actually crucial in achieving the optimal minimax risk $\Risk_d^{\ast}$, as long as the first initialization step satisfies a certain weak consistency condition. Specifically, the first-step algorithm $\what{\sigma}_0$ should meet the requirement stated below.

\begin{condition}
\label{con:1st_step}
There exists constant $C_0,\delta>0$, and a positive sequence $\gamma_n$ such that
\begin{equation}
\label{eq:condi_1st_step}
\inf_{(\mbf{B},\sigma)\in\Theta_d^0} \Pr_{\sigma}\lbp \loss(\what{\sigma}_0,\sigma)\leq \gamma_n \rbp \geq 1 - C_0n^{-(1+\delta)}
\end{equation}
for sufficiently large $n$.
\end{condition}

\noindent
We have the following performance guarantee for our second-step algorithm.
\begin{theorem}
\label{thm:theo_refine}
If
\begin{equation}
\label{eq:upper_condi}
\frac{\sum_{i<j:(r_i,r_j)\in\mcal{N}_d} m_{r_i r_j} I_{p_i p_j}}{\log k} \ra \infty
\end{equation}
and Condition~\ref{con:1st_step} is satisfied for
\begin{equation}
\label{eq:gamma_original}
\gamma = o\lp \frac{1}{k\log k} \rp
\end{equation}
and
\begin{equation}
\label{eq:gamma_further}
\gamma = o\bigg( \frac{1}{k \max_{(i,j):(r_i,r_j)\in\mcal{N}_d} \frac{p_1-p_{\kappa_d}}{p_i-p_j}} \bigg).
\end{equation}
Then, the estimator $\what{\sigma}_2$ (Algorithm~\ref{alg:refine}) is able to achieve a risk of
\begin{equation}
\label{eq:upper_result}
\sup_{(\mbf{B},\sigma)\in\Theta_d^0} \Risk_{\sigma}(\what{\sigma}_2) \leq \exp\bigg( -(1+\zeta_n^{\prime}) \sum_{i<j:(r_i,r_j)\in\mcal{N}_d} m_{r_i r_j} I_{p_i p_j} \bigg)
\end{equation}
for some vanishing sequence $\zeta_n^{\prime} \ra0$. If $k$ is a constant, then \eqref{eq:upper_result} holds without further assuming \eqref{eq:gamma_further}.
\end{theorem}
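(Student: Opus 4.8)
The plan is to bound the worst-case risk of $\what{\sigma}_2$ by controlling the per-node error probability of the local MLE step and then passing through the consensus step. Since $\loss_0(\what{\sigma}_2,\sigma)$ is the average of the indicators $\Indc{\what{\sigma}_2(u)\neq\sigma(u)}$, by linearity of expectation it suffices to upper bound the per-node error probability $\Pr_\sigma\lbp\what{\sigma}_u(u)\neq\sigma(u)\rbp$ and then argue that the consensus step \eqref{eq:css_method} aligns the $n$ local assignments without losing the exponent. Throughout I would work on the high-probability good event furnished by Condition~\ref{con:1st_step}, on which the initializer $\what{\sigma}_u$ agrees with the truth (up to a fixed label permutation) on all but at most $\gamma_n n$ of the $n-1$ remaining nodes.

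Fixing a node $u$ with true label $\sigma(u)=a$, the local MLE errs only if $\what{L}_u(\what{\sigma}_u,b;\mbf{A})\geq\what{L}_u(\what{\sigma}_u,a;\mbf{A})$ for some competing label $b\neq a$. The pivotal structural observation is that flipping the label of the single node $u$ from $a$ to $b$ changes each hyperedge $\bm{l}=(u,l_2,\ldots,l_d)$'s community relation by moving exactly one unit of its histogram, i.e. either no change or a transition $r_i\to r_j$ with $\lnorm\bm{h}_i-\bm{h}_j\rnorm_1=2$. Hence only neighboring pairs in $\mcal{N}_d$ enter the test statistic, which is exactly why the leading exponent is a sum over $\mcal{N}_d$. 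The difference $\what{L}_u(\what{\sigma}_u,b;\mbf{A})-\what{L}_u(\what{\sigma}_u,a;\mbf{A})$ is a sum of independent Bernoulli log-likelihood ratios, grouped by transition type; applying a Chernoff bound at the symmetric tilt $s=\tfrac{1}{2}$, each edge of type $(r_i,r_j)$ contributes the Bhattacharyya factor $\sqrt{p_i p_j}+\sqrt{(1-p_i)(1-p_j)}=\exp(-\tfrac{1}{2}I_{p_i p_j})$ from \eqref{eq:Renyi_div_Ber}. Summing the forward ($r_i\to r_j$) and backward ($r_j\to r_i$) transitions, whose multiplicities match the coefficients $m_{r_i r_j}$ because the two communities involved have near-equal sizes on $\Theta_d^L$, yields a single-pair error exponent of $\sum_{(r_i,r_j)\in\mcal{N}_d}m_{r_i r_j}I_{p_i p_j}$, with the smallest such exponent over competing labels $b$ dictating the bound.

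The main obstacle, I expect, is to carry this clean exponent through two layers of approximation with only a $(1+o(1))$ loss. First, the local MLE uses the plug-in parameters $\what{\mbf{B}}^u$ rather than the true $\mbf{B}$; I would show via a concentration bound on the sample means $\what{p}_i$ that $\log(\what{p}_i/p_i)$ and $\log((1-\what{p}_i)/(1-p_i))$ are small enough that the tilted moment generating function is perturbed only subexponentially. Second, the $\gamma_n n$ nodes mislabeled by the initializer corrupt the relation type assigned to some edges containing $u$; the hypotheses \eqref{eq:gamma_original} and \eqref{eq:gamma_further} are precisely what guarantees that the number of corrupted edges, weighted by the relevant divergences, is negligible compared with $\sum_{(r_i,r_j)\in\mcal{N}_d}m_{r_i r_j}I_{p_i p_j}$, with \eqref{eq:gamma_further} controlling the hardest neighboring pair (smallest gap $p_i-p_j$) through the ratio $\max_{(i,j)}\frac{p_1-p_{\kappa_d}}{p_i-p_j}$. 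Both effects are absorbed into the vanishing sequence $\zeta_n^{\prime}$.

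It then remains to take the union bound over the $k-1$ competing labels $b$, harmless precisely because \eqref{eq:upper_condi} forces the exponent to dominate $\log k$, and to average $\tfrac{1}{n}\sum_u\Pr_\sigma\lbp\what{\sigma}_u(u)\neq\sigma(u)\rbp$, giving the stated risk bound uniformly over $\Theta_d^0$. Finally, since every local estimate $\what{\sigma}_u$ coincides with the ground truth up to a common permutation with high probability, the majority-voting consensus in \eqref{eq:css_method} produces a globally consistent labeling $\what{\sigma}_2$ retaining the same exponential risk, establishing \eqref{eq:upper_result}. When $k$ is a fixed constant, the $\log k$ factor is $O(1)$ and the $\gamma$-smallness in \eqref{eq:gamma_further} becomes vacuous, so \eqref{eq:upper_result} holds without assuming \eqref{eq:gamma_further}.
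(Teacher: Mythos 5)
Your proposal is correct and follows essentially the same route as the paper: decompose the risk into per-node error probabilities plus a consensus-alignment failure term (the paper's Lemma~\ref{lma:consensus}), bound the local MLE error via a Chernoff bound at tilt $1/2$ over the neighboring-pair transitions in $\mcal{N}_d$ (the paper's Lemma~\ref{lma:localMLE_risk}, ``Part 1''), treat the plug-in parameter error and the $O(k\gamma)$ corrupted edges as perturbations absorbed into the $(1+o(1))$ factor (the paper's Lemma~\ref{lma:param_estim} and ``Part 2'', where \eqref{eq:gamma_further} or constant $k$ is invoked exactly as you describe), and finally absorb the $(k-1)$ union bound using \eqref{eq:upper_condi}. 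The only minor imprecision is calling \eqref{eq:gamma_further} ``vacuous'' for constant $k$ --- the paper instead shows the correction term is then absorbed into the dominant divergence term of the sum using only \eqref{eq:gamma_original} --- but this does not change the substance of the argument.
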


As for the initialization algorithm, first recall that we assume the connecting probabilities $\bm{p}=(p_1,\ldots,p_{\kappa_d})= \frac{1}{n^{d-1}}(a_1,\ldots,a_{\kappa_d})$ are in the same order. Also, we use $\lambda_k$ to denote the $k$-th largest singular value of the matrix $\E\mcal{L}(\mbf{A})$, which is the expectation of the hypergraph Laplacian \eqref{eq:hyper_Lapla}. Note that each entry in the matrix is a weighted combination of the probability parameters $p_i$'s. Stated in terms of $\lambda_k$, the following theorem characterizes the mismatch ratio of the first-step algorithm that we propose.

\begin{theorem}
\label{SPEC}
If
\begin{equation}
\label{assump22}
\frac{k a_1}{\lambda_k^2}\leq C_1
\end{equation}
for some sufficiently small $C_1\in(0,1)$ where $p_1 = \frac{a_1}{n^{d-1}}$. Apply Algorithm~\ref{alg:spec_init} with a sufficiently small constant $\mu>0$ and $\tau = C_2\bar{d}$ for some sufficiently large constant $C_2>0$. For any constant $C^{\prime}>0$, there exists some $C>0$ depending only on $C^{\prime},C_2$, and $\mu$ so that
\begin{equation*}
\loss(\what{\sigma},\sigma) \leq C\frac{a_1}{\lambda_k^2}
\end{equation*}
with probability at least $1-n^{-C^{\prime}}$.
\end{theorem}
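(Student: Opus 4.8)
The plan is to follow the standard three-ingredient recipe for spectral-clustering guarantees — population analysis, concentration, and a Davis--Kahan perturbation bound — and then to convert the resulting subspace bound into a per-node misclassification count. First I would analyze the population Laplacian $\E\mcal{L}(\mbf{A})$. Since the parameter tensor inherits the block structure from $\mbf{B}$ and the communities are nearly equal-sized, $\E\mcal{L}(\mbf{A})$ is, up to the diagonal correction $\E\mbf{D}$, a block-constant matrix of rank at most $k$ whose top-$k$ left singular subspace $\mbf{U}\in\mrm{O}(n,k)$ has rows that are identical for nodes sharing a community. A direct computation shows that the common row for community $t$ has norm $\asymp\sqrt{k/n}$ and that rows of distinct communities are orthogonal, so the inter-community separation satisfies
\begin{equation*}
\min_{\sigma(i)\neq\sigma(j)} \lnorm \mbf{U}_i - \mbf{U}_j\rnorm_2 \asymp \sqrt{k/n},
\end{equation*}
while the $k$-th singular value is exactly $\lambda_k$, which plays the role of the spectral gap (note $\bar{d}\asymp a_1$ in the regime $\bm{p}=\Omega(n^{-(d-1)})$).

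Second — and this is the crux — I would establish the concentration bound
\begin{equation*}
\lnorm \msf{T}_\tau(\mcal{L}(\mbf{A})) - \E\mcal{L}(\mbf{A})\rnorm_{\mrm{op}} \lesssim \sqrt{a_1}
\end{equation*}
with probability at least $1 - n^{-C'}$. The obstacle absent in the graph case is that the entries of $\mbf{H}\mbf{H}^{\mrm{T}}$ are \emph{not} independent: a single order-$d$ hyperedge contributes to $\binom{d}{2}$ off-diagonal entries simultaneously, and the pairwise co-occurrence counts are built from shared Bernoulli variables. I would split the deviation into the centered adjacency part $\mbf{H}\mbf{H}^{\mrm{T}}-\E[\mbf{H}\mbf{H}^{\mrm{T}}]$ and the diagonal part $\mbf{D}-\E\mbf{D}$, controlling each separately. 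The trimming at $\tau=C_2\bar{d}$ is what deletes the atypically high-degree nodes whose heavy-tailed contribution would otherwise blow up the operator norm in the sparse regime; after trimming, every retained row has degree $O(\bar{d})=O(a_1)$. To handle the dependence I would pass to an $\eps$-net over the unit sphere and bound each fixed quadratic form $\bm{x}^{\mrm{T}}(\msf{T}_\tau(\mcal{L})-\E\mcal{L})\bm{x}$ via a Bernstein-type tail, regrouping the terms hyperedge-by-hyperedge so the summands become independent across distinct hyperedges; a union bound over the net and over the set of trimmed coordinates then delivers the $n^{-C'}$ failure probability. I expect this to be the main technical difficulty.

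Third, with the gap $\lambda_k$ and the perturbation bound in hand, I would invoke the Davis--Kahan $\sin\Theta$ theorem (the gap above $\lambda_{k+1}\approx 0$ is $\asymp\lambda_k$) to produce an orthogonal $\mbf{O}\in\mrm{O}(k,k)$ with
\begin{equation*}
\lnorm \what{\mbf{U}}\mbf{O} - \mbf{U}\rnorm_{\mrm{F}}^2 \lesssim \frac{k\,\lnorm \msf{T}_\tau(\mcal{L}(\mbf{A}))-\E\mcal{L}(\mbf{A})\rnorm_{\mrm{op}}^2}{\lambda_k^2} \lesssim \frac{k\,a_1}{\lambda_k^2}.
\end{equation*}

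Finally I would turn this aggregate bound into a misclassification count. Calling a node $i$ \emph{bad} when $\lnorm (\what{\mbf{U}}\mbf{O})_i-\mbf{U}_i\rnorm_2$ exceeds half the inter-community separation $\asymp\sqrt{k/n}$, each bad node contributes at least $\asymp k/n$ to the Frobenius error, so the number of bad nodes obeys
\begin{equation*}
|B| \lesssim \frac{n}{k}\cdot\frac{k\,a_1}{\lambda_k^2} = \frac{n\,a_1}{\lambda_k^2}.
\end{equation*}
The smallness assumption \eqref{assump22} forces $|B|=o(n/k)$, so within every community the good nodes form a tight cluster of size $\asymp n/k$ that dominates; choosing $r=\mu\sqrt{k/n}$ with $\mu$ small then guarantees the greedy ball-covering in Algorithm~\ref{alg:spec_init} picks exactly one center per community and labels every good node correctly, the clean-up step touching only nodes already counted in $B$. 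Hence $\loss(\what{\sigma},\sigma)\leq |B|/n \lesssim a_1/\lambda_k^2$, with all constants absorbed into the claimed $C=C(C',C_2,\mu)$.
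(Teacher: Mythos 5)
Your proposal is correct and follows essentially the same route as the paper's proof: the population-Laplacian block structure (the paper's Lemma~\ref{SPECLEMMA6}), the operator-norm concentration of the trimmed Laplacian $\lnorm \msf{T}_{\tau}(\mcal{L}(\mbf{A}))-\E\mcal{L}(\mbf{A})\rnorm_{\mrm{op}}\lesssim\sqrt{n^{d-1}p_1+1}=\Theta(\sqrt{a_1})$ handled via trimming, discretization, and a Bernstein bound with a union bound over the $\binom{d}{2}$ position pairs to restore independence across hyperedges (the paper's Lemma~\ref{SPECLEMMADHSBM}), then Davis--Kahan with gap $\lambda_k$, and finally the conversion of the Frobenius bound into a count of nodes deviating by more than half the $\sqrt{k/n}$ row separation, with the greedy ball-covering picking one center per community under assumption \eqref{assump22}. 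Your "bad node" counting is exactly the paper's argument with the sets $T_i$ and the permutation claim, so no gap.
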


\noindent
To take out the dependency on $\lambda_k$, we use the observation below.

\begin{lemma}
\label{HOMOEIG4anyD}
For $d$-{\hSBM} in $\Theta_d^0(n,k,\bm{p},\eta)$, we have
\begin{equation}\label{homoeigen4anyD}
\lambda_k \gtrsim \sum_{i<j:(r_i,r_j)\in\mcal{N}_d} m_{r_i r_j}(p_i-p_j).
\end{equation}
\end{lemma}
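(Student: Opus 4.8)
The plan is to compute the spectrum of $\E\mcal{L}(\mbf{A})$ essentially in closed form, read off the $k$-th singular value $\lambda_k$, and then match it against the right-hand side by a combinatorial count. The starting point is the observation that $\mcal{L}(\mbf{A})=\mbf{H}\mbf{H}^{\mrm{T}}-\mbf{D}$ has a vanishing diagonal (the degree term $\mbf{D}$ cancels the self co-occurrence count $\mrm{H}_{ue}^2=\mrm{H}_{ue}$), while for $u\neq v$ its $(u,v)$ entry is simply the number of hyperedges containing both $u$ and $v$. Taking expectations, the off-diagonal entry becomes the \emph{expected co-degree} of the pair, which in the homogeneous and approximately equal-sized model depends only on whether $\sigma(u)=\sigma(v)$. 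Writing $b_{\mrm{in}}$ (same community) and $b_{\mrm{out}}$ (different community) for these two values, each obtained by summing the connecting probability over the $\binom{n-2}{d-2}$ completions of $\{u,v\}$ into a $d$-edge, and letting $\mbf{Z}\in\{0,1\}^{n\by k}$ be the community membership matrix so that $(\mbf{Z}\mbf{Z}^{\mrm{T}})_{uv}=\Indc{\sigma(u)=\sigma(v)}$, I would record
\[
\E\mcal{L}(\mbf{A}) = b_{\mrm{out}}\,\mbf{J}_n + (b_{\mrm{in}}-b_{\mrm{out}})\,\mbf{Z}\mbf{Z}^{\mrm{T}} - b_{\mrm{in}}\,\mbf{I}_n,
\]
where $\mbf{J}_n$ is the all-ones matrix.

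Next I would diagonalize this matrix. In the exactly equal-sized case $n_t=n^{\prime}$, decompose $\mbb{R}^n$ into $\mrm{span}(\bm{1})$, the orthogonal complement of $\bm{1}$ inside the community subspace (dimension $k-1$), and the orthogonal complement of the community subspace (dimension $n-k$); on these three pieces $\E\mcal{L}(\mbf{A})$ acts as the scalars $\mu_1=b_{\mrm{out}}n+(b_{\mrm{in}}-b_{\mrm{out}})n^{\prime}-b_{\mrm{in}}$, $\mu_2=(b_{\mrm{in}}-b_{\mrm{out}})n^{\prime}-b_{\mrm{in}}$, and $\mu_3=-b_{\mrm{in}}$, with multiplicities $1$, $k-1$, and $n-k$. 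Because the $p_i$'s are assumed of the same order, $\mu_1\approx b_{\mrm{out}}n$ dominates in magnitude, and since $n-k\ge k-1$ at least $k$ eigenvalues have absolute value at least $\max(|\mu_2|,|\mu_3|)$; hence
\[
\lambda_k \;\ge\; \max\bigl(|\mu_2|,|\mu_3|\bigr)\;\ge\;\tfrac12\bigl(|\mu_2|+|\mu_3|\bigr)\;\ge\;\tfrac12\,|\mu_2-\mu_3|\;=\;\tfrac12\,(b_{\mrm{in}}-b_{\mrm{out}})\,n^{\prime}
\]
by the triangle inequality, so the delicate cancellation $\mu_2\approx 0$ never hurts the lower bound. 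For the approximately equal-sized regime the block structure is perturbed only by an $O(\eta)$ factor in the community sizes, and Weyl's inequality transfers the estimate up to constants.

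The remaining, and most laborious, step is the combinatorial identity $n^{\prime}(b_{\mrm{in}}-b_{\mrm{out}})\gtrsim\sum_{i<j:(r_i,r_j)\in\mcal{N}_d}m_{r_i r_j}(p_i-p_j)$. Here I would expand $b_{\mrm{in}}-b_{\mrm{out}}=\sum_{\{w_3,\ldots,w_d\}}\bigl(p_{\mrm{same}}-p_{\mrm{diff}}\bigr)$ over completions, and note that passing from the same-community to the different-community configuration moves a single node of the pair out of its community, which shifts exactly one unit of mass between two bins of the histogram; thus every nonzero summand is a difference $p_i-p_j$ across a \emph{neighboring} pair $(r_i,r_j)\in\mcal{N}_d$, matching the $\ell_1$-distance-$2$ condition. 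Grouping the completions by the relation they induce and counting them with the community sizes $\approx n^{\prime}$ then reproduces, after multiplication by $n^{\prime}$, the coefficients $m_{r_i r_j}$ up to universal constants; the prototype is the $d=3$ computation $b_{\mrm{in}}-b_{\mrm{out}}\approx n^{\prime}(p-q)+(k-2)n^{\prime}(q-r)$, which upon scaling gives $n^{\prime}(b_{\mrm{in}}-b_{\mrm{out}})\asymp\binom{n^{\prime}}{2}(p-q)+(k-2)(n^{\prime})^2(q-r)$, precisely the order of $S$.

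I expect this last paragraph to be the main obstacle: the bookkeeping reconciles two different ways of flipping a single label — the per-pair co-degree difference (moving one endpoint of the pair $(u,v)$) against the definition of $m_{r_i r_j}$ (flipping node $1$ across all tuples through it) — and one must verify that the combinatorial weights of the completions agree, term by term over $\mcal{N}_d$, up to constants independent of $n$ and $k$. A secondary technical point is confirming that $\mu_1$ genuinely dominates so that $\lambda_k$ is pinned to $\max(|\mu_2|,|\mu_3|)$ rather than to $|\mu_1|$, which is where the same-order assumption on $\bm{p}$ is used.
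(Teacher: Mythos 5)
Your proposal follows essentially the same route as the paper's own proof: both reduce $\E\mcal{L}(\mbf{A})$ to the two-valued expected co-degree structure $b_{\mrm{out}}\mbf{J}_n+(b_{\mrm{in}}-b_{\mrm{out}})\mbf{Z}\mbf{Z}^{\mrm{T}}-b_{\mrm{in}}\mbf{I}_n$, deduce $\lambda_k\gtrsim n^{\prime}(b_{\mrm{in}}-b_{\mrm{out}})$ from the block spectral structure (the paper via Weyl's inequality after subtracting the rank-one all-ones part, you via the exact three-eigenspace decomposition), and finish with the same completion-counting identification of $n^{\prime}(b_{\mrm{in}}-b_{\mrm{out}})$ with $\sum_{i<j:(r_i,r_j)\in\mcal{N}_d}m_{r_ir_j}(p_i-p_j)$. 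If anything, your linear-algebra step is more careful than the paper's (which hides the diagonal term and the eigenvalue-versus-singular-value distinction in an ``$\approx$'', issues your $\max(|\mu_2|,|\mu_3|)\geq\frac{1}{2}|\mu_2-\mu_3|$ trick resolves), while the combinatorial bookkeeping you flag as the main obstacle---including the fact that for $d\geq4$ some completions contribute \emph{negatively} signed neighboring differences, e.g. the $(2,2,0,0)\to(3,1,0,0)$ flip, so positivity of the aggregated coefficients needs an Abel-summation style regrouping---is treated just as informally in the paper.
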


\begin{proof}[Proof of Theorem~\ref{thm:theo_upper}]
Finally, we only need to prove that the result of Theorem~\ref{SPEC} does match Condition~\ref{con:1st_step}.
Combining Theorem~\ref{SPEC} with Lemma~\ref{HOMOEIG4anyD}, we have
\begin{align*}
\loss(\what{\sigma},\sigma) &\leq \frac{Cn^{d-1}p_1}{(\lambda_k)^2} \\
 &\lesssim \frac{Cn^{d-1}p_1}{\sum_{i<j:(r_i,r_j)\in\mcal{N}_d} m_{r_i r_j}^2(p_i-p_j)^2} \\
 &\approx \frac{Cn^{d-1}}{\sum_{i<j:(r_i,r_j)\in\mcal{N}_d} m_{r_i r_j}^2I_{p_i p_j}}.
\end{align*}
Hence, we need $\frac{Cn^{d-1}}{\sum_{i<j:(r_i,r_j)\in\mcal{N}_d} m_{r_i r_j}^2I_{p_i p_j}} = o\lp\frac{1}{k\log k}\rp$, which means that
\begin{equation}
\label{DCONDI}
\frac{\sum_{i<j:(r_i,r_j)\in\mcal{N}_d} m_{r_i r_j}^2I_{p_i p_j}}{n^{d-1}k\log k} \ra\infty.
\end{equation}
Compared to the original requirement \eqref{eq:upper_condi} in the main theorem in Section~\ref{sec:main}, which is
\begin{equation*}
\frac{\sum_{i<j:(r_i,r_j)\in\mcal{N}_d} m_{r_i r_j}I_{p_i p_j}}{\log k} \ra \infty.
\end{equation*}
By counting arguments, we have
\begin{equation*}
\frac{1}{k^{d-1}} \leq \frac{m_{r_i r_j}}{n^{d-1}} \leq \frac{1}{k}.
\end{equation*}
We can see that the new requirement \eqref{DCONDI} is only slightly more stringent than the original one. Together, we complete the proof.
\end{proof}

\subsection{Refinement Step}
\label{subsec:theo_refine}



To prove Theorem~\ref{thm:theo_refine}, we need a couple of technical lemmas, the proofs of which are delegated to the appendices.
First, the lemma below ensures the accuracy of the parameter estimation with a qualified initialization algorithm.
\begin{lemma}
\label{lma:param_estim}
Suppose as $n\ra\infty$, $m_{r_i r_j}I_{p_i p_j}\ra\infty$ for each $(i,j)$ pair such that $(r_i,r_j)\in\mcal{N}_d$, and Condition~\ref{con:1st_step} holds with $\gamma$ satisfying \eqref{eq:gamma_original} for some $\delta>0$.
Then there exists a sequence $\zeta_n\ra0$ as $n\ra\infty$ and a constant $C>0$ such that
\begin{equation}
\adjustlimits\inf_{(\mbf{B},\sigma)\in\Theta_d^0}\min_{u\in[n]} \Pr_{\sigma}\Big\{\min_{\pi\in\mcal{S}_k}\max_{\bm{s} \in [k]^d} \big|\what{\mrm{B}}_{\bm{s}}^u-\mrm{B}_{\bm{s}_{\pi}}\big| \leq \zeta_n\max_{i<j:(r_i,r_j)\in\mcal{N}_d}(p_i-p_j) \Big\} \geq 1-Cn^{-(1+\delta)}.
\end{equation}
\end{lemma}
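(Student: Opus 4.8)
The plan is to control the parameter‐estimation error $\big|\what{\mrm{B}}_{\bm{s}}^u-\mrm{B}_{\bm{s}_{\pi}}\big|$ by first freezing the randomness of the initialization step and then applying a concentration argument to the sample means. Condition~\ref{con:1st_step} tells us that with probability at least $1-C_0 n^{-(1+\delta)}$ the initialization $\what{\sigma}_u$ (run on $\mbf{A}_{-u}$) satisfies $\loss(\what{\sigma}_u,\sigma)\leq\gamma_n$, i.e. after the optimal relabeling $\pi$, at most $\gamma_n n$ nodes are misclassified. I would fix this good event and this $\pi$ once and for all, and argue conditionally on $\{\what{\sigma}_u\}$; the remaining randomness is that of the Bernoulli hyperedges in $\mbf{A}$, which is \emph{independent} of $\what{\sigma}_u$ because the latter is computed from $\mbf{A}_{-u}$ (and for the estimator of $p_i$ we only need the edges incident to the relation class, whose status is not used in forming $\what{\sigma}_u$). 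This independence is what makes $\what{p}_i = \frac{\sum_{\bm{l}}\mrm{A}_{\bm{l}}\Indc{\bm{l}\overset{\what{\sigma}_u}{\sim}r_i}}{\sum_{\bm{l}}\Indc{\bm{l}\overset{\what{\sigma}_u}{\sim}r_i}}$ a genuine empirical average of conditionally i.i.d.\ (up to bounded heterogeneity) Bernoulli variables.

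Next I would decompose the error into a \emph{bias} term and a \emph{fluctuation} term. The fluctuation term is the deviation of $\what{p}_i$ from the true mean of the summed Bernoullis given $\what{\sigma}_u$; since $m_{r_i r_j}$-type counts show the denominator $N_i\eqDef\sum_{\bm{l}}\Indc{\bm{l}\overset{\what{\sigma}_u}{\sim}r_i}$ is of order $n^{d}/k^{\,O(1)}$, a Bernstein/Chernoff bound gives a relative deviation that is negligible compared with $p_i-p_j\approx (a_i-a_j)/n^{d-1}$, provided $N_i\,p_i\to\infty$; the hypothesis $m_{r_i r_j}I_{p_i p_j}\to\infty$ together with $I_{p_i p_j}\approx (a_i-a_j)^2/(n^{d-1}a_i)$ is exactly what guarantees enough effective samples, so the fluctuation is $o(1)\cdot(p_i-p_j)$ with probability $1-n^{-\omega(1)}$. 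The bias term comes from the $\gamma_n n$ misclassified nodes: these corrupt a fraction of the indicator classifications, moving some hyperedges into the wrong relation class. I would bound the number of affected hyperedges by a counting argument — perturbing $\gamma_n n$ labels changes $O(\gamma_n n\cdot n^{d-1})$ of the $n^{d}$ incidence patterns — so the induced bias in $\what{p}_i$ is at most $O(k\gamma_n)\max_i p_i$ in relative terms. The condition $\gamma_n=o\!\big(\tfrac{1}{k\log k}\big)$ in \eqref{eq:gamma_original}, and the sharper \eqref{eq:gamma_further} $\gamma_n=o\!\big(\tfrac{1}{k\,\max\frac{p_1-p_{\kappa_d}}{p_i-p_j}}\big)$, are precisely calibrated so that this bias is $o(1)\cdot(p_i-p_j)$ \emph{uniformly in the pair} $(i,j)$: the worst ratio $\frac{p_1-p_{\kappa_d}}{p_i-p_j}$ converts an absolute bias controlled by the global scale $p_1-p_{\kappa_d}$ into a relative one against the possibly much smaller gap $p_i-p_j$.

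Having controlled a single $\what{p}_i$, I would take a union bound over the $\kappa_d$ relation types (a constant since $d$ is fixed) and, crucially, over the $n$ choices of the removed node $u$ — the statement asks for $\min_{u\in[n]}$, so the per-$u$ failure probability must be $o(n^{-(1+\delta)})$ to survive the union bound over $u$; this is why the Chernoff tails above are driven to be polynomially small with an exponent exceeding $1+\delta$, and it is why the $m_{r_i r_j}I_{p_i p_j}\to\infty$ hypothesis (not merely $\to$ a constant) is needed. The claimed failure probability $Cn^{-(1+\delta)}$ then splits as $C_0 n^{-(1+\delta)}$ from Condition~\ref{con:1st_step} plus the negligible concentration contribution. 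Finally I would translate $\big|\what{\mrm{B}}_{\bm{s}}^u-\mrm{B}_{\bm{s}_{\pi}}\big|$ back to the $\what{p}_i$ language: for each index tuple $\bm{s}$ the block value $\mrm{B}_{\bm{s}_{\pi}}$ equals some $p_i$ according to $\msf{hist}(\bm{s})$, so the max over $\bm{s}$ is the max over the $\kappa_d$ types, yielding the bound $\zeta_n\max_{(r_i,r_j)\in\mcal{N}_d}(p_i-p_j)$ after setting $\zeta_n$ to be the (vanishing) maximum of the relative errors above. I expect the main obstacle to be the \emph{uniformity in $u$}: because the $n$ initializations $\what{\sigma}_u$ share the tensor $\mbf{A}$, their good events are not independent, and one must argue the concentration bound per $u$ is strong enough (polynomially small beyond $n^{-(1+\delta)}$) that a crude union bound over $u$ still closes — handling the induced dependence while keeping the bias estimate uniform over all $\kappa_d$ relation classes simultaneously is the delicate part.
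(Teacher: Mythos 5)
Your bias/fluctuation decomposition and your accounting of the corruption from misclassified nodes (an $O(k\gamma)$ relative effect) do match the skeleton of the paper's proof, but there is a genuine gap at the heart of the fluctuation step: you assert that the Bernoulli entries being averaged are independent of $\what{\sigma}_u$ ``because the latter is computed from $\mbf{A}_{-u}$,'' and that the edges entering $\what{p}_i$ have ``status not used in forming $\what{\sigma}_u$.'' This is false. The sample mean $\what{p}_i$ sums $\mrm{A}_{\bm{l}}$ over tuples $\bm{l}$ with $\bm{l}\overset{\what{\sigma}_u}{\sim}r_i$; since $\what{\sigma}_u$ labels only the nodes other than $u$, every such tuple lies inside $[n]\setminus\{u\}$, i.e.\ these are exactly entries of $\mbf{A}_{-u}$ --- the very data from which the spectral initialization computed $\what{\sigma}_u$ (via the hypergraph Laplacian, which uses all hyperedges of $\mbf{A}_{-u}$). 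Conditioning on $\what{\sigma}_u$ therefore biases the summands themselves: a clustering algorithm preferentially groups nodes that happen to have many hyperedges among them, so the conditional law of $\mrm{A}_{\bm{l}}$ given $\{\bm{l}\overset{\what{\sigma}_u}{\sim}r_i\}$ is not $\Ber(p_i)$, and a direct Bernstein/Chernoff bound on the conditional sample mean is not justified.

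The paper closes this hole with uniformity rather than independence: on the good event $E_u$, the estimated community $\wtild{C}_i^u$ is \emph{some} deterministic set satisfying the cardinality constraint \eqref{Comcond}, and there are at most $\exp\lp C_1\gamma n\log\frac{1}{\gamma}\rp$ such sets (inequality \eqref{UNION}). For each \emph{fixed} candidate set $C_i^{\prime}$ the edges inside it are genuinely independent Bernoullis, Bernstein's inequality applies, and --- crucially --- the deviation level $t$ is inflated so that the per-configuration failure probability is $\exp\lp-C_1\gamma n\log\gamma^{-1}\rp n^{-(3+\delta)}$, small enough to absorb the combinatorial entropy after the union bound over all configurations. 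This is where the hypothesis $\gamma=o\lp\frac{1}{k\log k}\rp$ (giving $k\gamma\log\gamma^{-1}=O(1)$) actually earns its keep, and it has no counterpart in your argument. Two smaller points: the lemma assumes only \eqref{eq:gamma_original}, so your appeal to \eqref{eq:gamma_further} is out of place here (that condition belongs to Lemma~\ref{lma:localMLE_risk}); and the cross-$u$ dependence you single out as the delicate part is in fact handled by a crude union bound over $u$, since each per-$u$ bound is $n^{-(3+\delta)}$ --- the real delicacy is the within-$u$ dependence that your proposal assumes away.
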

\noindent
Based on \autoref{lma:param_estim}, the next lemma shows that the local MLE method \eqref{eq:local_MLE} is able to achieve a risk that decays exponentially fast.

\begin{lemma}
\label{lma:localMLE_risk}
Suppose $m_{r_i r_j}I_{p_i p_j}\ra\infty$ as $n\ra\infty$ for each $(i,j)$ pair such that $(r_i,r_j)\in\mcal{N}_d$, and either $k$ is a constant or \eqref{eq:main_condi_order} holds. If there are two sequences $\gamma_n = o\lp\frac{1}{k}\rp$ and $\zeta_n^{\prime}=o(1)$, constants $C,\delta>0$, and permutations $\lbp\pi_u\rbp_{u=1}^n \subset \mcal{S}_k$ such that
\begin{equation*}
\adjustlimits\inf_{(\mbf{B},\sigma)\in\Theta_d^0}\min_{u\in[n]} \Pr_{\sigma}\Big\{ \loss_0((\what{\sigma}_u)_{\pi_u},\sigma)\leq\gamma_n, \ \big|\what{\mrm{B}}_{\bm{s}}^u-\mrm{B}_{\bm{s}_{\pi}}\big| \leq \zeta_n^{\prime} \max_{i<j:(r_i,r_j)\in\mcal{N}_d}(p_i-p_j) \Big\} \geq 1-Cn^{-(1+\delta)}.
\end{equation*}
Then for the local maximum likelihood estimator $\what{\sigma}_u(u)$ \eqref{eq:local_MLE}, there exists a vanishing sequence $\zeta_n^{''}\ra0$ such that
\begin{equation*}
\adjustlimits\sup_{(\mbf{B},\sigma)\in\Theta_d^0}\max_{u\in[n]} \Pr_{\sigma}\Big\{ (\what{\sigma}_u(u))_{\pi_u} \neq \sigma(u)\Big\} \leq (k-1)\cdot \exp\bigg( -(1-\zeta_n^{''}) \sum_{i<j:(r_i,r_j)\in\mcal{N}_d} m_{r_i r_j}I_{p_i p_j} \bigg) + Cn^{-(1+\delta)}.
\end{equation*}
\end{lemma}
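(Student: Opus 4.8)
The plan is to establish the per-node bound by a leave-one-out large-deviation analysis of the local likelihood ratio, after discarding the low-probability events on which either the initialization or the parameter estimation fails. First I would condition on the good event $E_u$ supplied by the hypothesis, namely $\{\loss_0((\what{\sigma}_u)_{\pi_u},\sigma)\le\gamma_n\}$ intersected with the parameter-accuracy event $\{|\what{\mrm{B}}_{\bm{s}}^u-\mrm{B}_{\bm{s}_{\pi}}|\le\zeta_n'\max_{i<j:(r_i,r_j)\in\mcal{N}_d}(p_i-p_j)\}$; on $E_u^c$ I bound the misclassification probability trivially by $1$, which produces the additive $Cn^{-(1+\delta)}$ term. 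Because $\what{\sigma}_u$ is generated by running $\msf{Alg}_{init}$ on $\mbf{A}_{-u}$, it is independent of the family $\{\mrm{A}_{\bm{l}}:l_1=u\}$, so on $E_u$ these Bernoulli variables remain independent with their true parameters; this leave-one-out independence is what keeps the ensuing concentration argument clean. Writing $a=\sigma(u)$ for the aligned true label, an error forces $\what{L}_u(\what{\sigma}_u,b;\mbf{A})\ge\what{L}_u(\what{\sigma}_u,a;\mbf{A})$ for some competing $b\neq a$, and a union bound over the $k-1$ choices of $b$ yields the prefactor $(k-1)$; it then remains to bound $\Pr\{\Delta_b\ge0\}$, where $\Delta_b\eqDef\what{L}_u(\what{\sigma}_u,b;\mbf{A})-\what{L}_u(\what{\sigma}_u,a;\mbf{A})$ is the local log-likelihood ratio.

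Second I would expand $\Delta_b$ as a sum over hyperedges $\bm{l}$ with $l_1=u$ of the increments $\mrm{A}_{\bm{l}}\log\frac{\what{p}_j}{\what{p}_i}+(1-\mrm{A}_{\bm{l}})\log\frac{1-\what{p}_j}{1-\what{p}_i}$, where $r_i$ and $r_j$ are the relations of $\bm{l}$ when $u$ carries label $a$ and label $b$ respectively. The decisive observation is that toggling a single coordinate of the label vector shifts the histogram of $\bm{l}$ by exactly one unit, so the induced pair is always a neighbor, $(r_i,r_j)\in\mcal{N}_d$, while edges with unchanged relation contribute zero. Applying the Chernoff--Markov bound at the optimal exponent $\tfrac12$ and invoking independence, $\Pr\{\Delta_b\ge0\}\le\E\,e^{\Delta_b/2}=\prod_{\bm{l}}\E\,e^{(\cdot)/2}$, and for each toggled edge the moment generating function evaluates to $\sqrt{\what{p}_i\what{p}_j}+\sqrt{(1-\what{p}_i)(1-\what{p}_j)}=e^{-I_{\what{p}_i\what{p}_j}/2}$; this is exactly where the order-$1/2$ R\'{e}nyi divergence \eqref{eq:Renyi_div_Ber} surfaces as the natural rate. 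Invoking the parameter-accuracy guarantee on $E_u$, I would then replace $I_{\what{p}_i\what{p}_j}$ by $(1-o(1))I_{p_i p_j}$.

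Third I would assemble the exponent by counting toggled edges per neighboring pair. Grouping the edges through $u$ by the pair $(r_i,r_j)$ they realize and using that in $\Theta_d^0$ communities are equal-sized up to $\eta$, the ``forward'' edges (those made less concentrated by the move $a\to b$) number $\asymp m_{r_i r_j}$ by the very size constraints that make $m_{r_i r_j}$ independent of the base assignment, while the symmetric ``backward'' edges supply another $\asymp m_{r_i r_j}$ carrying the same divergence $I_{p_i p_j}$; the resulting factor $2$ cancels the $\tfrac12$ from the Chernoff step and gives $\Pr\{\Delta_b\ge0\}\le\exp(-(1-\zeta_n'')\sum_{i<j:(r_i,r_j)\in\mcal{N}_d}m_{r_i r_j}I_{p_i p_j})$. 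In this step I must also discount the edges whose relation is corrupted because one of the remaining $d-1$ nodes lies among the at most $\gamma_n n$ vertices mislabeled by $\what{\sigma}_u$; their number is a $\gamma_n$-fraction of the edges through $u$, and together with the parameter error they are absorbable into the slack $\zeta_n''$ under the hypotheses $m_{r_i r_j}I_{p_i p_j}\to\infty$ and either constant $k$ or \eqref{eq:main_condi_order}. The latter is tuned so that the worst-case per-node perturbation, which scales with $k^d\max_{(i,j):(r_i,r_j)\in\mcal{N}_d}\frac{p_1-p_{\kappa_d}}{p_i-p_j}$, is negligible against the main exponent.

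The step I expect to be the main obstacle is this last perturbation accounting: controlling, uniformly over $u$, over the competing label $b$, and over $(\mbf{B},\sigma)\in\Theta_d^0$, the combined effect of the $\gamma_n$-fraction of mislabeled neighbors (which can flip an edge's relation and thereby inject spurious terms of either sign into $\Delta_b$) and of the substitution of $\what{\bm{p}}$ for $\bm{p}$ in every increment, and showing that both degrade the clean exponent by only the vanishing relative factor $\zeta_n''$. The companion difficulty is pinning the combinatorial count to $\asymp m_{r_i r_j}$ in each direction and verifying its independence of the specific move $a\to b$, inherited from the community-size constraints, since it is exactly this count that ties the exponent to the quantity appearing in the Main Theorem.
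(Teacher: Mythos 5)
Your proposal is correct and follows essentially the same route as the paper's proof: a union bound over the $k-1$ competing labels, rewriting the local log-likelihood ratio as a sum over toggled hyperedges whose relation pairs necessarily lie in $\mcal{N}_d$, a Chernoff bound at exponent $1/2$ producing the R\'{e}nyi divergence $I_{p_i p_j}$ per toggled edge with counts $\asymp m_{r_i r_j}$ in each direction, and absorption of the $O(k\gamma_n)$-fraction of relation-corrupted edges together with the $\what{\bm{p}}$-for-$\bm{p}$ substitution into the $(1-\zeta_n'')$ slack via \eqref{eq:main_condi_order} or constant $k$ --- exactly the paper's ``Part 1''/``Part 2'' decomposition. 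The only imprecision is that the per-edge MGF equals $e^{-I_{\what{p}_i \what{p}_j}/2}$ only if the Bernoulli parameter were also the estimated one, whereas the expectation is under $\Ber(p_i)$ with the tilt built from $\what{\bm{p}}$; since you explicitly route this mismatch into the final perturbation accounting (as the paper does), this is not a gap.
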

\noindent
Finally, we justify the validity of using \eqref{eq:css_method} as a consensus majority voting with the following lemma.

\begin{lemma}[Lemma~4 in \cite{GaoMa_15}]
\label{lma:consensus}
For any labeling functions $\sigma$ and $\sigma^{\prime}$: $[n]\ra[k]$, if for some constant $C\geq1$,
\begin{equation*}
\min_{t\in[k]}\big|\lbp u\mid\sigma(u)=t \rbp\big| \geq \frac{n}{Ck}, \; \min_{t\in[k]}\big|\lbp u\mid\sigma^{\prime}(u)=t \rbp\big| \geq \frac{n}{Ck}, \text{ and } \min_{\pi\in\mcal{S}_k} \loss_0\big(\sigma^{\prime}_{\pi}\big) < \frac{1}{Ck}.
\end{equation*}
Define map $\xi:[k]\ra[k]$ as
\begin{equation*}
\xi(i) = \argmax_{t\in[k]} \big| \{u\mid\sigma(u)=t\} \cap \{u\mid\sigma^{\prime}(u)=t\} \big|
\end{equation*}
for each $i\in[k]$. Then $\xi\in\mcal{S}_k$ and $\loss_0(\xi(\sigma^{\prime}),\sigma)$ is equal to $\min_{\pi\in\mcal{S}_k}\loss_0(\sigma^{\prime}_{\pi},\sigma)$
\end{lemma}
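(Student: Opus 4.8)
\emph{Proof plan.} This statement is purely combinatorial: it asserts that the greedy alignment map $\xi$ recovers the \emph{optimal} label permutation between the two nearly-identical clusterings $\sigma$ and $\sigma^{\prime}$. The plan is therefore to first fix a minimizing permutation $\pi^{\ast}\in\mcal{S}_k$ attaining $\min_{\pi\in\mcal{S}_k}\loss_0(\sigma^{\prime}_{\pi},\sigma)$, write $D\eqDef n\cdot\loss_0(\sigma^{\prime}_{\pi^{\ast}},\sigma)$ for the resulting Hamming distance (so the third hypothesis reads $D<n/(Ck)$), and then show that $\xi$ coincides with $\pi^{\ast}$. For each source label $i\in[k]$ I set $S_i\eqDef\{u\mid\sigma^{\prime}(u)=i\}$, which the second hypothesis guarantees has size $|S_i|\geq n/(Ck)$.

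The heart of the argument is a \emph{dominant-block} estimate. Fixing $i$ and partitioning $S_i$ according to the value of $\sigma$, the block indexed by $\pi^{\ast}(i)$, namely $S_i\cap\{u\mid\sigma(u)=\pi^{\ast}(i)\}$, is exactly the set of nodes of $S_i$ on which $\sigma^{\prime}_{\pi^{\ast}}$ agrees with $\sigma$, so it has size at least $|S_i|-D$; every other block $S_i\cap\{u\mid\sigma(u)=t\}$ with $t\neq\pi^{\ast}(i)$ consists entirely of disagreement nodes and hence has size at most $D$. Once the strict inequality $|S_i|-D>D$ is in force, the $\argmax$ defining $\xi(i)$ is attained uniquely at $t=\pi^{\ast}(i)$. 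Carrying this out for every $i\in[k]$ yields $\xi=\pi^{\ast}$, whence $\xi\in\mcal{S}_k$, and since $\pi^{\ast}$ was chosen as the minimizer we obtain $\loss_0(\xi(\sigma^{\prime}),\sigma)=\min_{\pi\in\mcal{S}_k}\loss_0(\sigma^{\prime}_{\pi},\sigma)$, which is the claim.

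The main obstacle is establishing the strict margin $|S_i|>2D$ (equivalently $|S_i|-D>D$) for every $i$, which is the single place where all three hypotheses are consumed: both the balance lower bounds on the cluster profiles of $\sigma$ and $\sigma^{\prime}$ and the smallness of $\min_{\pi}\loss_0$ are needed. The bare bounds $|S_i|\geq n/(Ck)$ and $D<n/(Ck)$ only give $|S_i|>D$, so I would recover the missing factor of two by tracking the constant $C\geq1$ explicitly, the point being that the hypotheses are calibrated precisely so that $2D<n/(Ck)\leq|S_i|$ holds. I would note that both conclusions funnel through this one margin: it simultaneously forces the $\argmax$ to be unique at $\pi^{\ast}(i)$ and rules out collisions $\xi(i)=\xi(j)=t$ with $i\neq j$ (since then $\pi^{\ast}(i)\neq\pi^{\ast}(j)$ makes at least one of them differ from $t$, contradicting the dominant-block bound at that index), thereby delivering $\xi\in\mcal{S}_k$ and, via $\xi=\pi^{\ast}$, its optimality.
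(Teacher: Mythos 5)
Your high-level strategy---fix a minimizing permutation $\pi^{\ast}$ and show the greedy map $\xi$ reproduces it by a dominant-block comparison---is the right shape of argument (the paper itself offers no proof, deferring to Lemma~4 of \cite{GaoMa_15}, whose proof has this same outline). But the step you yourself flag as the ``main obstacle'' is a genuine gap, and your proposed resolution does not work. Writing $D \eqDef n\min_{\pi}\loss_0(\sigma^{\prime}_{\pi},\sigma)$ and $D_i$ for the number of disagreement nodes inside $S_i$, your comparison needs $|S_i| > 2D_i$, and you claim the hypotheses are ``calibrated precisely so that $2D < n/(Ck) \leq |S_i|$.'' They are not: the third hypothesis gives exactly $D < n/(Ck)$ and nothing stronger, and the constant $C\geq 1$ enters both bounds symmetrically, so no amount of tracking it manufactures the factor of two. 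Nothing forbids, say, $D_i = D = 0.9\, n/(Ck)$ with $|S_i| = n/(Ck)$, in which case $|S_i|-D_i > D_i$ fails and your argument stalls. A telling symptom is that your proof never uses the first hypothesis, the balance condition on $\sigma$; yet that condition is indispensable---if it is dropped the conclusion can fail outright (take $k=2$, $C=1$, $\sigma$ split $98/2$, $\sigma^{\prime}$ split $50/50$, with overlap counts $|S_1\cap A_1|=50$, $|S_1\cap A_2|=0$, $|S_2\cap A_1|=48$, $|S_2\cap A_2|=2$, where $A_t \eqDef \{u \mid \sigma(u)=t\}$: the other two hypotheses hold, yet $\xi(1)=\xi(2)=1$, so $\xi\notin\mcal{S}_k$).

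The lemma is nevertheless true with these constants, and the missing ingredient is precisely that unused hypothesis, applied through a disjoint row-plus-column disagreement count rather than a row count alone. Suppose for some $i$ and some $t\neq\pi^{\ast}(i)$ one had $|S_i\cap A_t| \geq |S_i\cap A_{\pi^{\ast}(i)}| \eqDef a$. The nodes of $S_i\setminus A_{\pi^{\ast}(i)}$ are all disagreements under $\pi^{\ast}$ (they satisfy $\sigma^{\prime}=i$ but $\sigma\neq\pi^{\ast}(i)$), and there are at least $|S_i\cap A_t|\geq a$ of them; the nodes of $A_{\pi^{\ast}(i)}\setminus S_i$ are also all disagreements (they satisfy $\sigma=\pi^{\ast}(i)$ but $\sigma^{\prime}\neq i$, hence $\pi^{\ast}(\sigma^{\prime})\neq\sigma$ by injectivity of $\pi^{\ast}$), and by the balance of $\sigma$ there are at least $n/(Ck)-a$ of them. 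The two sets are disjoint, since one lies in $\{\sigma^{\prime}=i\}$ and the other in $\{\sigma^{\prime}\neq i\}$, so $D \geq a + \lp n/(Ck)-a\rp = n/(Ck)$, contradicting $D < n/(Ck)$. Hence every off-diagonal block of $S_i$ is strictly smaller than the diagonal one, $\xi(i)=\pi^{\ast}(i)$ for every $i$, and both conclusions follow exactly as in the rest of your write-up; note this argument needs no factor-of-two calibration at all. (A minor point in your favor: you silently and correctly repaired the typo in the statement---the second set in the definition of $\xi(i)$ must be $\{u\mid\sigma^{\prime}(u)=i\}$, since as printed it does not depend on $i$.)
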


We are ready to present the proof of Theorem~\ref{thm:theo_refine}.

\begin{proof}[Proof of Theorem~\ref{thm:theo_refine}]
Fix any $(\mbf{B},\sigma)$ in $\Theta_d^0$. Let $C_0$, $\delta>0$ be constants and $\gamma_n$ be a positive sequence in Condition~\ref{con:1st_step}. For each $u\in[n]$, there exists some $\pi_u\in\mcal{S}_k$ so that
\begin{equation*}
\Pr_{\sigma}\lbp \loss_{0}((\what{\sigma}_u)_{\pi_u},\sigma) \leq \gamma_n\rbp \geq 1 - C_0n^{-(1+\delta)}.
\end{equation*}
In consequence,
\begin{align*}
\E_{\sigma}\loss_0(\what{\sigma},\sigma) &= \E_{\sigma}\bigg[ \frac{1}{n}\sum_{u\in[n]}\Indc{(\what{\sigma}_u(u))_{\pi^{\mrm{CSS}}} \neq \sigma(u)} \bigg] \\
 &=\frac{1}{n}\sum_{u\in[n]} \Pr_{\sigma}\lbp (\what{\sigma}_{u}(u))_{\pi^{\mrm{CSS}}} \neq \sigma(u) \rbp \\
 &\leq \frac{1}{n}\sum_{u\in[n]} \Pr_{\sigma}\lbp (\what{\sigma}_{u}(u))_{\pi_{u}}\neq\sigma(u) \rbp + \Pr_{\sigma}\lbp \pi^{\mrm{CSS}}\neq\pi_{u} \rbp
\end{align*}
where $\pi^{\mrm{CSS}}$ is the consensus permutation \eqref{eq:css_method} in Algorithm~\ref{alg:refine}. By Lemma~\ref{lma:localMLE_risk}, for any $(\mbf{B},\sigma)\in\Theta_d^0$ and $u\in[n]$,
\begin{align*}
 &\Pr_{\sigma}\lbp (\what{\sigma}_u(u))_{\pi_u}\neq\sigma(u) \rbp \leq (k-1)\cdot \exp\bigg( -(1-\zeta_n^{''}) \sum_{i<j:(r_i,r_j)\in\mcal{N}_d} m_{r_i r_j}I_{p_i p_j} \bigg) + Cn^{-(1+\delta)}
\end{align*}
for some constants $C,\delta>0$ and $\zeta_n^{''}\ra0$. Moreover, \autoref{lma:consensus} implies that $\Pr\lbp \pi^{\mrm{CSS}}\neq\pi_u \rbp \leq C^{\prime}n^{-(1+\delta)}$. Together,
\begin{align*}
\sup_{(\mbf{B},\sigma)\in\Theta_d^0} \E_{\sigma}\loss_0(\what{\sigma},\sigma) \leq (k-1)\cdot \exp\bigg( -(1-\zeta_n^{''}) \sum_{i<j:(r_i,r_j)\in\mcal{N}_d} m_{r_i r_j}I_{p_i p_j} \bigg) + C^{''}n^{-(1+\delta)}.
\end{align*}
Since we assume that $\lim_{n\ra\infty}\sum_{i<j:(r_i,r_j)\in\mcal{N}_d} m_{r_i r_j}I_{p_i p_j} \ra\infty$, we further have
\begin{align*}
\sup_{(\mbf{B},\sigma)\in\Theta_d^0} \E_{\sigma}\loss_0(\what{\sigma},\sigma) &\leq \exp\bigg( -(1-\zeta_n^{'''}) \sum_{i<j:(r_i,r_j)\in\mcal{N}_d} m_{r_i r_j}I_{p_i p_j} \bigg) + C^{''}n^{-(1+\delta)} \\
 &= \circnum{1} + \circnum{2}
\end{align*}
for some $\zeta_n^{'''}\ra0$. If \circnum{1} decays faster than \circnum{2}, then $\Risk_d^{\ast} = o(\frac{1}{n^{1+\delta}}) < \frac{1}{n}$ for sufficiently large $n$. Therefore, $\Risk_d^{\ast} = 0$ and the corresponding parameters satisfy the criterion of exact recovery. On the other hand, if \circnum{1} dominates \circnum{2}, then there exists $\zeta_n \ra 0$ such that
\begin{equation*}
\Risk_d^{\ast} \leq \exp\bigg( -(1-\zeta_n)\sum_{i<j:(r_i,r_j)\in\mcal{N}_d} m_{r_i r_j}I_{p_i p_j} \bigg).
\end{equation*}
In either case, the claimed upper bound is achieved.
\end{proof}


The main difficulty of our proof above lies in proving \autoref{lma:param_estim} and \autoref{lma:localMLE_risk}. Compared to the graph {\SBM} case \cite{GaoMa_15}, we are dealing with more kinds of random variables and more kinds of relations. Since the original analysis is already exhausted, we believe that our generalization is not a trivial work.
\autoref{thm:theo_refine} implies that as long as we have a good initialization that achieves Condition~\ref{con:1st_step}, we could apply our refinement algorithm to make the risk decay exponentially fast at the desired rate.
This is because once Condition~\ref{con:1st_step} is satisfied, we could find the correct permutation by the consensus step and correctly estimate the parameters. Also, since Condition~\ref{con:1st_step} guarantees that we will only have $o(1)$ proportion of misclassified nodes, it is not hard to see that the local MLE method will work well.

\subsection{Spectral Clustering}
\label{subsec:theo_specclus}

Combined with Lemma~\ref{HOMOEIG4anyD}, we have the following corollary to Theorem~\ref{SPEC} in terms of $m_{r_i r_j}$ and $I_{p_i p_j}$ only.

\begin{corollary}
\label{cor:theo_spectral}
Suppose $\bm{p}=\frac{1}{n^{d-1}}(a_1,\cdots,a_{\kappa_d})=\Omega\lp\frac{1}{n^{d-1}}\rp$ and $a_i\approx a_j \;\forall i,j=1,\ldots,\kappa_d$. If
\begin{equation}
\label{eq:spec_condi}
\frac{\sum_{i<j:(r_i,r_j)\in\mcal{N}_d} m_{r_i r_j} I_{p_i p_j}}{k^d} \geq \frac{1}{c}
\end{equation}
for some sufficiently small $c\in(0,1)$. Then, with the estimator $\what{\sigma}_1$ (Algorithm~\ref{alg:spec_init}), corresponding to any constant $C^{\prime}>0$ there exists a constant $C=C(C^{\prime})>0$ such that
\begin{equation}
\label{eq:spec_result}
\sup_{(\mbf{B},\sigma)\in\Theta_d^0} \Pr_{\sigma}\bigg\{ \loss(\what{\sigma}_1,\sigma) \leq C\frac{k^{d-1}}{\sum_{i<j:(r_i,r_j)\in\mcal{N}_d} m_{r_i r_j} I_{p_i p_j}} \bigg\} \geq 1 - n^{-C^{\prime}}.
\end{equation}
\end{corollary}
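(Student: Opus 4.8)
The plan is to obtain Corollary~\ref{cor:theo_spectral} as a purely algebraic consequence of Theorem~\ref{SPEC} and Lemma~\ref{HOMOEIG4anyD}, without re-opening the spectral analysis. Two things must be done: (i) verify that the spectral-gap hypothesis \eqref{assump22} of Theorem~\ref{SPEC} is implied by the cleaner condition \eqref{eq:spec_condi}, so that the theorem applies and its high-probability guarantee $1-n^{-C^{\prime}}$ transfers verbatim; and (ii) rewrite the conclusion $\loss(\what{\sigma},\sigma)\le C a_1/\lambda_k^2$ in terms of the weighted divergence sum. Throughout, all sums are understood to run over $i<j$ with $(r_i,r_j)\in\mcal{N}_d$, and I write $m_{\min}$ for the minimal coefficient $m_{r_i r_j}$ over these pairs.

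First I would translate the eigenvalue lower bound of Lemma~\ref{HOMOEIG4anyD}. Since $\lambda_k\gtrsim\sum m_{r_i r_j}(p_i-p_j)$ with nonnegative summands, squaring and discarding the cross terms yields the weaker-but-sufficient bound $\lambda_k^2\gtrsim\sum m_{r_i r_j}^2(p_i-p_j)^2$, which has the advantage of keeping the divergences at first power. Writing $a_1=n^{d-1}p_1$ and invoking the Rényi estimate $I_{p_i p_j}\approx(p_i-p_j)^2/p_1$ — legitimate because $a_i\approx a_j$ forces all $p_i$ into a common order — I get
\[
\frac{a_1}{\lambda_k^2}\;\lesssim\;\frac{n^{d-1}p_1}{\sum m_{r_i r_j}^2(p_i-p_j)^2}\;\approx\;\frac{n^{d-1}}{\sum m_{r_i r_j}^2 I_{p_i p_j}}.
\]

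To pass from $\sum m_{r_i r_j}^2 I_{p_i p_j}$ to the target $\sum m_{r_i r_j} I_{p_i p_j}$, I would factor out the smallest coefficient, $\sum m_{r_i r_j}^2 I_{p_i p_j}\ge m_{\min}\sum m_{r_i r_j} I_{p_i p_j}$, and use the counting estimate $m_{\min}\asymp\binom{n^{\prime}}{d-1}\asymp n^{d-1}/k^{d-1}$ recorded in the remark after the Main Theorem. Substituting gives $a_1/\lambda_k^2\lesssim k^{d-1}/\sum m_{r_i r_j} I_{p_i p_j}$, which is the claimed rate. The identical chain, multiplied by $k$, shows $k a_1/\lambda_k^2\lesssim k^{d}/\sum m_{r_i r_j} I_{p_i p_j}\le c$ under \eqref{eq:spec_condi}; hence choosing $c$ sufficiently small forces this below the threshold $C_1$ of \eqref{assump22}, so Theorem~\ref{SPEC} is in force and its constant $C=C(C^{\prime},C_2,\mu)$ — with $C_2,\mu$ fixed — becomes the advertised $C=C(C^{\prime})$.

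The main obstacle I anticipate is bookkeeping rather than any new idea: I must ensure every $\approx$ and $\lesssim$ step is uniform over the parameter space $\Theta_d^0$ and over all neighboring pairs in $\mcal{N}_d$, so that a single order constant can be extracted. Concretely, the approximation $I_{p_i p_j}\approx(p_i-p_j)^2/p_1$ and the counting bounds $n^{d-1}/k^{d-1}\lesssim m_{r_i r_j}\lesssim n^{d-1}/k$ must hold with constants independent of $n$, and the lossy step $\sum m_{r_i r_j}^2 I_{p_i p_j}\ge m_{\min}\sum m_{r_i r_j} I_{p_i p_j}$ must be checked not to inflate the exponent beyond the stated factor $k^{d-1}$. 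This is precisely where the homogeneity assumption $a_i\approx a_j$ is indispensable, since it is what collapses the per-pair probability scales into a single $p_1$ and makes the order-constant extraction possible.
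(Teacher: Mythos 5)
Your proposal is correct and follows essentially the same route as the paper: the paper derives this corollary (implicitly, inside the proof of Theorem~\ref{thm:theo_upper}) by combining Theorem~\ref{SPEC} with Lemma~\ref{HOMOEIG4anyD}, squaring the eigenvalue bound and dropping cross terms to get $\lambda_k^2\gtrsim\sum m_{r_i r_j}^2(p_i-p_j)^2$, invoking $I_{p_i p_j}\approx(p_i-p_j)^2/p_1$, and using the counting bound $n^{d-1}/k^{d-1}\lesssim m_{r_i r_j}\lesssim n^{d-1}/k$. Your explicit check that \eqref{eq:spec_condi} implies the spectral-gap hypothesis \eqref{assump22} (via $k a_1/\lambda_k^2\lesssim k^d/\sum m_{r_i r_j}I_{p_i p_j}\leq C_0 c$ with $c$ small) is exactly the bookkeeping the paper leaves implicit, so nothing is missing.
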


\subsubsection{Comparison with \cite{GhoshdastidarDukkipati_17}}
\label{subsubsec:theo_comp}
In \cite{GhoshdastidarDukkipati_17}, the authors consider the "balanced partitions in uniform hypergraph" as a special case. In order to have a fair comparison, we would focus on the performance on this sub-parameter space. Essentially, it is the homogeneous and approximatedly equal-sized $\Theta_d^0$ that we consider, except that the authors only distinguish out the \emph{all-same} community relation (associated with $\Ber(p)$) from the rest of all possible community relations in $\mcal{N}_d$ (associated with $\Ber(q)$). We denote this parameter space as $\Theta_d^s$. The consistency result derived for $\Theta_d^s$ with the spectral hypergraph partition algorithm $\what{\sigma}_{\mrm{GD}}$ proposed therein can be summarized as follows: If
\begin{equation}
\label{eq:GD_condi}
p \geq C\frac{k^{2d-1}(\log n)^2}{n^{d-1}}
\end{equation}
for some absolute constant $C>0$. Then,
\begin{equation}
\label{eq:GD_result}
\sup_{(\mbf{B},\sigma)\in\Theta_d^s}\Pr_{\sigma}\bigg\{ \Risk_{\sigma}(\what{\sigma}_{\mrm{GD}}) = O\lp \frac{1}{k\log n}\rp \bigg\} \geq 1 - O\lp (\log n)^{-\frac{1}{4}}\rp.
\end{equation}
\noindent
On the other hand, the theoretical guarantee \autoref{cor:theo_spectral} for the spectral clustering algorithm $\what{\sigma}_1$ specializes to
\begin{corollary}
\label{cor:1st_step_comp}
Suppose $\bm{p}=(p,q,\ldots,q)$ and $p\approx q$. If
\begin{equation}
\label{eq:comp_condi}
p \geq \frac{1}{n^{d-1}}\cdot C\frac{k^{2d+1}}{n^2}
\end{equation}
for some absolute constant $C>0$. Then,
\begin{equation}
\label{eq:comp_result_diff}
\sup_{(\mbf{B},\sigma)\in\Theta_d^s}\Pr_{\sigma}\bigg\{ \Risk_{\sigma}(\what{\sigma}_1) = O\Big( \frac{1}{k}\Big) \bigg\} \geq 1 - n^{-O(1)}
\end{equation}
\end{corollary}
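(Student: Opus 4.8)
The plan is to obtain Corollary~\ref{cor:1st_step_comp} as a direct specialization of Corollary~\ref{cor:theo_spectral} to the two-parameter profile $\bm{p}=(p,q,\ldots,q)$. The first thing I would observe is that the weighted divergence sum collapses to a single term. For a neighboring pair $(r_i,r_j)\in\mcal{N}_d$ the divergence $I_{p_i p_j}$ vanishes whenever $p_i=p_j$, and under $\bm{p}=(p,q,\ldots,q)$ the only probability differing from $q$ is $p_1=p$. Since $\bm{h}_1=(d,0,\ldots,0)$ has a unique $\ell_1$-neighbor, namely $\bm{h}_2=(d-1,1,0,\ldots,0)$, the pair $(r_1,r_2)$ is the only element of $\mcal{N}_d$ with $p_i\neq p_j$. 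Hence
\[
\sum_{i<j:(r_i,r_j)\in\mcal{N}_d} m_{r_i r_j}\, I_{p_i p_j} = m_{r_1 r_2}\, I_{pq}.
\]

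Next I would feed this single term into Corollary~\ref{cor:theo_spectral}. Condition~\eqref{eq:spec_condi} then reads $m_{r_1 r_2} I_{pq}\geq k^d/c$, and the guarantee~\eqref{eq:spec_result} becomes $\loss(\what{\sigma}_1,\sigma)\leq C k^{d-1}/(m_{r_1 r_2} I_{pq})$ with probability at least $1-n^{-C^{\prime}}$. It remains to convert both the hypothesis and the conclusion into statements about $p$ alone. For the combinatorial coefficient I would use the estimate from Example~\ref{ex:mrirj} and the Remark following the Main Theorem, $m_{r_1 r_2}\asymp\binom{n^{\prime}}{d-1}\approx (n^{\prime})^{d-1}/(d-1)!$ with $n^{\prime}=\lfloor n/k\rfloor$; for the divergence I would invoke the approximation $I_{pq}\approx (a_1-a_2)^2/(n^{d-1}a_1)$ recorded in the same Remark, which under the same-order assumption $p\approx q$ gives $I_{pq}=\Theta(p)$.

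Plugging these estimates in, the abstract condition $m_{r_1 r_2} I_{pq}\geq k^d/c$ translates into a lower bound on $p$ of the polynomial form in~\eqref{eq:comp_condi}, and on that event the risk bound becomes $\loss(\what{\sigma}_1,\sigma)\leq C k^{d-1}/(k^d/c)=O(1/k)$, which is exactly~\eqref{eq:comp_result_diff}; the probability $1-n^{-O(1)}$ is inherited verbatim from Corollary~\ref{cor:theo_spectral} because the exponent $C^{\prime}$ there is arbitrary. I would also note in passing that the specialized space $\Theta_d^s$ trivially satisfies the regularity hypotheses required by Corollary~\ref{cor:theo_spectral} (namely $a_i\approx a_j$ and the $\Omega(1/n^{d-1})$ scaling), since only the two same-order values $p\approx q$ occur.

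The hard part will be the bookkeeping of the powers of $k$ and $n$ rather than any conceptual step: one must track $m_{r_1 r_2}$ and $I_{pq}$ simultaneously to check that the explicit threshold on $p$ in~\eqref{eq:comp_condi} is genuinely sufficient for~\eqref{eq:spec_condi}, and in carrying this out be careful that the assumption $p\approx q$ is used consistently so that $I_{pq}$ is controlled through $p$ itself and not through the potentially much smaller gap $p-q$. Equivalently, one may route the argument through Theorem~\ref{SPEC} and Lemma~\ref{HOMOEIG4anyD} directly: verify the spectral hypothesis~\eqref{assump22} via $\lambda_k\gtrsim m_{r_1 r_2}(p-q)$ and then bound $a_1/\lambda_k^2$, which should reproduce the same threshold on $p$ and serves as a useful cross-check on the constant and exponent accounting.
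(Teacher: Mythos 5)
Your overall route---specializing Corollary~\ref{cor:theo_spectral} (equivalently, Theorem~\ref{SPEC} together with Lemma~\ref{HOMOEIG4anyD}) to $\Theta_d^s$---is exactly the paper's intended one (the paper gives no detailed derivation, only the remark that the corollary follows ``with some manipulations''), and your first step is correct: since $p_i=q$ for all $i\geq 2$ and $\bm{h}_1=(d,0,\ldots,0)$ has the unique $\ell_1$-neighbor $\bm{h}_2=(d-1,1,0,\ldots,0)$, the weighted sum collapses to the single term $m_{r_1 r_2}I_{pq}$. The gap is precisely in the step you defer as ``bookkeeping of the powers of $k$ and $n$'': carried out, the translation fails by a factor of $(n/k)^2$. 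With $m_{r_1 r_2}\asymp\binom{n^{\prime}}{d-1}\approx (n/k)^{d-1}$ and $I_{pq}\approx (p-q)^2/p = O(p)$, condition \eqref{eq:spec_condi} reads $(n/k)^{d-1}I_{pq}\gtrsim k^d$, which forces $p\gtrsim k^{2d-1}/n^{d-1}$; this is a factor $(n/k)^2=(n^{\prime})^2$ \emph{larger} than the threshold $Ck^{2d+1}/n^{d+1}$ stated in \eqref{eq:comp_condi}. Conversely, at the edge of what \eqref{eq:comp_condi} permits one only has $m_{r_1 r_2}I_{pq}\approx k^{d+2}/n^2\ll k^d$, so \eqref{eq:spec_condi} is not implied, and the guarantee \eqref{eq:spec_result} then yields only $\loss(\what{\sigma}_1,\sigma)=O(n^2/k^3)$, not $O(1/k)$. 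Your proposed cross-check through Theorem~\ref{SPEC} hits the same wall: $\lambda_k\gtrsim m_{r_1 r_2}(p-q)$ gives a risk bound $\lesssim k^{2d-2}/(n^{d-1}p)$, and both this bound being $O(1/k)$ and hypothesis \eqref{assump22} again require $p\gtrsim k^{2d-1}/n^{d-1}$.

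A secondary but real error: you assert $I_{pq}=\Theta(p)$ ``under the same-order assumption $p\approx q$.'' That assumption only yields the upper bound $I_{pq}=O(p)$; the matching lower bound needs the separation $p-q=\Theta(p)$ (implicit in the paper's ``well separated'' convention), and without it the corollary is vacuously false (take $p=q$, so $I_{pq}=0$ while \eqref{eq:comp_condi} can still hold). To be fair, the $(n^{\prime})^2$ discrepancy appears to originate in the paper's own statement---its claimed ``$k^2/n^2$ gain'' over \cite{GhoshdastidarDukkipati_17} rests on it---so what your argument actually establishes, once the separation issue is repaired, is the corollary with \eqref{eq:comp_condi} replaced by $p\geq Ck^{2d-1}/n^{d-1}$. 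It does not establish the statement as written, and the assertion that the threshold in \eqref{eq:comp_condi} ``is genuinely sufficient'' for \eqref{eq:spec_condi} is the missing---and, under your own estimates, false---step.
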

\noindent
under $\Theta_d^s$ with some manipulations that are similar to the derivation of \autoref{HOMOEIG4anyD}. It turns out that we allow the observed hypergraph to be sparser (a lower connecting probability) yet acquire a higher average mismatch ratio compared to the result obtained in \cite{GhoshdastidarDukkipati_17}. However, if we raise the probability parameters $\bm{p}=(p,q,\ldots,q)$ to the same order as in \eqref{eq:GD_condi}, the risk obtained in \eqref{eq:comp_result_diff} will become
\begin{equation}
\label{eq:comp_result_same}
\sup_{(\mbf{B},\sigma)\in\Theta_d^s}\Pr_{\sigma}\lbp \Risk_{\sigma}(\what{\sigma}_1) = O\lp \frac{k}{n^2 \log n}\rp \rbp \geq 1 - n^{-O(1)}.
\end{equation}
That is, a $\frac{k^2}{n^2} = \frac{1}{(n^{\prime})^2}$ gain in terms of the mismatch ratio over \eqref{eq:GD_result}. In either case, we always have a success probability that converges faster to $1$ for the weak consistency performance guarantee.

\subsubsection{Proof of \autoref{SPEC}}
\label{subsubsec:theo_spec_clus_proof}
The proof mainly follows from \cite{GaoMa_15} and the main differences are the lemmas where we extend them for the d-hSBM case. For the sake of completeness, we include the extended proof below.

The proof requires the following two lemmas, the details of which are defered to the appendices.
First, we demonstrate that the trimmed hypergraph Laplacian does not deviate too much from its untrimmed expectation.

\begin{lemma}
\label{SPECLEMMADHSBM}
$\forall C^{\prime}>0$, $\exists \ C>0$ such that
\begin{equation*}
\lnorm \msf{T}_{\tau}(\mcal{L}(\mbf{A}))-\E\mcal{L}(\mbf{A}) \rnorm_{\mrm{op}}\leq C\sqrt{n^{d-1}p_1+1}
\end{equation*}
with probability at least $1-n^{-C^{\prime}}$ uniformly over $\tau\in \lb C_1(n^{d-1}p_1+1),C_2(n^{d-1}p_1+1)\rb$ for some sufficiently large constants $C_1$ and $C_2$.
\end{lemma}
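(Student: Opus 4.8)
The plan is to control the operator-norm deviation of the trimmed hypergraph Laplacian from its untrimmed expectation by splitting the error into two pieces and bounding each separately. Writing $\msf{T}_{\tau}(\mcal{L}(\mbf{A})) - \E\mcal{L}(\mbf{A}) = \big(\msf{T}_{\tau}(\mcal{L}(\mbf{A})) - \mcal{L}(\mbf{A})\big) + \big(\mcal{L}(\mbf{A}) - \E\mcal{L}(\mbf{A})\big)$, I would first argue that the trimming contribution $\msf{T}_{\tau}(\mcal{L}(\mbf{A})) - \mcal{L}(\mbf{A})$ is negligible in operator norm. The trimming only zeroes out coordinates associated with abnormally high-degree nodes, i.e. those in $\{u : d_u \geq \tau\}$ with $\tau$ of order $n^{d-1}p_1$. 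Since the expected degree is $\bar d \approx n^{d-1}p_1$, a Chernoff/Bennett bound shows that in the sparse regime only $O(1)$ nodes (with high probability none, when $\tau = C_2\bar d$ for large $C_2$) exceed the threshold, so the rank of the perturbation and the magnitude of the removed rows are both small enough that its operator norm is $O(\sqrt{n^{d-1}p_1+1})$.

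The heart of the argument is the concentration of the untrimmed centered Laplacian $\mcal{L}(\mbf{A}) - \E\mcal{L}(\mbf{A}) = \big(\mbf{H}\mbf{H}^{\mrm{T}} - \E\mbf{H}\mbf{H}^{\mrm{T}}\big) - \big(\mbf{D} - \E\mbf{D}\big)$. The diagonal degree-deviation term $\mbf{D} - \E\mbf{D}$ is easy: each $d_u - \E d_u$ is a sum of independent Bernoulli hyperedge indicators, so a scalar Bernstein bound plus a union over the $n$ diagonal entries gives $\lnorm \mbf{D} - \E\mbf{D}\rnorm_{\mrm{op}} = O(\sqrt{n^{d-1}p_1 \log n}+\log n)$, which is within the target order after absorbing logarithmic factors into constants (or handled more carefully via the trimming that removes the heavy tail). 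The genuinely new difficulty, relative to the graph case in \cite{GaoMa_15}, is that the entries of $\mbf{H}\mbf{H}^{\mrm{T}}$ are \emph{no longer independent}: the $(u,v)$ entry counts common hyperedges through $u$ and $v$, and a single hyperedge of order $d$ contributes to $\binom{d}{2}$ distinct off-diagonal entries simultaneously. This is exactly the dependency the introduction flags as the main technical obstacle of the spectral part.

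To handle the dependent random matrix, my plan is to follow the strategy signalled in the introduction: decompose $\mbf{H}\mbf{H}^{\mrm{T}} - \E\mbf{H}\mbf{H}^{\mrm{T}}$ as a sum indexed by the realized hyperedges, where each present hyperedge $\msf{e}$ contributes a fixed low-rank (rank $O(d)$) pattern on the $d$ coordinates it touches, and then group the contribution so that the surviving randomness can be written as a sum of \emph{independent} terms across hyperedges. Rather than a single matrix-Bernstein application (which fails due to dependence across entries sharing a hyperedge), I would apply a union bound over the structure: bound the operator norm via its action on an $\eps$-net of the unit sphere, and for a fixed direction $\bm{x}$ express $\bm{x}^{\mrm{T}}(\mbf{H}\mbf{H}^{\mrm{T}} - \E\mbf{H}\mbf{H}^{\mrm{T}})\bm{x}$ as a sum over mutually independent hyperedge indicators of order $n^{d-1}p_1$ terms, each bounded, and concentrate it with a scalar Bernstein inequality. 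The net has $e^{O(n)}$ points, so I need the per-direction tail to beat $e^{-cn}$, which the sparse-regime variance proxy $\approx n^{d-1}p_1 \approx a_1 = \Omega(1)$ just supports after trimming ensures the maximal-degree boundedness needed for the Bernstein variance and range terms. Assembling the net bound, the degree-deviation bound, and the trimming bound, and taking $C$ large in terms of $C^{\prime}, C_1, C_2$, yields the claimed $C\sqrt{n^{d-1}p_1+1}$ deviation with probability at least $1 - n^{-C^{\prime}}$ uniformly over the stated range of $\tau$; the uniformity over $\tau$ follows since the trimmed set $\{u: d_u \geq \tau\}$ is monotone in $\tau$ and changes only at the $O(n)$ realized degree values, so a union bound over those breakpoints suffices. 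The main obstacle throughout is making the independent-decomposition-plus-net argument quantitatively tight enough that the boundedness and variance terms in Bernstein stay at order $\sqrt{n^{d-1}p_1+1}$ despite each hyperedge touching $\binom{d}{2}$ entries — this is where the trimming to degree $O(\bar d)$ is essential, as it caps the per-row $\ell_1$ mass and hence the Bernstein range parameter.
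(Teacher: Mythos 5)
Your plan has two genuine gaps, both concentrated in the sparse regime $n^{d-1}p_1 = a_1 = O(1)$, which is exactly the regime this lemma is designed for.

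First, the opening decomposition
$\msf{T}_{\tau}(\mcal{L}(\mbf{A})) - \E\mcal{L}(\mbf{A}) = \bigl(\msf{T}_{\tau}(\mcal{L}(\mbf{A})) - \mcal{L}(\mbf{A})\bigr) + \bigl(\mcal{L}(\mbf{A}) - \E\mcal{L}(\mbf{A})\bigr)$
cannot be closed by bounding each piece separately. When the average degree is constant, the maximum degree is $\Theta(\log n/\log\log n)$ with high probability, and the number of nodes with $d_u \geq C_2\bar{d}$ is $\Theta\bigl(n e^{-c C_2}\bigr)$, not $O(1)$ and certainly not zero; your claim to the contrary is false. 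Consequently the \emph{untrimmed} centered Laplacian satisfies $\lnorm \mcal{L}(\mbf{A}) - \E\mcal{L}(\mbf{A})\rnorm_{\mrm{op}} \gtrsim \sqrt{\max_u d_u} \gg \sqrt{a_1+1}$ (a single high-degree star forces this), and the trimming difference $\msf{T}_{\tau}(\mcal{L}(\mbf{A})) - \mcal{L}(\mbf{A})$, which consists precisely of the removed high-degree \emph{random} rows, is just as large. The two oversized contributions cancel in the sum but are lost once you bound them separately; this is the entire reason trimming exists. The paper instead decomposes around the trimmed \emph{expectation}: $\lnorm \msf{T}_{\tau}(\mbf{A}_{\mrm{H}}) - \msf{T}_{\tau}(\mbf{P}_{\mrm{H}})\rnorm_{\mrm{op}} + \lnorm \msf{T}_{\tau}(\mbf{P}_{\mrm{H}}) - \mbf{P}_{\mrm{H}}\rnorm_{\mrm{op}}$, so the random matrix is only ever examined on the bounded-degree block $J\by J$ (Lemma~\ref{SPECLEMMA124anyD}), while the second term is purely deterministic, supported on at most $n/\tau$ rows (Lemma~\ref{SPECLEMMA114anyD}) with tiny entries $\leq n^{d-2}p_1$, and is killed by a crude Frobenius bound.

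Second, your concentration step for the bounded-degree block — a single $\eps$-net plus scalar Bernstein per direction — does not work, and trimming does not rescue it. Writing $\bm{x}^{\mrm{T}}(\mbf{A}_{\mrm{H}} - \mbf{P}_{\mrm{H}})\bm{y}$ as a sum over independent hyperedge indicators (this part of your plan is right, and matches the paper's key step, including the union bound over the $\binom{d}{2}$ coordinate-pair positions), the Bernstein \emph{range} parameter is governed by $\max_{u,v}|x_u y_v|$, not by any degree: for directions with a heavy coordinate (say $\bm{x}\approx\bm{e}_u$, $\bm{y}\approx\bm{e}_v$) the range is $\Theta(1)$, so with target deviation $t = C\sqrt{a_1+1} = O(1)$ the Bernstein exponent is $O(t/R) = O(\sqrt{a_1+1})$ — a constant, hopeless against the $e^{O(n)}$ net cardinality. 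This is the classical Feige--Ofek obstruction for sparse random matrices. The paper's proof of Lemma~\ref{SPECLEMMA124anyD} therefore splits each quadratic form into \emph{light pairs} $\bigl\{(u,v): |x_u y_v| \leq (\sqrt{\tau}+\sqrt{n^{d-1}p_1})/n\bigr\}$, where the restricted range makes net-plus-Bernstein give an $e^{-cn}$ tail, and \emph{heavy pairs}, which are handled not by concentration of the quadratic form but by a discrepancy property of the realized hypergraph (Lemma~\ref{SPECLEMMA104anyD}, after Chin--Rao and Lei--Rinaldo). Without this split, your argument fails on the heavy directions; with it, your "independent-hyperedge decomposition" idea is exactly what makes the light-pair Bernstein step go through in the $d$-{\hSBM} setting.
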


\noindent
The next lemma analyzes the spectrum of $\E\mcal{L}(\mbf{A})$ and pinpoints a special structure.

\begin{lemma}[Lemma~6 in \cite{GaoMa_15}]
\label{SPECLEMMA6}
We have
\begin{equation*}
\msf{SVD}_k\lp\E\mcal{L}(\mbf{A})\rp = \mbf{U}\mbf{\Lambda}\mbf{U}^\mrm{T}
\end{equation*}
where $\mbf{U}=\mbf{Z}\mbf{\Delta}^{-1}\mbf{W}$ with $\mbf{\Delta} = \msf{diag}(\sqrt{n_1},\ldots,\sqrt{n_k})$. $\mbf{Z}\in\{0,1\}^{n\by k}$ is a matrix with exactly one nonzero entry in each row at $(i,\sigma(i))$ taking value $1$ and $\mbf{W}\in\mrm{O}(k,k)$.
\end{lemma}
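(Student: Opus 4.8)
The plan is to exploit the fact that, in expectation, the hypergraph Laplacian inherits the block structure imposed by $\sigma$, which forces the eigenvectors associated with its leading spectrum to be constant on each community; the whole claim then reduces to a $k\by k$ eigenproblem. Concretely, I would first compute $\E\mcal{L}(\mbf{A})$ entrywise. Since the diagonal of $\mbf{H}\mbf{H}^\mrm{T}$ is exactly the degree vector, $(\mbf{H}\mbf{H}^\mrm{T})_{uu} = d_u = \mbf{D}_{uu}$, the matrix $\mcal{L}(\mbf{A}) = \mbf{H}\mbf{H}^\mrm{T}-\mbf{D}$ and hence $\E\mcal{L}(\mbf{A})$ has a vanishing diagonal. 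For $u\neq v$, $(\mbf{H}\mbf{H}^\mrm{T})_{uv}$ counts the realized hyperedges containing both $u$ and $v$, so
\[ (\E\mcal{L}(\mbf{A}))_{uv} = \sum_{\{w_1,\ldots,w_{d-2}\}\subseteq[n]\setminus\{u,v\}} \mrm{B}_{\sigma(u),\sigma(v),\sigma(w_1),\ldots,\sigma(w_{d-2})}. \]
Because $\mrm{B}$ depends only on the histogram of the community labels and the size profile $(n_1,\ldots,n_k)$ is determined by $\sigma$, this is a symmetric function $g(\sigma(u),\sigma(v))$ of the two labels alone. Gathering these values into a symmetric $k\by k$ matrix $\mbf{G}\eqDef[g(s,t)]$, I can write $\E\mcal{L}(\mbf{A}) = \mbf{Z}\mbf{G}\mbf{Z}^\mrm{T}-\mbf{D}_G$, where $\mbf{D}_G\eqDef\msf{diag}(g(\sigma(1),\sigma(1)),\ldots,g(\sigma(n),\sigma(n)))$ removes the spurious diagonal that $\mbf{Z}\mbf{G}\mbf{Z}^\mrm{T}$ introduces.

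Next I would show that the subspace $\mcal{C}\eqDef\mrm{col}(\mbf{Z})$ of vectors that are constant on each community is invariant under $\E\mcal{L}(\mbf{A})$. The term $\mbf{Z}\mbf{G}\mbf{Z}^\mrm{T}$ maps every vector into $\mcal{C}$, while $\mbf{D}_G$ preserves $\mcal{C}$ because its diagonal is constant within each community, so it sends block-constant vectors to block-constant vectors; by symmetry of $\E\mcal{L}(\mbf{A})$ the orthogonal complement $\mcal{C}^\perp$ is invariant as well. The columns of $\mbf{U}_0\eqDef\mbf{Z}\mbf{\Delta}^{-1}$ form an orthonormal basis of $\mcal{C}$, since $\mbf{U}_0^\mrm{T}\mbf{U}_0 = \mbf{\Delta}^{-1}\mbf{Z}^\mrm{T}\mbf{Z}\mbf{\Delta}^{-1} = \mbf{I}_k$ using $\mbf{Z}^\mrm{T}\mbf{Z} = \mbf{\Delta}^2$. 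In this basis the restriction of $\E\mcal{L}(\mbf{A})$ to $\mcal{C}$ is the symmetric $k\by k$ matrix
\[ \wtild{\mbf{G}} \eqDef \mbf{U}_0^\mrm{T}\E\mcal{L}(\mbf{A})\mbf{U}_0 = \mbf{\Delta}\mbf{G}\mbf{\Delta}-\msf{diag}(g(1,1),\ldots,g(k,k)). \]
Diagonalizing $\wtild{\mbf{G}} = \mbf{W}\mbf{\Lambda}\mbf{W}^\mrm{T}$ with $\mbf{W}\in\mrm{O}(k,k)$, the eigenvectors of $\E\mcal{L}(\mbf{A})$ lying in $\mcal{C}$ are exactly $\mbf{U} = \mbf{U}_0\mbf{W} = \mbf{Z}\mbf{\Delta}^{-1}\mbf{W}$, with eigenvalues collected in $\mbf{\Lambda}$, which is the asserted factorization.

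It remains to verify that these $k$ eigenvalues really are the $k$ dominant ones, so that $\msf{SVD}_k$ returns the subspace $\mcal{C}$ and not a direction from $\mcal{C}^\perp$. On $\mcal{C}^\perp$ the operator acts as $-\mbf{D}_G$, whose eigenvalues are the $-g(t,t)$; a direct count gives $g(t,t)\approx\binom{n}{d-2}\frac{a_1}{n^{d-1}} = O\!\lp\frac{a_1}{n}\rp$, which is negligible next to the $k$-th largest singular value $\lambda_k$ of $\E\mcal{L}(\mbf{A})$ under the regularity assumptions in force, recalling that $\lambda_k\gtrsim\sum_{i<j:(r_i,r_j)\in\mcal{N}_d}m_{r_i r_j}(p_i-p_j)$ by \autoref{HOMOEIG4anyD}. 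Consequently the top-$k$ singular subspace coincides with $\mcal{C}$ and $\msf{SVD}_k(\E\mcal{L}(\mbf{A})) = \mbf{U}\mbf{\Lambda}\mbf{U}^\mrm{T}$. I expect the main obstacle to be precisely this quantitative step: establishing the spectral separation $\lambda_k\gg\max_t g(t,t)$ that pins the informative spectrum to $\mcal{C}$, especially in the approximately equal-sized regime where $g(t,t)$ may vary mildly with $t$ and one must check that this variation disturbs neither the block-constancy of $\mbf{D}_G$ (hence the invariance of $\mcal{C}$) nor the ordering of the singular values.
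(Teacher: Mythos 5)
Your argument is correct, but it is worth saying plainly that it is not the paper's proof: the paper never proves this lemma at all, it imports it verbatim as Lemma~6 of \cite{GaoMa_15}, where the statement concerns the graph-{\SBM} population matrix $\mbf{Z}\mbf{B}\mbf{Z}^{\mrm{T}}$, which is \emph{exactly} of rank at most $k$, so the factorization is immediate linear algebra. Your route is a genuine (and, strictly speaking, needed) strengthening of that citation to the hypergraph setting: $\E\mcal{L}(\mbf{A})$ has a forced zero diagonal, hence is \emph{not} exactly block structured, and your decomposition $\E\mcal{L}(\mbf{A})=\mbf{Z}\mbf{G}\mbf{Z}^{\mrm{T}}-\mbf{D}_G$, together with the observation that $\mbf{D}_G$ is constant on each community so that $\mrm{col}(\mbf{Z})$ and its orthogonal complement are \emph{both} invariant, is exactly what keeps the eigenvectors of the population Laplacian precisely (not just approximately) of the form $\mbf{Z}\mbf{\Delta}^{-1}\mbf{W}$. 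What the citation buys is brevity; what your proof buys is that the lemma actually applies to $\E\mcal{L}(\mbf{A})$, a gap the paper silently steps over. Your reduction to the $k\by k$ matrix $\wtild{\mbf{G}}=\mbf{\Delta}\mbf{G}\mbf{\Delta}-\msf{diag}(g(1,1),\ldots,g(k,k))$ checks out.

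You are also right to isolate the spectral-separation step as the real content beyond linear algebra, and you should state clearly that it is a \emph{hypothesis}, not a consequence: as written the lemma carries no assumptions, and it is false unconditionally (take all $p_i$ equal, so that $\E\mcal{L}(\mbf{A})=g(\mbf{1}_n\mbf{1}_n^{\mrm{T}}-\mbf{I}_n)$; the $(-g)$-eigenspace has dimension $n-1$ and mixes $\mrm{col}(\mbf{Z})$ with its complement, so the top-$k$ singular subspace need not be block constant). The conclusion must therefore be read under the standing assumptions of Theorem~\ref{SPEC}. Your appeal to Lemma~\ref{HOMOEIG4anyD} for the comparison $\lambda_k\gg\max_t g(t,t)\approx a_1/n$ is sound and non-circular---its proof lower-bounds the $k$-th largest \emph{eigenvalue} via Weyl, and since the $\mcal{C}^{\perp}$ eigenvalues $-g(t,t)$ are negative, that bound necessarily pertains to the $k$ block eigenvalues---but the comparison itself uses the sparse regime $\bm{p}=O(1/n^{d-1})$ (or condition \eqref{assump22}); in a dense $d\geq4$ hypergraph it would not be automatic.
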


\begin{proof}[Proof of Theorem~\ref{SPEC}]
Under the assumption $p_1\approx p_{\kappa_d}$, we have $\E\tau\in[C_1'n^{d-1}p_1,C_2'n^{d-1}p_1 ]$ for some large constant $C_1', C_2'$. Thus by Bernstein's inequality, with probability at least $1-e^{-C'n}$, we have $\tau \in [C_1n^{d-1}p_1,C_2n^{d-1}p_1 ] $. By Davis-Kahan theorem \cite{DavisKahan_70}, we have
\begin{equation*}
||\what{\mbf{U}} - \mbf{U}\mbf{W_1} ||_{\mrm{F}} \leq C\frac{\sqrt{k}}{\lambda_k}||\msf{T}_{\tau}(\mcal{L}(\mbf{A}))-\E\mcal{L}(\mbf{A})||_{\mrm{op}}
\end{equation*}
for some $\mbf{W_1} \in O(k,k)$ and some constant $C>0$. Then applying \autoref{SPECLEMMA6}, we have
\begin{equation}
\label{60}
||\what{\mbf{U}} - \mbf{V} ||_{\mrm{F}} \leq C\frac{\sqrt{k}}{\lambda_k}||\msf{T}_{\tau}(\mcal{L}(\mbf{A}))-\E\mcal{L}(\mbf{A})||_{\mrm{op}}
\end{equation}
where $\mbf{V} = \mbf{Z}\mbf{\Delta}^{-1}\mbf{W} = [\bm{v}_1^{\mrm{T}}...\bm{v}_n^{\mrm{T}}]^{\mrm{T}}$ as we state in \autoref{SPECLEMMA6}. Combining \eqref{60}, \autoref{SPECLEMMADHSBM} and the conclusion $\tau \in [C_1n^{d-1}p_1,C_2n^{d-1}p_1 ]$ with probability at least $1-e^{-C'n}$, we have
\begin{equation}
\label{61}
||\what{\mbf{U}} - \mbf{V}||_{\mrm{F}} \leq\frac{C \sqrt{k}\sqrt{n^{d-1}p_1}}{\lambda_k}
\end{equation}
with probability at least $1-n^{-C'}$. The definition of $\mbf{V}$ implies that
\begin{equation*}
\lnorm \bm{v}_i - \bm{v}_j \rnorm_2
\begin{cases}
\geq \sqrt{\frac{2k}{n}}, & \mbox{, if } \sigma(i)\neq\sigma(j) \\
=0, & \mbox{, otherwise}.
\end{cases}
\end{equation*}
Let $\mbf{X} = \mbf{\Delta}^{-1}\mbf{W}$, which means $\bm{v}_i=\bm{x}_{\sigma(i)}$. Recall the definition of critical radius $r = \nu \sqrt{\frac{k}{n}}$ in Algorithm~\ref{alg:spec_init}. Define the sets
\begin{equation}
\label{Tset}
T_i = \lbp s \in \sigma^{-1}(i) : \lnorm\what{\bm{u}}_s - \mbf{x}_i\rnorm_2 < \frac{r}{2} \rbp, \;i\in[k].
\end{equation}
By definition, $T_i \cap T_j = \varnothing$ for $i\neq j$ and
\begin{equation*}
\bigcup_{i\in[k]}T_i = \lbp s\in[n] : \lnorm\what{\bm{u}}_s - \bm{v}_s\rnorm_2 < \frac{r}{2} \rbp.
\end{equation*}
Thus,
\begin{equation*}
\bigg|\Big( \bigcup_{i\in[k]}T_i\Big)^c\bigg| \cdot \frac{r^2}{4}\leq \sum_{s\in[n]} \lnorm\what{\bm{u}}_s - \bm{v}_s\rnorm_2^2 \leq \frac{C^2kn^{d-1}p_1}{\lambda_k^2}
\end{equation*}
where the last inequality is due to \eqref{61}. The rest of the proof is identical to the proof of Theorem $3$ in \cite{GaoMa_15}. 
For completeness, we shall repeat it again here. After some rearrangements, we have
\begin{equation}
\label{eq64}
\bigg|\Big( \bigcup_{i\in[k]}T_i\Big)^c\bigg| \leq \frac{4C^2n^{d}p_1}{\mu^2\lambda_k^2}.
\end{equation}
It means most nodes are close to the centers and are in the set we defined in \eqref{Tset}. Also note that the sets $\{T_i\}_{i\in[k]}$ are disjoint. Suppose there is some $i\in [k]$ such that $|T_i| < |\sigma^{-1}(i)|-|(\cup_{i\in[k]}T_i)^c|$, we have $|\cup_{i\in[k]}T_i| = \sum_{i\in[k]}|T_i| < n-|(\cup_{i\in[k]}T_i)^c| = |\cup_{i\in[k]}T_i\|$, which leads to a contradiction. Thus, we must have
\begin{equation*}
|T_i| \leq |\sigma^{-1}(i)|-\bigg|\Big( \bigcup_{i\in[k]}T_i\Big)^c\bigg| \leq \frac{n}{k} - \frac{4C^2n^{d}p_1}{\mu^2\lambda_k^2} > \frac{n}{2k}
\end{equation*}
where the last inequality is from the assumption \eqref{assump22}. Since the cluster centers are at least $\sqrt{\frac{2k}{n}}$ apart from each others and both $\{T_i\}_{i\in[k]}$, $\{\what{C}_i\}_{i\in[k]}$ (recall that $\what{C}_i$ are defined in Algorithm~\ref{alg:spec_init}) are defined through the critical radius $r = \mu \sqrt{\frac{k}{n}}$, each $\what{C}_i$ should intersect with only one $T_i$. We claim that there is a permutation $\pi$ of set $[k]$, such that
\begin{equation}
\label{claim66}
\what{C}_i\bigcap T_{\pi(i)} \neq \varnothing,\; \lba\what{C}_i\rba \geq \lba T_{\pi(i)}\rba \forall i\in[k].
\end{equation}
We could now continue the proof with claim \eqref{claim66}, where the proof of \eqref{claim66} can be found in \cite{GaoMa_15} (in their proof of Theorem~3). It is mainly established by an easy mathematical induction. From the definition of $\what{C}_i$ and \eqref{claim66}, we have for any $i\neq j$, $T_{\pi(i)}\cap\what{C}_j = \varnothing$. This directly implies that for any $i\neq j$, $T_{\pi(i)}\subset\what{C}_j$. Thus, we know that $T_{\pi(i)}\cap\what{C}_i^c \subset \lp\cup_{i\in[k]}\what{C}_i \rp^c$. Therefore,
\begin{equation*}
\bigcup_{i\in[k]}\Big(T_{\pi(i)} \bigcap \widehat{C}_i^c \Big) \subset \Big(\bigcup_{i\in[k]}\widehat{C}_i \Big)^c.
\end{equation*}
Combining with the fact that $T_i \cap T_j = \varnothing \; \forall i\neq j$, we have
\begin{equation}
\label{eq68}
\sum_{i\in[k]}\Big| T_{\pi(i)} \bigcap \widehat{C}_i^c \Big| \leq \Big| \Big(\bigcup_{i\in[k]}\widehat{C}_i \Big)^c \Big|.
\end{equation}
By definition, $\widehat{C}_i \bigcap \widehat{C}_j = \varnothing \; \forall i\neq j$. Along with \eqref{claim66}, we have
\begin{equation}
\label{eq69}
\bigg| \Big(\bigcup_{i\in[k]}\widehat{C}_i \Big)^c \bigg| = n - \sum_{i\in[k]} \lba \widehat{C}_i \rba \leq n - \sum_{i\in[k]} \lba T_i \rba = \bigg| \Big(\bigcup_{i\in[k]}T_i \Big)^c \bigg|.
\end{equation}
Together with \eqref{eq64}, \eqref{eq68} and \eqref{eq69}, we have
\begin{equation}
\label{eq70}
\sum_{i\in[k]}\Big| T_{\pi(i)} \bigcap \widehat{C}_i^c \Big| \leq \frac{4C^2n^{d}p_1}{\mu^2\lambda_k^2}
\end{equation}
Since for each $u \in \cup_{i\in[k]}\lp T_{\pi(i)} \cap \widehat{C}_i \rp$, we have $\widehat{\sigma}(u) = i$ when $\sigma(u) = \pi(i)$, the mis-classification ratio is bounded by
\begin{align*}
\loss_0(\widehat{\sigma}, \pi^{-1}(\sigma)) &\leq \frac{1}{n} \bigg| \Big(\bigcup_{i\in[k]}\big( T_{\pi(i)} \bigcap \widehat{C}_i \big) \Big)^c\bigg| \\
 & \leq \frac{1}{n} \lb \bigg| \Big(\bigcup_{i\in[k]}\big( T_{\pi(i)} \bigcap \widehat{C}_i \big) \Big)^c \bigcap \Big(\bigcup_{i\in[k]}T_i \Big) \bigg| + \bigg| \Big(\bigcup_{i\in[k]}T_i \Big)^c \bigg| \rb \\
 & \leq \frac{1}{n} \lb \sum_{i\in[k]} \bigg| T_{\pi(i)} \bigcap \widehat{C}_i^c \bigg| + \bigg| \Big(\bigcup_{i\in[k]}T_i \Big)^c  \bigg| \rb \\
 & \leq \frac{8C^2n^{d-1}p_1}{\mu^2\lambda_k^2}
\end{align*}
where the last inequality is from \eqref{eq64} and \eqref{eq70}. This proves the desired conclusion.
\end{proof}

\begin{remark}
Essentially, \autoref{SPEC} says that the performance of Algorithm~\ref{alg:spec_init} will be upper-bounded in regard to the $k$-th largest singular value. When $\lambda_k$ is large, it means that the singular vectors are well separated, ensuring the algorithm to have a good performance. This is similar to classical spectral clustering methods.
\end{remark} 

\section{Minimax Lower Bound}
\label{sec:conv}
By constructing a smaller parameter space where we can analyze the risk, we are able to establish the converse part of the main theorem as follows.
\begin{theorem}
\label{thm:lower}
If
\begin{equation}
\label{eq:lower_condi}
\sum_{i<j:(r_i,r_j)\in\mcal{N}_d} m_{r_i r_j} I_{p_i p_j} \ra \infty.
\end{equation}
Then
\begin{equation}
\label{eq:lower_result}
\adjustlimits\inf_{\what{\sigma}}\sup_{(\mbf{B},\sigma)\in\Theta_d^0} \Risk_{\sigma}(\what{\sigma}) \geq \exp\bigg( -(1-\zeta_n) \sum_{i<j:(r_i,r_j)\in\mcal{N}_d} m_{r_i r_j} I_{p_i p_j} \bigg)
\end{equation}
for some vanishing sequence $\zeta_n \ra0$.
\end{theorem}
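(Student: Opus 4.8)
The plan is to follow the classical reduction of a minimax risk to a local Bayesian hypothesis test, sharpened at the very end by a large-deviation lower bound. First I would lower bound the minimax risk by the Bayes risk under the uniform prior supported on the least-favorable sub-parameter space $\Theta_d^L$ of \eqref{eq:ThetaL}; since $\Theta_d^L\subseteq\Theta_d^0$ this is immediate, and writing $\bar{\Risk}$ for the resulting average risk one has $\inf_{\what\sigma}\sup_{(\mbf{B},\sigma)\in\Theta_d^0}\Risk_\sigma(\what\sigma)\geq\inf_{\what\sigma}\bar{\Risk}$. The point of choosing $\Theta_d^L$ is that its rigid community-size constraints make the construction $\sigma[\sigma_0]$ of Section~\ref{sec:moti} well defined and the coefficients $m_{r_i r_j}$ independent of the representative.

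Second, I would localize the loss to node $1$. Bounding $\loss(\what\sigma,\sigma)\geq\tfrac1n\Indc{\what\sigma(1)\neq\sigma(1)\text{ after optimal alignment}}$ and invoking a genie that reveals $\{\sigma(u):u\neq1\}$ — which can only decrease the attainable Bayes error, hence yields a valid lower bound on the minimax risk, while simultaneously pinning the optimal label permutation to the identity — the averaged error reduces to the Bayes error of estimating $\sigma(1)$ from the hyperedges incident to node $1$. Using the closed-under-permutation structure of $\Theta_d^L$, for each representative $\sigma_0$ I would further restrict to the binary test between $\Pr_{\sigma_0}$ and $\Pr_{\sigma[\sigma_0]}$, which disagree only at node $1$; passing from the $k$-ary genie problem to this two-point sub-problem costs only a harmless constant, later absorbed into $\zeta_n$. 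This is exactly where the weighted coefficients enter: the only random variables whose law differs between the two hypotheses are the hyperedges through node $1$, and the crucial combinatorial fact is that relabeling a single node removes one copy of a label and adds one of another, so the sorted histogram of each incident hyperedge moves by $\ell_1$-distance \emph{exactly $2$ or not at all}. Consequently every relevant change is between a neighbouring pair $(r_i,r_j)\in\mcal{N}_d$ and occurs $m_{r_i r_j}$ times, which is precisely why only the neighbouring divergence terms survive.

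Third, I would evaluate the exponent of this two-point test. As the hyperedge indicators are independent Bernoulli variables, the log-likelihood ratio is a sum of independent terms, one per changed hyperedge, each contrasting $\Ber(p_i)$ with $\Ber(p_j)$. Computing the Chernoff exponent at the symmetric tilt $t=\tfrac12$ — which is asymptotically optimal because $a_i\approx a_j$ forces $p_i\approx p_j$ — gives exactly $\sum_{i<j:(r_i,r_j)\in\mcal{N}_d}m_{r_i r_j}I_{p_i p_j}$, with $I_{p_i p_j}$ the order-$1/2$ Rényi divergence of \eqref{eq:Renyi_div_Ber}; the factor $2$ built into the definition of $I_{p_i p_j}$ conveniently absorbs the symmetric contribution of the reverse-direction changes $r_j\to r_i$, so that each ordered pair need only be counted once.

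The crux, and the main obstacle, is the final step: obtaining the tight constant in the error exponent. A naive Le Cam / total-variation (Bhattacharyya affinity) argument loses a factor of $2$ in the exponent and therefore cannot match the achievability bound in Theorem~\ref{thm:theo_upper}. To recover the sharp rate I would invoke Rozovsky's large-deviation lower bound (a Bahadur--Rao-type estimate) on the probability that a sum of independent, asymptotically negligible summands crosses the decision threshold, certifying a test error of at least $\exp\big(-(1+o(1))\sum_{i<j:(r_i,r_j)\in\mcal{N}_d} m_{r_i r_j}I_{p_i p_j}\big)$. The delicate points are checking that the sub-exponential prefactor produced by Rozovsky's bound is negligible against the diverging exponent — here hypothesis \eqref{eq:lower_condi} is exactly what guarantees divergence, so this prefactor together with the $\tfrac1n$ from localization is swept into the vanishing $\zeta_n$ — and confirming that the Chernoff-optimal tilt converges to $t=\tfrac12$ under $a_i\approx a_j$. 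Assembling the three steps yields $\inf_{\what\sigma}\sup_{(\mbf{B},\sigma)\in\Theta_d^0}\Risk_\sigma(\what\sigma)\geq\tfrac1n\exp\big(-(1+o(1))\sum m_{r_i r_j}I_{p_i p_j}\big)=\exp\big(-(1-\zeta_n)\sum m_{r_i r_j}I_{p_i p_j}\big)$ for some $\zeta_n\to0$, which is \eqref{eq:lower_result}.
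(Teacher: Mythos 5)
Your overall route---restricting to the least-favorable space $\Theta_d^L$, pairing each $\sigma_0$ with the single-node perturbation $\sigma[\sigma_0]$, observing that a one-node relabeling moves every incident hyperedge's sorted histogram by $\ell_1$-distance $2$ or $0$ (hence only pairs in $\mcal{N}_d$ with multiplicities $m_{r_i r_j}$ enter), and invoking Rozovsky's large-deviation lower bound at the symmetric tilt $u^{\ast}=\tfrac12$ to get the exponent $\sum_{i<j:(r_i,r_j)\in\mcal{N}_d} m_{r_i r_j}I_{p_i p_j}$---is exactly the paper's strategy. However, your localization step contains a genuine gap. You bound $\loss(\what{\sigma},\sigma)\geq\tfrac1n\Indc{\what{\sigma}(1)\neq\sigma(1)}$ and carry the factor $\tfrac1n$ to the end, claiming it is ``swept into $\zeta_n$.'' Writing $E_n \eqDef \sum_{i<j:(r_i,r_j)\in\mcal{N}_d} m_{r_i r_j}I_{p_i p_j}$, your final bound is $\tfrac1n\exp\lp-(1+o(1))E_n\rp=\exp\lp-(1+o(1))E_n-\log n\rp$, and absorbing $\log n$ into $\zeta_n E_n$ requires $\log n = o(E_n)$. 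Hypothesis \eqref{eq:lower_condi} only gives $E_n\ra\infty$: the regime $E_n = O(\log n)$ is both permitted and realizable (take the gaps $a_i-a_j$ small), and it is precisely the interesting one---near the exact-recovery boundary of Theorem~\ref{thm:exact_reco} one has $E_n\asymp\log n$, where your exponent is off by a constant factor, while for $E_n=o(\log n)$ your bound degenerates to roughly $\tfrac1n$, far below the claimed $\exp\lp-(1-\zeta_n)E_n\rp$. In short, discarding the misclassification contributions of nodes $2,\ldots,n$ costs exactly the factor $n$ that the theorem cannot afford.

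The missing idea is the symmetrization (``global-to-local'') argument, Lemma~\ref{lma:global2local} borrowed from Zhang--Zhou: because $\Theta_d^L$ is homogeneous and closed under permutation and the prior is uniform, one has the exact identity
\begin{equation*}
\inf_{\what{\sigma}} \Risk_{\sigma\sim\Unif}(\what{\sigma}) = \inf_{\what{\sigma}} \Risk_{\sigma\sim\Unif}(\what{\sigma}(1)),
\end{equation*}
i.e.\ the Bayes mismatch ratio (with its $\tfrac1n$ normalization) \emph{equals} the per-node Bayes error probability at node $1$, with no loss: by exchangeability each of the $n$ nodes contributes the same per-node error, and the $n$ contributions exactly cancel the $\tfrac1n$. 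Replacing your crude indicator bound (and the genie) with this identity, the remainder of your argument---the two-point reduction to testing $\Pr_{\sigma_0}$ against $\Pr_{\sigma[\sigma_0]}$, whose cost is only the harmless factor $\tfrac12$, followed by Rozovsky's bound whose $O\lp\sqrt{E_n}\rp$ correction is genuinely $o(E_n)$ under \eqref{eq:lower_condi}---goes through verbatim and proves the theorem as stated. (A minor remark: the tilt $u^{\ast}=\tfrac12$ is \emph{exactly} optimal for the zero-threshold symmetric statistic, by the algebra of $C_{r_i r_j}$; no appeal to $a_i\approx a_j$ is needed there.)
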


\noindent
We can see that condition \eqref{eq:lower_condi} required here in the lower bound is less stringent than the condition \eqref{eq:main_condi_main} required in the upper bound.

To prove Theorem~\ref{thm:lower}, we first introduce the concept of local loss. The equivalence class of a community assignment $\sigma$ is defined as $\Gamma(\sigma) \eqDef \lbp \sigma^{\prime} \mid \exists \pi\in\mcal{S}_k \st \sigma^{\prime}=\sigma_{\pi} \rbp$. Let $S_{\sigma}(\what{\sigma}) = \lbp \sigma^{\prime} \in \Gamma(\what{\sigma}) \mid d_\mrm{H}(\sigma^{\prime},\sigma) = d_\mrm{H}(\what{\sigma},\sigma) \rbp$ be the set of all permutations in the equivalence class of $\what{\sigma}$ that achieve the minimum distance. For each $i\in[n]$, the local loss function is defined as the proportion of false labeling of node $i$ in $S_{\sigma}(\what{\sigma})$.
\begin{equation*}
\loss(\what{\sigma}(i),\sigma(i)) \eqDef \sum_{\sigma^{\prime} \in S_{\sigma}(\what{\sigma})} \frac{\Indc{\what{\sigma}(i) \neq \sigma(i)}}{\lvert S_{\sigma}(\what{\sigma}) \rvert}
\end{equation*}
It turns out that it is rather easy to study the local loss. In the proof of our converse, we consider a sub-parameter space of $\Theta_d^0$, i.e. $\Theta_d^L$ \eqref{eq:ThetaL} as defined in \autoref{sec:main}. Recall that the parameter space $\Theta_d^L$ is similar to $\Theta_d^0$, but we only allow each community size to be within $\lfloor\frac{n}{k}\rfloor\pm1$.
Since $\Theta_d^L$ is closed under permutation, we can apply the \emph{global-to-local} lemma in \cite{ZhangZhou_16}.

\begin{lemma}[Lemma~2.1 in \cite{ZhangZhou_16}]
\label{lma:global2local}
Let $\Theta$ be any homogeneous parameter space that is closed under permutation. Let $\Unif$ be the uniform prior over all the elements in $\Theta$. Define the global Bayesian risk as $\Risk_{\sigma\sim\Unif}(\what{\sigma}) = \frac{1}{|\Theta|} \sum_{\sigma\in\Theta} \E_{\sigma} \loss(\what{\sigma},\sigma)$ and the local Bayesian risk $\Risk_{\sigma\sim\Unif}(\what{\sigma}(1)) = \frac{1}{|\Theta|} \sum_{\sigma\in\Theta} \E_{\sigma} \loss(\what{\sigma}(1),\sigma(1))$ for the first node. Then
\begin{equation*}
\inf_{\what{\sigma}} \Risk_{\sigma\sim\Unif}(\what{\sigma}) = \inf_{\what{\sigma}} \Risk_{\sigma\sim\Unif}(\what{\sigma}(1)).
\end{equation*}
\end{lemma}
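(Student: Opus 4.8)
The backbone of the argument is a pointwise decomposition of the global loss into an average of the local losses, namely $\loss(\what{\sigma},\sigma) = \frac{1}{n}\sum_{i=1}^{n} \loss(\what{\sigma}(i),\sigma(i))$. First I would establish this identity directly from the definition of the local loss via the minimizing set $S_{\sigma}(\what{\sigma})$. By construction every $\sigma^{\prime}\in S_{\sigma}(\what{\sigma})$ realizes the minimal Hamming distance, so $\frac{1}{n}d_{\mrm{H}}(\sigma^{\prime},\sigma)=\loss(\what{\sigma},\sigma)$ for each such $\sigma^{\prime}$. Averaging the node-$i$ disagreement indicator $\Indc{\sigma^{\prime}(i)\neq\sigma(i)}$ over $\sigma^{\prime}\in S_{\sigma}(\what{\sigma})$ gives exactly $\loss(\what{\sigma}(i),\sigma(i))$, and summing over $i$ then reproduces $\frac{1}{|S_{\sigma}(\what{\sigma})|}\sum_{\sigma^{\prime}}\frac{1}{n}d_{\mrm{H}}(\sigma^{\prime},\sigma)=\loss(\what{\sigma},\sigma)$. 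Taking $\E_{\sigma}$ and averaging over the uniform prior yields the corresponding risk decomposition $\Risk_{\sigma\sim\Unif}(\what{\sigma}) = \frac{1}{n}\sum_{i=1}^{n}\Risk_{\sigma\sim\Unif}(\what{\sigma}(i))$, valid for every estimator $\what{\sigma}$.

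Next I would exploit the permutation-closedness of $\Theta$. Relabeling the $n$ nodes by any $\rho\in\mcal{S}_n$ maps $\Theta$ bijectively to itself and leaves the uniform prior and the generative model invariant (up to a matching relabeling of the observed hypergraph), so the local Bayes problem at node $i$ is isomorphic to the one at node $1$. Consequently $\inf_{\what{\sigma}}\Risk_{\sigma\sim\Unif}(\what{\sigma}(i))$ does not depend on $i$. Combined with the decomposition this gives the easy inequality: for every $\what{\sigma}$ we have $\frac{1}{n}\sum_{i}\Risk_{\sigma\sim\Unif}(\what{\sigma}(i)) \geq \inf_{\what{\tau}}\Risk_{\sigma\sim\Unif}(\what{\tau}(1))$, and taking the infimum over $\what{\sigma}$ yields $\inf_{\what{\sigma}}\Risk_{\sigma\sim\Unif}(\what{\sigma}) \geq \inf_{\what{\sigma}}\Risk_{\sigma\sim\Unif}(\what{\sigma}(1))$.

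The reverse inequality, i.e.\ the achievability direction, is the main obstacle. The plan is to show that the node-$1$ local infimum is attained within the class of $\mcal{S}_n$-\emph{equivariant} estimators. Starting from a near-optimal node-$1$ rule $\what{\sigma}^{\circ}$, I would transport it to every node: using a relabeling $\rho_i$ with $\rho_i(1)=i$, let node $i$ adopt the rule that $\what{\sigma}^{\circ}$ applies to node $1$ on the relabeled data, while aligning the $k$ community labels consistently across nodes. This produces an equivariant estimator $\what{\sigma}^{\ast}$ whose local risks all coincide (by equivariance and prior-invariance) and equal the node-$1$ value $\Risk_{\sigma\sim\Unif}(\what{\sigma}^{\circ}(1))\approx\inf_{\what{\sigma}}\Risk_{\sigma\sim\Unif}(\what{\sigma}(1))$; by the decomposition its global risk equals this same common value, giving $\inf_{\what{\sigma}}\Risk_{\sigma\sim\Unif}(\what{\sigma}) \leq \inf_{\what{\sigma}}\Risk_{\sigma\sim\Unif}(\what{\sigma}(1))$. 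The delicate point is that the local loss is entangled with the \emph{global} label alignment through $S_{\sigma}(\what{\sigma})$, so the transport must be carried out so that the label permutations induced at the different nodes are mutually consistent. This is precisely where the $\mcal{S}_n$ node symmetry and the $\mcal{S}_k$ label symmetry built into the definition of $\loss$ must be used in tandem, and it is the step that requires the most care. Combining the two inequalities establishes the claimed identity.
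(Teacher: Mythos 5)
The paper never proves this lemma itself; it is imported verbatim as Lemma~2.1 of \cite{ZhangZhou_16}, so your attempt can only be judged on its own merits and against the cited argument. Your first two steps are correct and are indeed the standard route: the pointwise decomposition $\loss(\what{\sigma},\sigma)=\frac{1}{n}\sum_{i=1}^{n}\loss(\what{\sigma}(i),\sigma(i))$ (your derivation also implicitly fixes the typo in the paper's definition of the local loss, where the indicator should read $\Indc{\sigma'(i)\neq\sigma(i)}$ for $\sigma'\in S_{\sigma}(\what{\sigma})$), and the inequality $\inf_{\what{\sigma}}\Risk_{\sigma\sim\Unif}(\what{\sigma})\geq\inf_{\what{\sigma}}\Risk_{\sigma\sim\Unif}(\what{\sigma}(1))$ obtained from it by exchanging the infimum with the average and using node symmetry. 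It is worth noting that this is the only direction the paper actually uses: in the chain of inequalities proving Theorem~7.1, the lemma serves solely to pass from the global Bayes risk down to the local one. One caveat even here: your symmetry step needs $\Theta$ closed under node relabelings $\sigma\mapsto\sigma\circ\rho$, $\rho\in\mcal{S}_n$ (the correct reading of the hypothesis in \cite{ZhangZhou_16}), whereas the space $\Theta_d^L$ to which this paper applies the lemma singles out node $1$ via $n_{\sigma(1)}=n'+1$ and is only closed under relabelings fixing node $1$; you should state explicitly which closure you invoke.

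The reverse inequality, however, is where your proposal has a genuine gap: what you give is a plan, and the plan as described does not go through. Concretely, the stitched estimator $\what{\sigma}^{\ast}(A)(i)\eqDef\what{\sigma}^{\circ}(A^{\rho_i})(1)$ (with $\rho_i(1)=i$) does \emph{not} have node-$i$ local risk equal to $\Risk_{\sigma\sim\Unif}(\what{\sigma}^{\circ}(1))$, because the local loss $\loss(\what{\sigma}^{\ast}(i),\sigma(i))$ is computed through the alignment set $S_{\sigma}(\what{\sigma}^{\ast})$ of the \emph{full stitched output}, not through the alignment of run $i$'s own output $\what{\sigma}^{\circ}(A^{\rho_i})$; and the $n$ runs need not share a common label permutation. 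For instance, each run could recover the truth exactly up to its own data-dependent $\pi_i\in\mcal{S}_k$, so that every run has zero local loss at its own node while the stitched assignment is worthless. Repairing this requires a data-dependent alignment of the runs (essentially the consensus step \eqref{eq:css_method} of Algorithm~\ref{alg:refine}), and any such alignment argument carries error terms, so it can deliver the claimed identity at best asymptotically, not as an exact equality. Moreover, the key claim you propose to establish --- that the node-$1$ local infimum is attained within the class of $\mcal{S}_n$-equivariant estimators --- cannot be obtained by standard symmetrization: randomizing $\what{\sigma}^{\circ}$ over $\rho\sim\Unif(\mcal{S}_n)$ preserves the \emph{global} Bayes risk but replaces the node-$1$ local risk by $\frac{1}{n}\sum_{j=1}^{n}\Risk_{\sigma\sim\Unif}(\what{\sigma}^{\circ}(j))$, which may far exceed $\Risk_{\sigma\sim\Unif}(\what{\sigma}^{\circ}(1))$. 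Since for equivariant estimators the local and global risks coincide (by your own decomposition), the statement ``the local infimum is attained by an equivariant estimator'' is \emph{equivalent} to the inequality you are trying to prove, so invoking it as a stepping stone is circular. In short: the easy half is done correctly, but the achievability half fails exactly at the point you flagged as delicate, and that point is the entire content of that direction.
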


\noindent
Second, the local Bayesian risk can be transformed into the risk function of a hypothesis testing problem. We consider the most indistinguishable case where the potential candidate only disagrees with the ground truth on a single node. The key observation is that the situation is exactly the same as our testing one node at a time in the local version of the MLE method.

\begin{lemma}
\label{lma:hypo_test}
\begin{equation*}
\Risk_{\sigma\sim\Unif} (\what{\sigma}(1)) \geq \Pr\Bigg\{ \adjustlimits\sum_{i<j:(r_i,r_j)\in\mcal{N}_d}\sum_{u=1}^{m_{r_i r_j}} C_{r_i r_j}\lp X_u^{(r_j)}-X_u^{(r_i)}\rp \geq 0\Bigg\}
\end{equation*}
where $f(s) \eqDef \frac{s}{1-s}$ for $C_{xy} \eqDef \log\frac{f(x)}{f(y)}$, and for each $(r_i,r_j)$ pair in $\mcal{N}_d$, $X_u^{(r_j)}\diid\Ber(p_j), X_u^{(r_i)}\diid\Ber(p_i) \;\forall u=1,\ldots,m_{r_i r_j}$ are all mutually indepedent random variables.
\end{lemma}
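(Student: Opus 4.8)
The plan is to reduce the local Bayesian risk to the error probability of a two‑point test and then to identify the optimal test statistic in closed form. Since $\Theta_d^L$ is closed under permutation, the global‑to‑local reduction (Lemma~\ref{lma:global2local}) lets me work directly with $\Risk_{\sigma\sim\Unif}(\what{\sigma}(1))$, the Bayesian risk of labelling node~$1$ under the uniform prior on $\Theta_d^L$. The key structural fact is that $\Theta_d^L$ is stable under the single‑node flip $\sigma_0\mapsto\sigma[\sigma_0]$, which moves node~$1$ from its size‑$(n^{\prime}+1)$ community $A$ into a size‑$n^{\prime}$ community $B$ and leaves every other label intact. Pairing each $\sigma_0$ with $\sigma[\sigma_0]$ and restricting the competing hypotheses to this neighbouring flip yields a valid lower bound: the local Bayesian risk dominates the Bayes error of the equal‑prior two‑point test $H_0:\sigma_0$ versus $H_1:\sigma[\sigma_0]$, whose optimal rule is the likelihood‑ratio test on $\Lambda\eqDef\log\big(\Pr_{\sigma[\sigma_0]}(\mbf{A})/\Pr_{\sigma_0}(\mbf{A})\big)$.

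Next I would compute $\Lambda$ explicitly. Because $\sigma_0$ and $\sigma[\sigma_0]$ agree off node~$1$, only hyperedges $\bm{l}=(1,l_2,\ldots,l_d)$ through node~$1$ enter $\Lambda$, and among those only the ones whose community relation actually changes. Relabelling a single node shifts exactly one unit in the histogram, so every such change is between two relations at $\ell_1$‑distance $2$, i.e.\ a neighbouring pair $(r_i,r_j)\in\mcal{N}_d$ --- which is precisely why no non‑neighbouring divergence term survives. The affected edges split into forward transitions (relation $r_i$ under $\sigma_0$ and $r_j$ under $\sigma[\sigma_0]$, $i<j$), whose entries are $\Ber(p_i)$ under $H_0$, and reverse transitions ($r_j\to r_i$), whose entries are $\Ber(p_j)$ under $H_0$. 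A direct expansion of the Bernoulli log‑likelihoods shows that each forward edge contributes $-C_{p_i p_j}X_u^{(r_i)}+\log\frac{1-p_j}{1-p_i}$ to $\Lambda$, while each reverse edge contributes $C_{p_i p_j}X_u^{(r_j)}-\log\frac{1-p_j}{1-p_i}$.

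The combinatorial heart of the argument is to show that forward and reverse transitions between any neighbouring pair occur in equal numbers. In $\Theta_d^L$ the sets $A\setminus\{1\}$ and $B$ both have size exactly $n^{\prime}$, so the involution on node‑$1$ edges that interchanges their $A$‑coordinates with their $B$‑coordinates is a bijection carrying each forward $r_i\to r_j$ edge to a reverse $r_j\to r_i$ edge; hence both are counted by $m_{r_i r_j}$. The constant terms $\pm\log\frac{1-p_j}{1-p_i}$ then cancel in pairs, and under $H_0$ the log‑likelihood ratio collapses to $T\eqDef\sum_{i<j:(r_i,r_j)\in\mcal{N}_d}\sum_{u=1}^{m_{r_i r_j}}C_{p_i p_j}\big(X_u^{(r_j)}-X_u^{(r_i)}\big)$, with all Bernoullis independent because distinct hyperedges are independent. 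By the same $A\leftrightarrow B$ symmetry the law of $-\Lambda$ under $H_1$ matches that of $\Lambda=T$ under $H_0$, so the two tails of the two‑point Bayes error coincide with $\Pr\{T\ge 0\}$ up to the boundary event $\{T=0\}$, which the min‑distance averaging built into the local loss is designed to absorb. This yields $\Risk_{\sigma\sim\Unif}(\what{\sigma}(1))\ge\Pr\{T\ge 0\}$ as claimed.

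I expect the main obstacle to be this count‑matching step for general order $d$: one must enumerate which node‑$1$ hyperedges change relation under a single‑node relabelling, confirm that the resulting transitions are exactly the neighbouring pairs of $\mcal{N}_d$ with matched forward/reverse multiplicities $m_{r_i r_j}$, and verify that the constants cancel \emph{exactly} rather than only to leading order. This is the ``curse of dimension'' the paper flags, since the number of relation types grows with $d$. By contrast, the reduction to two‑point testing and the final symmetrisation are comparatively routine, paralleling the graph argument of \cite{ZhangZhou_16}; turning $\Pr\{T\ge 0\}$ into the exponent $\sum_{i<j}m_{r_i r_j}I_{p_i p_j}$ is deferred to the subsequent Rozovsky large‑deviation step, where the particular choice of the symmetric difference statistic $T$ is exactly what places the optimal Chernoff tilt at $1/2$ and produces the order‑$1/2$ R\'{e}nyi divergence $I_{p_i p_j}$.
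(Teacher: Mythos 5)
Your proposal is correct and follows essentially the same route as the paper's proof: the global-to-local reduction via Lemma~\ref{lma:global2local}, pairing each $\sigma_0\in\Theta_d^L$ with the single-node flip $\sigma[\sigma_0]$, lower-bounding by the equal-prior two-point Bayes test, and expanding the log-likelihood ratio over node-$1$ hyperedges so that only neighbouring-pair transitions in $\mcal{N}_d$ survive and the statistic collapses to $\sum_{i<j}\sum_u C_{r_ir_j}(X_u^{(r_j)}-X_u^{(r_i)})$. Your explicit involution argument matching forward and reverse transition counts (via $|A\setminus\{1\}|=|B|=n^{\prime}$) is in fact spelled out more carefully than in the paper, which asserts this cancellation implicitly; the remaining boundary-event caveat at $\{T=0\}$ is glossed over at the same level in both treatments.
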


\noindent
With the aid of the Rozovsky lower bound \cite{Rozovsky_03}, we are able to prove the following auxiliary result.

\begin{lemma}
\label{lma:rozovsky}
If $\sum_{i<j:(r_i,r_j)\in\mcal{N}_d}m_{r_i r_j}I_{p_i p_j} \ra\infty$, then
\begin{equation}
\label{eq:lmaroz}
\Pr\Bigg\{ \adjustlimits\sum_{i<j:(r_i,r_j)\in\mcal{N}_d}\sum_{u=1}^{m_{r_i r_j}} C_{r_i r_j}\lp X_u^{(r_j)}-X_u^{(r_i)}\rp \geq 0\Bigg\} \geq \exp\bigg( -(1+o(1)) \sum_{i<j:(r_i,r_j)\in\mcal{N}_d} m_{r_i r_j}I_{p_i p_j} \bigg).
\end{equation}
\end{lemma}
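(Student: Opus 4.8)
The inequality is a sharp large-deviation lower bound, and I would obtain it by \emph{exponential tilting} (change of measure) together with the Rozovsky lower bound, whose role is precisely to furnish the matching $(1+o(1))$ constant in the exponent for a sum of independent, not-necessarily-identically-distributed summands. Abbreviating $Y_u^{(ij)} \eqDef C_{r_i r_j}\lp X_u^{(r_j)} - X_u^{(r_i)}\rp$ and $S \eqDef \sum_{i<j:(r_i,r_j)\in\mcal{N}_d}\sum_{u=1}^{m_{r_i r_j}} Y_u^{(ij)}$, note that $S$ is a sum of mutually independent, bounded random variables with \emph{negative} mean (since $p_i\geq p_j$ for $i<j$ and $C_{r_i r_j}=\log\frac{f(p_i)}{f(p_j)}\geq0$), so $\{S\geq0\}$ is a genuine large-deviation event and its exponent is the Cram\'er transform $\Lambda^{\ast}(0)$.

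The crux is to locate the optimal tilt and to expose the R\'enyi-$1/2$ structure. For a single pair, a direct computation gives the moment generating function
\begin{equation*}
\E\lb e^{t Y^{(ij)}}\rb = \lp 1-p_j+p_j e^{tC_{r_i r_j}}\rp\lp 1-p_i+p_i e^{-tC_{r_i r_j}}\rp,
\end{equation*}
and minimizing over $t$ yields the minimizer $e^{tC_{r_i r_j}}=\sqrt{\frac{p_i(1-p_j)}{p_j(1-p_i)}}$, that is $t=\tfrac12$ \emph{uniformly over all pairs}, with minimal value $\lp\sqrt{p_ip_j}+\sqrt{(1-p_i)(1-p_j)}\rp^2 = e^{-I_{p_i p_j}}$. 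Thus the single scalar $t^{\ast}=\tfrac12$ simultaneously minimizes every factor, so $M\eqDef\E\lb e^{S/2}\rb = \exp\lp-\sum_{i<j:(r_i,r_j)\in\mcal{N}_d} m_{r_i r_j}I_{p_i p_j}\rp$, and under the tilted law $\wtild{\Pr}$ defined by $d\wtild{\Pr} = e^{S/2}M^{-1}\,d\Pr$ each $Y_u^{(ij)}$, and hence $S$ itself, has zero mean. This identifies the exponent as exactly $\sum m_{r_i r_j}I_{p_i p_j}$, matching the Chernoff upper bound.

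The change-of-measure identity then gives $\Pr\{S\geq0\} = M\,\E_{\wtild{\Pr}}\lb e^{-S/2}\Indc{S\geq0}\rb \geq M e^{-\eps_n/2}\,\wtild{\Pr}\{0\leq S\leq\eps_n\}$ for any $\eps_n>0$. Choosing $\eps_n \approx \sqrt{\mrm{Var}_{\wtild{\Pr}}(S)}$ and invoking a Berry--Esseen / local central limit estimate for the mean-zero sum $S$ under $\wtild{\Pr}$ keeps $\wtild{\Pr}\{0\leq S\leq\eps_n\}$ bounded below by a positive constant; meanwhile, since the hypothesis $\sum m_{r_i r_j}I_{p_i p_j}\to\infty$ forces $\sqrt{\mrm{Var}_{\wtild{\Pr}}(S)} = o\lp\sum m_{r_i r_j}I_{p_i p_j}\rp$, the correction $e^{-\eps_n/2}$ is only $\exp\lp o(\sum m_{r_i r_j}I_{p_i p_j})\rp$. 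Packaging exactly this chain for independent non-identical summands is what the Rozovsky lower bound accomplishes, and it delivers \eqref{eq:lmaroz} with the claimed $(1+o(1))$ factor directly.

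The main obstacle is verifying the regularity hypotheses of the Rozovsky bound in the present \emph{sparse, non-identically-distributed} regime: the Bernoulli parameters $p_i\approx n^{-(d-1)}$ vanish with $n$, so the tilted summands are nearly degenerate, and one must check a uniform Cram\'er condition together with a Lyapunov-type control on the ratio of third absolute moments to the variance, so that the polynomial correction stays genuinely subexponential relative to $\sum m_{r_i r_j}I_{p_i p_j}$. Here the standing assumption $a_i\approx a_j$ (all connecting probabilities of the same order), and the resulting boundedness $C_{r_i r_j}=O(1)$, are exactly what keep the moment ratios and the tilt under control, which closes the argument.
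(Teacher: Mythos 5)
Your proposal is correct, and it reaches the bound by a genuinely different route than the paper. The paper first \emph{factorizes}: writing $S_{ij} \eqDef \sum_{u=1}^{m_{r_i r_j}} C_{r_i r_j}\lp X_u^{(r_j)}-X_u^{(r_i)}\rp$, it uses independence across pairs and the inclusion $\bigcap_{(i,j)}\{S_{ij}\geq 0\}\subseteq\{S\geq 0\}$ to lower bound the probability by $\prod_{(i,j)}\Pr\lbp S_{ij}\geq0\rbp$; it then applies Rozovsky's i.i.d. theorem to each pair separately (with $x=0$, identifying $u^{\ast}=\tfrac12$ and $Q(u^{\ast})=I_{p_i p_j}$, exactly your MGF computation), and finally absorbs the per-pair corrections $c_{r_i r_j}\lp1+\sqrt{m_{r_i r_j}I_{p_i p_j}}\rp$ into $o\lp\sum m_{r_i r_j}I_{p_i p_j}\rp$ via Cauchy--Schwarz and the fact that $\kappa_d=O(1)$. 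You instead tilt the whole inhomogeneous sum at once, exploiting the same key fact (the common minimizer $t^{\ast}=\tfrac12$, under which every summand becomes centered and $\E\lb e^{S/2}\rb=e^{-\sum m_{r_i r_j}I_{p_i p_j}}$), and replace the black-box i.i.d. theorem with an explicit change of measure plus a Berry--Esseen estimate under the tilted law. The paper's route buys an off-the-shelf citation once the problem is reduced to i.i.d. blocks --- that is precisely what the factorization step is for, and it costs nothing since the number of relation pairs is constant. Your route buys self-containedness and robustness: no per-pair decomposition is needed, only the total tilted variance against the total exponent, and the regularity that the paper leaves implicit (Rozovsky's constant depends on the law of the summands, which here varies with $n$ in the sparse regime) is checked explicitly.

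Two small corrections. First, the Rozovsky theorem as cited in the paper is stated for i.i.d. summands, so your closing claim that it already ``packages'' the independent non-identical case is a misattribution; this is harmless only because your tilting-plus-Berry--Esseen chain is itself a complete proof and needs no citation. Second, the assertion that the hypothesis $\sum m_{r_i r_j}I_{p_i p_j}\ra\infty$ alone \emph{forces} $\sqrt{\mrm{Var}_{\wtild{\Pr}}(S)}=o\lp\sum m_{r_i r_j}I_{p_i p_j}\rp$ is not right as stated: what one actually needs is $\mrm{Var}_{\wtild{\Pr}}(S)=O\lp\sum m_{r_i r_j}I_{p_i p_j}\rp$ (and, for the Berry--Esseen step, the matching lower bound so that the variance diverges), and both follow from the bounded ratios $p_i/p_j$ --- equivalently $a_i\approx a_j$ --- together with $C_{r_i r_j}=O(1)$, not from divergence of the sum alone. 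You do invoke these standing assumptions in your last paragraph, so the argument closes, but that dependence belongs at the step where it is used; note the paper's own proof carries the same implicit dependence through the uniformity in $n$ of its constants $c_{r_i r_j}$.
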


\begin{proof}[Proof of Theorem~\ref{thm:lower}]
Finally, since the Bayesian risk always lower bound the minimax risk, we have
\begin{align*}
\Risk_d^{\ast} &= \adjustlimits \inf_{\what{\sigma}}\sup_{\sigma\in\Theta_d^0} \E_{\sigma}\loss(\what{\sigma},\sigma) \geq \adjustlimits \inf_{\what{\sigma}}\sup_{\sigma\in\Theta_d^L} \E_{\sigma}\loss(\what{\sigma},\sigma) \\
 &\geq \inf_{\what{\sigma}} \Risk_{\sigma\sim\Unif}(\what{\sigma}) = \inf_{\what{\sigma}} \Risk_{\sigma\sim\Unif}(\what{\sigma}(1)) \\
 &\geq \Pr\Bigg\{ \adjustlimits\sum_{i<j:(r_i,r_j)\in\mcal{N}_d}\sum_{u=1}^{m_{r_i r_j}} C_{r_i r_j}\lp X_u^{(r_j)}-X_u^{(r_i)}\rp \geq 0\Bigg\} \\
 &\geq \exp\bigg( -(1+o(1))\sum_{i<j:(r_i,r_j)\in\mcal{N}_d} m_{r_i r_j}I_{p_i p_j} \bigg).
\end{align*}
\end{proof}

\section{Experimental Results}
\label{sec:expe}
The advantage of clustering with a hypergraph representation over traditional graph-based approaches has been reported in the literature \cite{AgarwalLim_05}, \cite{ZhouHuang_06}, \cite{GhoshdastidarDukkipati_15}, \cite{GhoshdastidarDukkipati_17}. Here, we present a comparative study of our two-step algorithm on generative $3$-{\hSBM} data with existing method, which manifests that our proposed algorithm indeed has a better performance
and that the second refinement step is actually crucial in achieving a lower risk.
In the following summary statistics, \textbf{Algo~2} refers to our first-step spectral clustering (i.e. Algorithm~\ref{alg:spec_init}) and \textbf{Algo~1} refers to the combined two-step workflow (i.e. Algorithm~\ref{alg:refine} on top of Algorithm~\ref{alg:spec_init}). We separate them apart to further see how much it can be improved on the mismatch ratio by using the local refinement mechanism.

We compare the performance of our Algorithm~\ref{alg:spec_init} and Algorithm~\ref{alg:refine} with the spectral non-uniform hypergraph partitioning (SnHP) method \cite{GhoshdastidarDukkipati_17} using the hypergraph Laplacian proposed in \cite{ZhouHuang_06} on generative $3$-{\hSBM} data. The parameter space tested is homogeneous and exactly equal-sized, which means that each community has the same number of members. This "nodes per community" parameter $n^{\prime}=n/k$ scales from $20,30,\cdots$ to $100$, while the number of communities $k$ varies from $2,3,\cdots$ to $10$. We set the connecting probability parameter $\bm{p}$ to be $(60,30,10)\times\log(n)/n^2$ for each possible value of $n=k\times n^{\prime}$. Note that the order of $\bm{p}$ is as prescribed for the sparse regime discussed in Section~\ref{sec:theo}. The choice of this particular triplet is to ensure that the generated hypergraphs are not too sparse to be connected. Specifically, the total number of realized hyperedges are roughly in the order of $4n\log(n)$ to $5 n\log(n)$. The performance under each scenario, i.e. each pair of $(k,n^{\prime})$, is averaged over $25$ random realizations. Figure~\ref{fig:syn} below summarizes our simulation results.

\begin{figure}[ht]
\centering
\subfigure[$k=3$]{
    \includegraphics[width=0.85\textwidth]{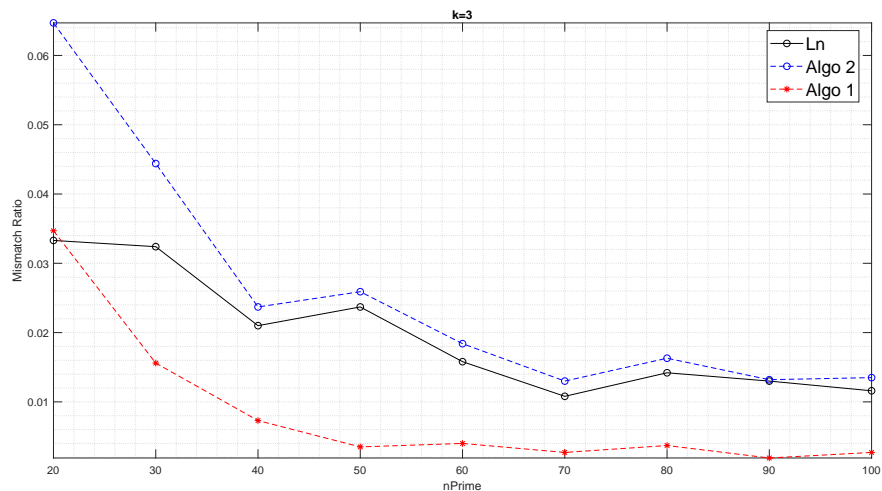}
}
\vspace{0pt}
\subfigure[$k=6$]{
    \includegraphics[width=0.85\textwidth]{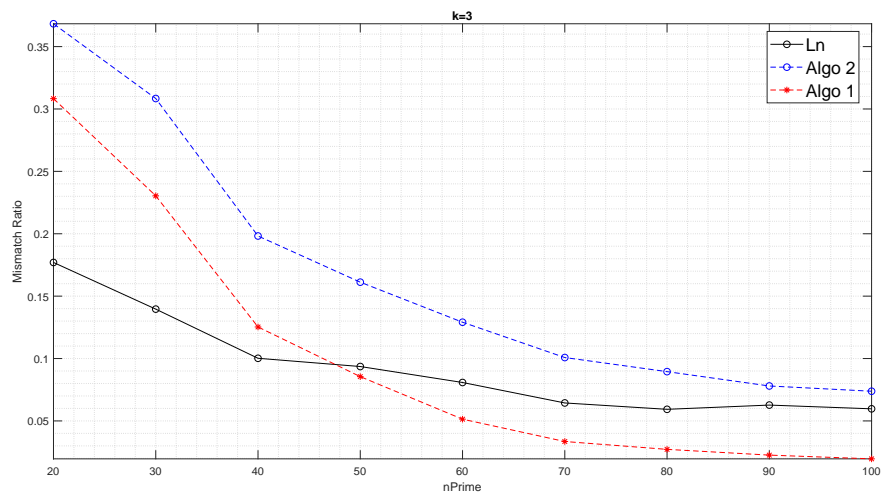}
}
\caption{Performance of Algo~2 and Algo~1 Compared to the Spectral Method \cite{GhoshdastidarDukkipati_17}.}
\label{fig:syn}
\end{figure}


Except for the first few scenarios where the total number of nodes $n$ are quite small, we can see that our spectral clustering algorithm performs roughly as well as the algorithm in \cite{GhoshdastidarDukkipati_17}. This somewhat indicates that the weak consistency condition Condition~\ref{con:1st_step} can also be satisfied by using the hypergraph Laplacian proposed by \cite{ZhouHuang_06} as the first initialization step.
Furthermore, the refinement scheme indeed has a better performance of the spectral clustering method in terms of the mismatch ratio. Observe that the improvement due to the second step becomes larger as $k$ (and hence $n$) increases.

\section{Discussions}
\label{sec:disc}


The idea of using the hypergraph Laplacian matrix in the clustering problem can be traced back to \cite{ZhouHuang_06}. However, 
statistical guarantee of its performance remains open. Theoretical guarantees of our hypergraph clustering method developed in \autoref{subsec:theo_specclus} can be viewed as an answer to this open question.

The proposed hypergraph clustering method basically encodes the group-wise interactions in an effective weighted graph, where the weight of each edge records the number of hyperedges involving the two nodes. This is very similar to the Labeled {\SBM} model studied in previous works \cite{YunProutiere_16}, where each edge takes on a nonnegative weight (label) in a finite field and the appearance of each edge is independent to the rest of the graph. 
The key difference between our algorithm and those in Labeled {\SBM} hence lies in the refinement step: for our $d$-{\hSBM} model, values of weight on the edges of the effective weighted graph are not mutually independent, as they are compressed results from higher-order interactions. The independence assumption in Labeled {\SBM} may not be practical in some applications. When viewed as the level of participation in various group interactions among the network, the weight between two entities may well depend on a hidden third party. 
Our approach can thus serve as a solution to that problem, since we directly treat the group-wise interactions in a higher-order form in the local refinement step. 

After exploring the similarities and differences between Labeled {\SBM} and hypergraph {\SBM}, an even more complete treatment of the problem of community detection would be a generalization of $d$-{\hSBM} to a weighted (or labeled) version. In either the Bayesian framework \cite{YunProutiere_16} or the minimax setting \cite{JogLoh_15}, it turns out that the R\'{e}nyi divergence of order $1/2$ still appears in the characterization of the threshold behavior of exact recovery. Extending from the divergence between two simple Bernoulli distributions, in a labeled observation of the network the R\'{e}nyi divergence becomes
\begin{equation}
\label{eq:Renyi_div_general}
I_{pq}^{\ell} =
\begin{cases}
-2\log\lp \int_{-\infty}^{\infty} \sqrt{p_n(x)q_n(x)}dx \rp & \text{, for continuous distribution on } \mbb{R} \\
-2\log\lp \sum_{\ell\geq0} \sqrt{p_n(\ell)q_n(\ell)} \rp & \text{, for discrete distribution on } \mbb{N}
\end{cases}
\end{equation}
between two more general edge weight distributions $p_n(x)$ and $q_n(x)$. Note that \eqref{eq:Renyi_div_Ber} is a special case of \eqref{eq:Renyi_div_general} when $p_n(x)=\Ber(p)$ and $q_n(x)=\Ber(q)$. We envision that $I_{pq}^{\ell}$ would play a major role characterizing the minimax risk in the extended labeled $d$-{\hSBM} model as well, and leave it for a possible direction of future work. Specifically, we conjecture that the minimax risk in such a labeled hypergraph probabilistic model would also be an exponentially decaying risk and the error exponent would be in the form of
\begin{equation*}
\sum_{i<j:(r_i,r_j)\in\mcal{N}_d} m_{r_i r_j}I_{p_i p_j}^{\ell}
\end{equation*}
where $I_{p_i p_j}^{\ell}$ is the R\'{e}nyi divergence of order $1/2$ between two hyperedge weight distributions $p_i=p_i(x)$ and $p_j=p_j(x)$.


Finally, we would like to comment on the extendability of the two-step algorithm and our proof techniques. 
The refine-after-initialization methodology is introduced in \cite{GaoMa_15} to achieve the minimax risk. 
We generalize this idea to the hypergraph setting and reach the conclusion that the minimax risk in $d$-{\hSBM} is also an exponential rate. Besides, the exponent of the minimax risk consists of terms of pairwise comparison which are one node different. This can be directly identified with the case that is most difficult to recover where there is only one node mis-classified. The matching of the form of the local MLE risk to the form derived in the converse is the key for proving optimality. 
The robustness of our two-step algorithm lies in the fact that it is able to achieve the minimax risk under any probabilistic model that assumes the independence of each ``group-wise'' interaction. The estimator of the parameters is adapted to the hypothesized probabilistic model and the local likelihood function is adjusted accordingly. 
Hence, with a different underlying probabilistic model, we believe the two-step algorithm and our proofs will still work, as long as the refinement step is adjusted according to the model.





\bibliographystyle{IEEEtran}
\bibliography{Ref}

\clearpage
\appendices




\section{Proof of Lemma~\ref{lma:param_estim}}
\label{app:lemma4param_estim}
Fix any $(\mbf{B},\sigma)\in\Theta_d^{0}$ and $u\in[n]$. We denote the induced community structure on the $n$ nodes as $[n]=\cup_{i=1}^k C_i$ where $C_i = \{v\in[n]\mid\sigma(v)=i\}$ is the $i$-th comomunity.
Define the event
\begin{equation}
\label{eq:good_init}
E_u \eqDef \big\{ \loss_0((\what{\sigma}_u)_{\pi_u},\sigma)\leq \gamma \big\}
\end{equation}
For simplicity, we assume that $\pi_u$ is the identity permutation. Fix any $i\in[k]$. Then, on $E_u$ we have
\begin{equation}
\label{Comcond}
n_i \geq \big|\wtild{C}_i^u\cap C_i\big| \geq n_i-\gamma_1n, \ \big|\wtild{C}_i^u\cap C_i\big|\leq \gamma_2n
\end{equation}
where $\gamma_1,\gamma_2 \geq 0$ and $\gamma_1+\gamma_2 \leq \gamma$.
Let $C_i^{\prime}$ be a deterministic subset of $[n]$ such that \eqref{Comcond} holds with $\wtild{C}_i^u$ replaced by $C_i^{\prime}$. By definition, there are at most
\begin{equation}\label{UNION}
   \sum_{l=0}^{\gamma n}\binom{n_i}{l}\sum_{m=0}^{\gamma n}\binom{n-n_i}{m}\leq \exp\lp C_1\gamma n \log \frac{1}{\gamma}\rp
\end{equation}
different subsets with this property for some absolute constant $C_1>0$. In the following, we will go through the case $\big|\what{\mrm{B}}_{i\cdot\mbf{1}}^u - \mrm{B}_{i\cdot\mbf{1}}\big| \leq o(\max_{(i,j):(r_i,r_j)\in \mcal{N}_d}p_i-p_j)$
where $i\cdot\mbf{1} \eqDef (i,i,\ldots,i)$ corresponds to the all-community-$i$ connection.
For the rest of the cases, we can easily follow an similar procedure to obtain the desired upper bound.

Let $\xi_{i}^{\prime}$ be the edges within $C_i^{\prime}$. Note that $\xi_{i}^{\prime}$ consists of $\binom{n_i}{d}$ independent Bernoulli random variables. The number of truly $\Ber(\mrm{B}_{i\cdot\mbf{1}})$'s is at least $\binom{n_i-\gamma n}{d}$. By an simple combinatorial argument, we have
\begin{equation}
\label{eq2}
\begin{aligned}
\E\Bigg[\frac{\lba\xi_{i}^{\prime}\rba}{\binom{\lba C_i^{\prime}\rba}{d}}\Bigg] \geq \min_{t\in[0,\gamma k]} \Bigg\{ p_i-(1-(1-t)^d)(p_i-p_{\mcal{K}_{d}})\Bigg\}
\end{aligned}
\end{equation}
\begin{equation}
\label{eq3}
\begin{aligned}
\E\Bigg[\frac{\lba\xi_{i}^{\prime}\rba}{\binom{\lba C_i^{\prime}\rba}{d}}\Bigg] \geq \max_{t\in[0,\gamma k]} \Bigg\{ p_i+(1-(1-t)^d)(p_1-p_i)\Bigg\}
\end{aligned}
\end{equation}
Note that $p_i$ equals $p_1$ in this case. In general, though, the estimation of all the parameters have a similar formula, and therefore we use $p_i$ still. 
Since $d$ is constant, \eqref{eq2} becomes $p_i-o(\max_{(i,j):(r_i,r_j)\in\mcal{N}_d}p_i-p_j)$ by breaking $p_i-p_{\mcal{K}_{d}}$ into pairwise difference. Similarly, \eqref{eq3} would be $p_i+o(\max_{(i,j):(r_i,r_j)\in\mcal{N}_d}p_j-p_i)$ (since $k\gamma= o(1)$ is assumed). Together,
\begin{equation}
\label{eq5}
\Bigg|\E\Bigg[\frac{\lba\xi_{i}^{\prime}\rba}{\binom{\lba C_i^{\prime}\rba}{d}}\Bigg]-\mrm{B}_{i\cdot\mbf{1}}\Bigg| \leq o\Big(\max_{(i,j):(r_i,r_j)\in\mcal{N}_d}p_i-p_j\Big)
\end{equation}
On the other hand, by the Bernstein's inequality,
\begin{align*}
\Pr\lbp\big|\lba\xi_{i}^{\prime}\rba-\E\lba\xi_{i}^{\prime}\rba\big|>t\rbp \leq 2\exp\bigg( -\frac{t^2}{2\lp\binom{n_i-\gamma n}{d}p_1+\frac{2}{3}t\rp} \bigg)
\end{align*}
Let
\begin{align*}
t^2 &= \binom{n_i-\gamma n}{d}p_1\lp C_1\gamma n \log \gamma^{-1}+(3+\delta)\log n\rp \vee \lp2C_1\gamma n \log\gamma^{-1}+2(3+\delta)\log n\rp^2 \\
 &\lesssim \lp \frac{n^{\prime}}{k^{d-1}} \sqrt{n^{d-1}p_1\gamma \log \gamma^{-1}}+\gamma n \log\gamma^{-1}\rp^2
\end{align*}
We have
\begin{equation*}
\Pr\bigg\{ \big|\lba\xi_{i}^{\prime}\rba-\E\lba\xi_{i}^{\prime}\rba\big| > C_{\delta}\Big( \frac{n^{\prime}}{k^{d-1}}\sqrt{n^{d-1}p_1\gamma \log \gamma^{-1}}+\gamma n \log\gamma^{-1} \Big)\bigg\} \leq \exp\lp-C_1\gamma n \log\gamma^{-1}\rp n^{-(3+\delta)}
\end{equation*}
Hence, with probability at least
\begin{equation*}
1-\exp\lp-C_1\gamma n \log\gamma^{-1}\rp n^{-(3+\delta)}
\end{equation*}
, we have
\begin{align*}
\Bigg| \frac{\lba\xi_{i}^{\prime}\rba}{\binom{\lba C_i^{\prime}\rba}{d}}-\E\Bigg[\frac{\lba\xi_{i}^{\prime}\rba}{\binom{\lba C_i^{\prime}\rba}{d}}\Bigg]\Bigg| \leq C_{\delta}\lp \Big(\frac{1}{n}\Big)^{d-1}\sqrt{n^{d-1}p_1\gamma \log\gamma^{-1}}+\gamma \frac{k^3}{n^2} \log\gamma^{-1}\rp
\end{align*}
Since $k\gamma \log \gamma^{-1} = O(1)$ and with the assumption $\max_{(i,j):(r_i,r_j)\in\mcal{N}_d} m_{r_i r_j}I_{p_i p_j}\rightarrow \infty$, $p_1\approx p_{\mcal{K}_{d}} $, we further have
\begin{equation}
\label{eq4}
\Bigg| \frac{\lba\xi_{i}^{\prime}\rba}{\binom{\lba C_i^{\prime}\rba}{d}}-\E\Bigg[\frac{\lba\xi_{i}^{\prime}\rba}{\binom{\lba C_i^{\prime}\rba}{d}}\Bigg] \Bigg| \leq o\Big(\max_{(i,j):(r_i,r_j)\in\mcal{N}_d}p_i-p_j\Big)
\end{equation}
at least $1-\exp\lp-C_1\gamma n \log\gamma^{-1}\rp n^{-(3+\delta)}$ in probability.
Combining \eqref{eq4}, \eqref{eq5} and apply the Union Bound over \eqref{UNION}, we have
\begin{equation*}
\Bigg|\frac{\lba\xi_{i}^{\prime}\rba}{\binom{\lba C_i^{\prime}\rba}{d}}-\mrm{B}_{i\cdot\mbf{1}}\Bigg| \leq o\Big(\max_{(i,j):(r_i,r_j)\in\mcal{N}_d}p_i-p_j\Big)
\end{equation*}
with probability at least $1-n^{-(3+\delta)}$.

The proof for the rest $\mrm{B}_{\bm{s}},\bm{s}\in[k]^d$ are all similar and thus omitted. The key observation is that by the requirement on $\gamma$, we will only have
$o(1)$ misclassification proportion. This implies that for each sample mean, the proportion of "correct" random variables will dominate the "incorrect" ones. Thus, we obtain the result of the expectation of sample mean will deviate from the true parameter no larger than $o(\max_{(i,j):(r_i,r_j)\in\mcal{N}_d}p_i-p_j)$. The second part
bound the probability that the sample mean deviates too much from its expectation. Note that we can choose a proper $t$ in the Berstein's inequality to make sure that the error probability will still be desirably small after the union bound. Hence, we complete the proof.


\section{Proof of Lemma~\ref{lma:localMLE_risk}}
\label{app:lemma4localMLE_risk}
Without loss of generality, we assume that $\pi_u$ is the identity permutation and node $u$ belongs to the first community. Also, the access index is denoted as $\bm{i}_u \eqDef (u,i_2,\ldots,i_d)$ and $M_p(t) \eqDef pe^t+1-p$ is the MGF of a $\Ber(p)$ random variable. We have
\begin{equation*}
\Pr\lbp\what{\sigma}_u(u)\neq1 \text{ and } E_u \rbp \leq \sum_{l\neq1}p_l
\end{equation*}
where $E_u$ is the event \eqref{eq:good_init} of a good initialization. On $E_u$, $p_l$ is defined as the probability of the following error event.
\begin{equation}
\label{eq:localMLE_fail}
\lbp \what{L}_u(\what{\sigma}_u,t;\mbf{A}) \geq \what{L}_u(\what{\sigma}_u,1;\mbf{A}) \rbp
\end{equation}

Recall that the initial method $\what{\sigma}_u$ determines all the assignments except for the $u$-th node before the refining process. We write $\bm{i}_u \overset{\what{\sigma}_u}{\sim} r(t_u)$ to indicate the fact that now the community relation $r$ within nodes $u,i_2,\ldots,i_d$ depends on the label of node $u$, which is to be decided. Similarly, we denote the estimated connection probability parameter $\what{p}$ as $\what{p}(t_u)$. Then, the event \eqref{eq:localMLE_fail} is equivalent to
\begin{equation*}
\bigg\{ \sum_{\bm{i}_u} \mrm{A}_{\bm{i}_u}\log\frac{\what{p}(l)(1-\what{p}(1))}{\what{p}(1)(1-\what{p}(l))} + \log\frac{1-\what{p}(1)}{1-\what{p}(l)} \geq 0 \bigg\}
\end{equation*}
Note that the summation is over all possible $i_2<\cdots<i_d$. We can also write \eqref{eq:localMLE_fail} in the form of pairwise comparison. Specifically, let $\what{\nu}_{ij} = \what{\nu}_{ij}(1,l) \eqDef \log\frac{\what{p}_i(l)(1-\what{p}_j(1))}{\what{p}_j(1)(1-\what{p}_i(l))}$ and $\what{\lambda}_{ij} = \what{\lambda}_{ij}(1,l) \eqDef \log\frac{1-\what{p}_i(1)}{1-\what{p}_j(l)}$. The error event is thus further equal to
\begin{equation}
\label{eq:error_paircomp}
\begin{aligned}
\bigg\{ \sum_{(i,j):(r_i,r_j)\in\mcal{N}_d} &\Big( \sum_{\substack{\bm{i}_u\overset{\what{\sigma}_u}{\sim}r(1)=r_i \\ \bm{i}_u\overset{\what{\sigma}_u}{\sim}r(l)=r_j}} \what{\nu}_{ij}\mrm{A}_{\bm{i}_u} + \what{\lambda}_{ij} + \sum_{\substack{\bm{i}_u\overset{\what{\sigma}_u}{\sim}r(1)=r_j \\ \bm{i}_u\overset{\what{\sigma}_u}{\sim}r(l)=r_i}} \what{\nu}_{ji}\mrm{A}_{\bm{i}_u} + \what{\lambda}_{ji} \Big) \geq 0 \bigg\}
\end{aligned}
\end{equation}
The inner two summations contain $n_{ij}^{(1,l)}$ and $n_{ji}^{(1,l)}$ random variables, respectively, where
\begin{equation*}
n_{ij}^{(1,l)} \eqDef \lba\lbp \bm{i}_u \mid \bm{i}_u\overset{\what{\sigma}_u}{\sim}r(1)=r_i \text{ and } \bm{i}_u\overset{\what{\sigma}_u}{\sim}r(l)=r_j \rbp\rba
\end{equation*}

Observe that not all $\mrm{A}_{\bm{i}_u}$'s in the summand associated with $n_{ij}^{(1,l)}$ would really be $\Ber(p_i)$. The reason is that there are still a few nodes misclassified by the initialization $\what{\sigma}_u$. Nevertheless, since we require that $\what{\sigma}_u$ satisfy Condition~\ref{con:1st_step}, it can be shown that there are only $o(1) n_{ij}^{(1,l)}$ of random variables in the summand associated with $n_{ij}^{(1,l)}$ are not $\Ber(p_i)$. Therefore, we can apply the Chernoff bound on $\Pr\lbp\eqref{eq:error_paircomp}\rbp$ to obtain
\begin{equation}
\label{ORZ}
\Pr\lbp\eqref{eq:error_paircomp}\rbp \leq \prod_{(i,j):(r_i,r_j)\in\mcal{N}_d} \lp \text{Part 1}\cdot\text{Part 2} \rp
\end{equation}
where
\begin{equation*}
\text{Part 1} = \exp\Big( -\frac{1}{2}\what{\lambda}_{ji}(n_{ji}^{(1,l)}-n_{ij}^{(1,l)}) \Big) \cdot M_{p_j}\big(\frac{\what{\nu}_{ij}}{2}\big)^{n_{ji}^{(1,l)}} M_{p_i}\big(\frac{-\what{\nu}_{ij}}{2}\big)^{n_{ij}^{(1,l)}}
\end{equation*}
and
\begin{equation*}
\text{Part 2} = \lb\sup_{p\in\{p_1,\ldots,p_{\mcal{K}_{d}}\}} \frac{M_p(\frac{\what{\nu}_{ij}}{2})}{M_j(\frac{\what{\nu}_{ij}}{2})}\rb^{O(k\gamma)n_{ji}^{(1,l)}} \cdot \lb\sup_{p\in\{p_1,\ldots,p_{\mcal{K}_{d}}\}} \frac{M_p(-\frac{\what{\nu}_{ij}}{2})}{M_i(\frac{-\what{\nu}_{ij}}{2})}\rb^{O(k\gamma)n_{ij}^{(1,l)}}
\end{equation*}
First, since the parameter space we consider is an approximately equal-size one, each community has a size $(1\pm o(1))n^{\prime}$. In addition, Condition~\ref{con:1st_step} makes sure that the community size generated from $\what{\sigma}_u$ will still lie in $(1\pm o(1))n^{\prime}$. Thus, it is easy to find that
\begin{equation*}
n_{ij}^{(1,l)}\asymp n_{ji}^{(1,l)} \asymp m_{r_i r_j} \;\forall l\neq 1
\end{equation*}
Moreover, by a similar combinatorial argument as in our proof of Lemma \ref{lma:param_estim}, we know that the proportion of wrongly added random variables is $O(k\gamma)$. That is the reason we use $O(k\gamma)n_{ij}^{(1,l)}$ for the number of wrongly added random variables.

In the following, we are going to show that Part 1 can be upper bounded by $\exp(-(1-o(1))m_{r_i r_j}I_{p_i p_j})$ and Part 2 can be upper bounded by a vanishing term with respect to Part 1. With a similar technique as in \cite{GaoMa_15}, we could immediately prove that
\begin{equation}
\label{pfpart1}
\text{Part 1} \leq \exp(-(1-o(1))m_{r_i r_j}I_{p_i p_j})
\end{equation}
For the second part, we have, for all $i<j$,
\begin{equation*}
\max\lbp \Big|e^{\frac{\what{\nu}_{ij}}{2}}-1\Big|,\Big|e^{-\frac{\what{\nu}_{ij}}{2}}-1\Big|\rbp \leq C_2\frac{p_i-p_j}{p_i}
\end{equation*}
for some constant $C_2>0$. Thus,
\begin{align*}
\sup_{p\in\{p_1,\ldots,p_{\mcal{K}_{d}}\}} \frac{M_p(\frac{\what{\nu}_{ij}}{2})}{M_j(\frac{\what{\nu}_{ij}}{2})} &= 1 + \sup_{p\in\{p_1,\ldots,p_{\mcal{K}_{d}}\}}\frac{(p-p_j)\big(e^{\frac{\what{\nu}_{ij}}{2}}-1\big)}{p_j e^{\frac{\what{\nu}_{ij}}{2}}+1-p_j} \\
 &\leq 1 + O\Big( \sup_{p\in\{p_1,\ldots,p_{\mcal{K}_{d}}\}}\frac{(p-p_j)(p_i-p_j)}{p_i} \Big) \\
 &\leq \exp\bigg( O\Big(\sup_{p\in\{p_1,\ldots,p_{\mcal{K}_{d}}\}}\frac{(p-p_j)(p_i-p_j)}{p_i} \Big)\bigg)
\end{align*}
The second term of Part 2 can be bounded similarly. Together, Part 2 is upper bounded by
\begin{equation}
\label{pfpart2}
\begin{multlined}
\exp\bigg( O(k\gamma)m_{r_i r_j}\sup_{p\in\{p_1,\ldots,p_{\mcal{K}_{d}}\}}\frac{(p-p_j)(p_i-p_j)}{p_i}\bigg)
\end{multlined}
\end{equation}

\begin{enumerate}
\item \eqref{eq:gamma_further} holds: Then, \eqref{pfpart2} is upper bounded by
\begin{equation*}
\exp\Big(o\big(\frac{1}{\log k}\big)m_{r_i r_j}I_{p_i p_j} \Big) = \exp(o(1)m_{r_i r_j}I_{p_i p_j})
\end{equation*}
\item $k$ is a constant: Then, \eqref{pfpart2} is upper bounded by
\begin{equation*}
\exp\Big( o(1)m_{r_i r_j} \max_{(i,j):(r_i,r_j)\in\mcal{N}_d} I_{r_i r_j}\Big)
\end{equation*}
Note that this term will still be absorbed to the term in the summation $\sum_{(i,j):(r_i,r_j)\in\mcal{N}_d} m_{r_i r_j}I_{p_i p_j}$ that corresponds to $\max_{(i,j):(r_i,r_j)\in\mcal{N}_d} I_{r_i r_j}$ since $k = O(1)$.
\end{enumerate}

\noindent
Combining \eqref{pfpart1} and \eqref{pfpart2} into \eqref{ORZ} in either case, we complete the proof.

\section{Proof of Lemma~\ref{SPECLEMMADHSBM}}
\label{app:lemma4spec}
First, we state the lemmas that we are going to use.

\begin{lemma}\label{Bernsum}
For independent Bernoulli random variables $X_u \sim \Ber(p_u)$ and $p = \frac{1}{n}\sum_{u\in [n]}p_u$, we have
\begin{equation*}
\Pr\bigg\{ \sum_{u\in [n]}(X_u-p_u)\geq t\bigg\} \leq \exp\bigg(t-(np+t)\log\Big(1+\frac{t}{np}\Big) \bigg),\forall t\geq 0
\end{equation*}
\end{lemma}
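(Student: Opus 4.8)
The plan is to apply the standard Chernoff (exponential Markov) bound and optimize the free parameter. Write $S \eqDef \sum_{u\in[n]}(X_u - p_u)$. For any $\lambda > 0$, Markov's inequality applied to $e^{\lambda S}$ gives
\[
\Pr\{S \geq t\} \leq e^{-\lambda t}\, \E\big[e^{\lambda S}\big] = e^{-\lambda t}\prod_{u\in[n]} \E\big[e^{\lambda(X_u - p_u)}\big],
\]
where the factorization uses the independence of the $X_u$.

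Next I would bound each moment-generating factor individually. Since $X_u \sim \Ber(p_u)$,
\[
\E\big[e^{\lambda(X_u-p_u)}\big] = e^{-\lambda p_u}\big(1 + p_u(e^\lambda - 1)\big) \leq e^{-\lambda p_u}\, e^{p_u(e^\lambda - 1)} = e^{p_u(e^\lambda - 1 - \lambda)},
\]
where the inequality is the elementary estimate $1 + x \leq e^x$ applied with $x = p_u(e^\lambda - 1)$. Taking the product and using $\sum_{u\in[n]} p_u = np$ yields
\[
\Pr\{S \geq t\} \leq \exp\big(-\lambda t + np(e^\lambda - 1 - \lambda)\big).
\]

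Finally I would optimize the exponent over $\lambda > 0$. Differentiating $g(\lambda) \eqDef -\lambda t + np(e^\lambda - 1 - \lambda)$ and setting $g'(\lambda) = -t + np(e^\lambda - 1) = 0$ gives the minimizer $\lambda^\ast = \log\big(1 + \tfrac{t}{np}\big)$, which is nonnegative for every $t \geq 0$ so the admissibility constraint $\lambda > 0$ is respected. Substituting $e^{\lambda^\ast} = 1 + \tfrac{t}{np}$ back into $g$ collapses the expression to $t - (np + t)\log\big(1 + \tfrac{t}{np}\big)$, which is exactly the claimed bound.

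This is a textbook Chernoff--Bernstein tail estimate, so I do not anticipate any genuine obstacle; the only care required is the routine verification that the optimizer lies in the admissible range $\lambda > 0$ and that the single use of $1 + x \leq e^x$ suffices. Note that no assumption beyond independence is needed, and the bound holds for arbitrary, possibly distinct, success probabilities $p_u$, which is what makes it directly applicable to the (non-identically distributed) degree and Laplacian-entry sums arising in the proof of \autoref{SPECLEMMADHSBM}.
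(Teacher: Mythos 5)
Your proof is correct: the exponential-Markov bound, the factorization over independent $X_u$, the estimate $1+x\leq e^x$, and the optimization at $\lambda^{\ast}=\log\bigl(1+\tfrac{t}{np}\bigr)$ all check out, and substituting $np(e^{\lambda^{\ast}}-1)=t$ indeed collapses the exponent to $t-(np+t)\log\bigl(1+\tfrac{t}{np}\bigr)$. The comparison with the paper, however, is somewhat degenerate: the paper gives no proof of this lemma at all, but simply cites it as Corollary~A.1.10 of \cite{AlonSpencer_04}. What you have written is essentially the standard derivation underlying that cited result, so your contribution is to make the paper self-contained rather than to take a different route; this is useful because the argument makes explicit that only independence (not identical distribution) of the $X_u$ is used, which is exactly the feature the paper relies on when applying the lemma to the heterogeneous edge-count sums in Lemma~\ref{SPECLEMMA114anyD}. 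One cosmetic point: at $t=0$ your optimizer is $\lambda^{\ast}=0$, which sits on the boundary of the admissible range $\lambda>0$; the claimed bound there equals $1$ and is trivially true (or obtained by letting $\lambda\downarrow 0$ in the Chernoff bound), so this does not affect correctness, but the phrase ``the admissibility constraint $\lambda>0$ is respected'' should be softened to cover that boundary case.
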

\noindent
This lemma is Corollary~A.1.10 in \cite{AlonSpencer_04}.

\begin{lemma}\label{SPECLEMMA104anyD}
Consider the matrix $\mbf{A}_\mrm{H}\eqDef\mcal{L}(\mbf{A})$ derived from the unnormalized graph Laplacian for a realization hypergraph $\mbf{A}$. Also, denote $\mbf{P}_\mrm{H}\eqDef\E\mcal{L}(\mbf{A})$ as its expected version for ease of notation.
Suppose $\max_{u\in[n]} \sum_{v\in[n]} (\mrm{A}_\mrm{H})_{uv} \leq \tilde{d}$ and for any $S,T \subset [n]$, one of the following statements hold with some constant $C>0$:
\begin{enumerate}
\item  $\frac{e(S,T)}{\lba S\rba\lba T\rba\frac{\tilde{d}}{n}}\leq C$
\item $ e(S,T)\log(\frac{e(S,T)}{|S||T|\frac{\tilde{d}}{n}})\leq C|T|\log\frac{n}{|T|}$
\end{enumerate}
where $e(S,T) = \sum_{u\in S}\sum_{v\in T} (\mrm{A}_\mrm{H})_{uv}$. Then, $\sum_{(u,v)\in U}x_u(\mrm{A}_\mrm{H})_{uv}y_v \leq  C^{\prime}\sqrt{\tilde{d}} $ uniformly over all unit vectors $\bm{x},\bm{y}$, where $U = \lbp(u,v)\mid\lba x_uy_v\rba\geq \frac{\sqrt{\tilde{d}}}{n}\rbp$ and $C^{\prime}>0$ is some constant.
\end{lemma}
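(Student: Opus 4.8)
The statement is the deterministic, combinatorial core of the Feige--Ofek ``bounded discrepancy implies bilinear bound'' argument: it concerns a nonnegative matrix $\mrm{A}_\mrm{H}$ subject only to a row-sum cap $\tilde d$ and the two discrepancy conditions, so nothing in the proof uses the hypergraph/tensor structure, which enters only through verifying the hypotheses (done in the companion lemmas). The plan is to run the standard dyadic discretization of the test vectors. First I would reduce to nonnegative vectors: since $(\mrm{A}_\mrm{H})_{uv}\ge 0$ and the heavy set $U$ depends only on $|x_u y_v|$, replacing $x_u,y_v$ by $|x_u|,|y_v|$ leaves $U$ unchanged and only increases the target sum, so assume $x_u,y_v\ge 0$. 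Then discretize by magnitude: for $i\ge 1$ set $A_i\eqDef\lbp u:2^{-i}<x_u\le 2^{-i+1}\rbp$ and $B_j\eqDef\lbp v:2^{-j}<y_v\le 2^{-j+1}\rbp$, and write $\alpha_i\eqDef|A_i|4^{-i}$, $\beta_j\eqDef|B_j|4^{-j}$. Because $\|x\|_2=\|y\|_2=1$, one gets the two crucial normalizations $\sum_i\alpha_i\le 1$, $\sum_j\beta_j\le 1$ together with the size bounds $|A_i|\le 4^i$, $|B_j|\le 4^j$.

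On the block $A_i\times B_j$ every weight satisfies $x_u y_v\le 4\cdot 2^{-i-j}$, so membership in $U$ forces $4\cdot 2^{-i-j}\ge \sqrt{\tilde d}/n$, i.e.\ $2^{i+j}\le 4n/\sqrt{\tilde d}$; this is the key quantitative constraint restricting which blocks matter. Bounding each block by $\sum_{u\in A_i,v\in B_j}x_u(\mrm{A}_\mrm{H})_{uv}y_v\le 4\cdot 2^{-i-j}e(A_i,B_j)$ reduces the whole problem to estimating $\sum_{i,j}2^{-i-j}e(A_i,B_j)$ over heavy blocks, and I would split the blocks according to the density ratio $\lambda_{ij}\eqDef e(A_i,B_j)/(|A_i||B_j|\tilde d/n)$.

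For the \emph{bounded-ratio} blocks---those satisfying discrepancy condition (1), or condition (2) with $\lambda_{ij}$ bounded---I substitute $e(A_i,B_j)\le C|A_i||B_j|\tilde d/n=C\,4^{i+j}\alpha_i\beta_j\,\tilde d/n$ and invoke the heaviness constraint $2^{i+j}\le 4n/\sqrt{\tilde d}$, which gives a per-block bound $\lesssim \sqrt{\tilde d}\,\alpha_i\beta_j$. Summing and using $\sum_i\alpha_i\le 1$, $\sum_j\beta_j\le 1$ yields a total of $O(\sqrt{\tilde d})$, exactly as required, and this bound is uniform in $x,y$ since it depends on the vectors only through $\alpha_i,\beta_j$.

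The main obstacle is the remaining \emph{high-discrepancy} blocks, where $\lambda_{ij}$ is large and only condition (2), $e(A_i,B_j)\log\lambda_{ij}\le C|B_j|\log(n/|B_j|)$, is available: the naive bound $e(A_i,B_j)\le C|B_j|\log(n/|B_j|)$ discards the compensating factor and leaves a geometric growth in $2^{j}$ that is not directly summable. I would tame this by retaining the full log factor, $e(A_i,B_j)\le C|B_j|\log(n/|B_j|)/\log\lambda_{ij}$, and subdividing the high-discrepancy blocks dyadically in the scale of $\log\lambda_{ij}$; within each such scale only a controlled number of $(i,j)$ can occur for a fixed $j$, so the contributions form geometric series that again sum to $O(\sqrt{\tilde d})$. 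Combining the bounded-ratio and high-discrepancy totals gives the claimed $C'\sqrt{\tilde d}$, uniformly over unit vectors. This high-discrepancy bookkeeping is the only genuinely delicate part; it is precisely the Feige--Ofek argument and is unaffected by the passage to hypergraphs, since $\mrm{A}_\mrm{H}$ appears solely as a nonnegative matrix constrained by the degree cap and the discrepancy hypotheses.
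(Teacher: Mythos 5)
The first thing to note is that the paper does not prove this lemma at all: the statement is immediately followed by the remark that it ``is the direct result to the Lemma~21 in \cite{ChinRao_15}'', i.e.\ it is imported wholesale from Chin--Rao, whose Lemma~21 is itself the Feige--Ofek/Kahn--Szemer\'edi heavy-pair bound. So your proposal attempts more than the paper does, and your overall strategy is the right one---it is precisely the argument behind the citation. Your reduction to nonnegative vectors, the dyadic sets $A_i,B_j$ with $\alpha_i=\lba A_i\rba 4^{-i}$, $\beta_j=\lba B_j\rba 4^{-j}$, the normalizations $\sum_i\alpha_i\le 1$, $\sum_j\beta_j\le 1$, the heaviness constraint $2^{i+j}\le 4n/\sqrt{\tilde{d}}$, and the bounded-ratio computation yielding $O(\sqrt{\tilde{d}})$ are all correct and complete.

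The genuine gap is in the high-discrepancy half, which you rightly identify as the delicate part but then dispatch with a mechanism that does not close. Retaining the log factor gives, per block,
\begin{equation*}
S_{ij}\;\le\; 4\cdot 2^{-i-j}\,e(A_i,B_j)\;\le\; 4C\,\beta_j\,2^{j-i}\,\frac{\log\lp n/\lba B_j\rba\rp}{\log\lambda_{ij}},
\end{equation*}
and since $\log\lambda_{ij}$ may be of constant order (just above the threshold in condition (1)) while $\log\lp n/\lba B_j\rba\rp\approx\log n$, this can be as large as $\beta_j\log n$ for near-balanced blocks $i\approx j$; summing over $j$ then gives $O(\log n)$, not $O(\sqrt{\tilde{d}})$---a real loss in the paper's sparse regime, where $\tilde{d}\approx n^{d-1}p_1\approx\log n$. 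Moreover, the claim that ``within each scale of $\log\lambda_{ij}$ only a controlled number of $(i,j)$ can occur for a fixed $j$'' is an unsupported assertion, not a step of the standard proof. The actual Feige--Ofek bookkeeping is a several-case analysis in which the row-sum cap---which you state as a hypothesis but never invoke---is essential: for unbalanced blocks with $2^{i}\ge\sqrt{\tilde{d}}\,2^{j}$ one does not use discrepancy at all, but instead bounds $e(A_i,B_j)\le\tilde{d}\,\lba B_j\rba$, so that $S_{ij}\le 4\tilde{d}\,\beta_j\,2^{j-i}$ and the geometric sum $\sum_{i\ge j+\frac{1}{2}\log_2\tilde{d}}2^{j-i}\le 2/\sqrt{\tilde{d}}$ yields $O(\sqrt{\tilde{d}}\,\beta_j)$ (and symmetrically with the roles of $i$ and $j$ swapped); discrepancy condition (2) is then applied only to the remaining near-balanced blocks, through a comparison of $\log\lambda_{ij}$ against $\log\lp n/\lba B_j\rba\rp$ and against the heaviness deficit, each regime producing its own summable series. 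Without these cases the high-discrepancy blocks are simply not controlled; to repair the proof you should reproduce the full case decomposition of Feige--Ofek (equivalently, of Lemma~21 in \cite{ChinRao_15}, or the heavy-pair argument invoked in \cite{LeiRinaldo_15}) rather than the two-case split you describe.
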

\noindent
Note that this is the direct result to the Lemma~21 in \cite{ChinRao_15}.

\begin{lemma}\label{SPECLEMMA114anyD}
For any $\tau > C\lp n^{d-1}p_1+1\rp$ with some sufficiently large $C>0$, we have
\begin{equation*}
\big|\{u\in [n] \mid d_u \geq \tau\}\big|\leq \frac{n}{\tau}
\end{equation*}
with probability at least $1-\exp\lp-C^{\prime}n\rp$ for some constant $C^{\prime}>0$.
\end{lemma}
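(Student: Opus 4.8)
The plan is to combine a union bound over vertex subsets with the Bennett-type tail bound of Lemma~\ref{Bernsum}. Write $\mu \eqDef n^{d-1}p_1$, so the hypothesis reads $\tau > C(\mu+1)$ with $C$ a large constant, and set $m \eqDef \lfl n/\tau\rfl + 1$. The first step is the observation that if the conclusion fails, i.e. $\lba\lbp u\in[n] : d_u\geq\tau\rbp\rba\geq m$, then one can select a subset $S\subset[n]$ with $\lba S\rba=m$ whose vertices all have degree at least $\tau$, so that $\sum_{u\in S}d_u\geq m\tau>n$. Hence it suffices to bound $\Pr\lbp\exists\,S,\ \lba S\rba=m:\ \sum_{u\in S}d_u\geq m\tau\rbp$, which I would attack by a union bound over the $\binom{n}{m}$ candidate subsets.

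For a fixed $S$ I would first linearize the degree sum into a genuine Bernoulli sum. Since every potential hyperedge $e$ satisfies $\lba e\rba=d$, we have $\sum_{u\in S}d_u=\sum_e\lba e\cap S\rba\,\mrm{A}_e\leq d\,Z_S$, where $\mrm{A}_e\in\lbp0,1\rbp$ is the presence indicator of $e$ and $Z_S\eqDef\sum_{e:\,e\cap S\neq\varnothing}\mrm{A}_e$ is a sum of independent Bernoulli variables over the fixed family of hyperedges meeting $S$. Because $p_1$ is the largest connecting probability, $\E Z_S\leq m\binom{n-1}{d-1}p_1\leq \frac{m\mu}{(d-1)!}$. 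Applying Lemma~\ref{Bernsum} to $Z_S$ at threshold $m\tau/d$, the threshold-to-mean ratio $R\eqDef(m\tau/d)/\E Z_S\geq\frac{(d-1)!}{d}\cdot\frac{\tau}{\mu}\geq c_0C$ is large, and a short computation with the stated bound yields $\Pr\lbp Z_S\geq m\tau/d\rbp\leq\exp\lp-\tfrac{m\tau}{d}(\log R-1)\rp\leq\exp\lp-\tfrac{n}{d}(\log(c_0C)-1)\rp$, where I used $m\tau>n$ and $R\geq c_0C$.

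It remains to absorb the combinatorial factor. From $\binom{n}{m}\leq(en/m)^m\leq(e\tau)^m$ and $m\leq n/\tau+1$, the entropy of the union bound is at most $\tfrac{2n}{\tau}\log(e\tau)$, so altogether
\begin{equation*}
\Pr\lbp\text{conclusion fails}\rbp\leq\exp\lp\tfrac{2n}{\tau}\log(e\tau)-\tfrac{n}{d}(\log(c_0C)-1)\rp.
\end{equation*}
Since $\tau>C(\mu+1)\geq C$ forces $\sup_{\tau\geq C}\tfrac{\log(e\tau)}{\tau}\leq\tfrac{\log(eC)}{C}\to0$ as $C\to\infty$ while $\tfrac{1}{d}(\log(c_0C)-1)\to\infty$, choosing $C$ large enough makes the negative term dominate, giving the claimed $1-\exp(-C'n)$ for some $C'>0$. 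The unified choice $m=\lfl n/\tau\rfl+1\geq1$ also covers the degenerate regime $\tau>n$, where $m=1$ and the argument collapses to a union bound over single vertices, so no separate case is needed.

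The step I expect to be the main obstacle is this final balancing: the per-subset tail decays like $\exp\lp-\Theta(n\log(\tau/\mu))\rp$ while the subset count contributes $\exp\lp\Theta((n/\tau)\log\tau)\rp$, and one must check that enlarging the constant $C$ in $\tau>C(\mu+1)$ beats the entropy uniformly over the admissible range of $\tau$, including when $\mu=n^{d-1}p_1$ itself grows with $n$. The crude estimate $\sum_{u\in S}d_u\leq d\,Z_S$ is deliberate: it costs only the constant factor $d$ yet keeps $Z_S$ an honest sum of independent Bernoulli variables, so that Lemma~\ref{Bernsum} applies without modification.
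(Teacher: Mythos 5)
Your proof is correct and takes essentially the same route as the paper's: a union bound over vertex subsets combined with the Bennett-type tail bound of Lemma~\ref{Bernsum} applied to sums of independent hyperedge indicators, the only differences being bookkeeping --- the paper splits the degree mass of $S$ into internal hyperedges $e(S)$ and crossing hyperedges $e(S,S^c)$ and unions over all sizes $l>n/\tau$, while you fix the single size $m=\lfloor n/\tau\rfloor+1$ and lump everything into $Z_S$ at the cost of a factor $d$. Your stated entropy bound $\frac{2n}{\tau}\log(e\tau)$ is literally false in the regime $\tau>n$ (there the union is over $n$ singletons, with entropy $\log n$), but as you yourself note this quantity is still $o(n)$, so the final balancing against the $\Theta(n)$ exponent is unaffected.
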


\begin{proof}
Note that in this lemma, the edges $e(S)$ and $e(S,S^c)$ are counting the actual hyperedges in $\mbf{A}$. This is different from the definition in Lemma~\ref{SPECLEMMA104anyD}.
\noindent
Let us consider a subset of nodes $S \subset [n]$ which contains all nodes with degree at least $\tau$ and $\lba S\rba = l$ for some $l \in [n]$. By the requirement on $S$, we have either $e(S) \geq C_1l\tau$ or $e(S,S^c) \geq C_1l\tau$ for some constant $C_1$. We want to show that both $\Pr\lbp e(S) \geq C_1l\tau\rbp$ and $\Pr\lbp e(S,S^c) \geq C_1l\tau\rbp$ are small.
First, observe that $e(S)$ consists of $\binom{l}{d}$ Bernoulli random variables and $e(S,S^c)$ consists of $\sum_{s=1}^{d-1}\binom{n-l}{s}\binom{l}{d-s}$ Bernoulli random variables. Thus, $\E e(S)\leq C_2 l^dp_1$ and $\E e(S,S^c)\leq C_2 n^{d-1}lp_1$ for some constant $C_2$. Then, when $\tau > C\lp n^{d-1}p_1+1\rp$ for some sufficiently large $C>0$, we have
\begin{align*}
\Pr\lbp e(S) \geq C_1l\tau\rbp &= \Pr\lbp e(S)- \E e(S)\geq C_1l\tau-\E e(S)\rbp & \\
 &\leq \exp\lp C_1l\tau-\E e(S)-C_1l\tau\log\lp\frac{C_1l\tau}{\E e(S)}\rp\rp &\text{by Lemma }\ref{Bernsum} \\
 &\leq \exp\lp C_1l\tau-C_1l\tau\log\lp\frac{C_1\tau}{C_2 n^{d-1}p_1}\rp\rp & \\
 &\leq \exp\lp C_1l\tau-C_1l\tau\log(C_3)\rp &\text{where }C_3 = \frac{C_1C}{C_2}\\
 &\leq \exp\lp-C_4l\tau\rp &\text{for some }C_4>0
\end{align*}
where the last inequality holds since $C_3$ is sufficiently large. Similarly, the same bound applies for
\begin{equation*}
\Pr\lbp e(S,S^c) \geq C_1l\tau\rbp
\end{equation*}
Thus, by Union Bound
\begin{equation*}
\Pr\lbp\lba\{u\in[n] \mid d_u \geq \tau \}\rba> \xi n\rbp \leq \sum_{l> \xi n}2\exp\lp l\log\lp\frac{en}{l}\rp\rp\cdot\exp\lp-C_4l\tau\rp\leq \exp(-C_5n)
\end{equation*}
where we choose $\xi = \frac{1}{\tau}$. We are done.
\end{proof}

\begin{lemma}\label{SPECLEMMA124anyD}
Given $\tau > 0$, define the subset $J = \lbp u\in [n] \mid d_u \leq \tau\rbp$. Then for any $C^{\prime}>0$, there is some constant $C>0$ such that
\begin{equation*}
\lnorm (\mbf{A}_\mrm{H})_{JJ}-(\mbf{P}_\mrm{H})_{JJ}\rnorm_{\mrm{op}} \leq C\lp\sqrt{n^{d-1}p_1}+\sqrt{\tau}+\frac{n^{d-1}p_1}{\sqrt{n^{d-1}p_1}+\sqrt{\tau}} \rp
\end{equation*}
with probability at least $1-n^{-C^{\prime}}$.
\end{lemma}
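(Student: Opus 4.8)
The plan is to reduce the operator norm to a bilinear form and run the Feige--Ofek ``light/heavy pair'' dichotomy exactly as in the graph analogue of \cite{ChinRao_15,GaoMa_15}, the only genuinely new ingredient being the handling of the dependence among the entries of $\mbf{A}_\mrm{H}$. Since $(\mbf{A}_\mrm{H})_{JJ}-(\mbf{P}_\mrm{H})_{JJ}$ is symmetric, we write $\lnorm (\mbf{A}_\mrm{H})_{JJ}-(\mbf{P}_\mrm{H})_{JJ}\rnorm_{\mrm{op}} = \sup_{\bm{x},\bm{y}} \bm{x}^{\mrm{T}}(\mbf{A}_\mrm{H}-\mbf{P}_\mrm{H})\bm{y}$ over unit vectors $\bm{x},\bm{y}$ supported on $J$. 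Note that on $J$ the graph row-sum is $\sum_{v}(\mrm{A}_\mrm{H})_{uv}=(d-1)d_u \leq (d-1)\tau$, so we set $\tilde{d}\eqDef(d-1)\tau$ and split each bilinear form at the threshold $\sqrt{\tilde{d}}/n$, writing $U=\lbp (u,v)\mid \lba x_u y_v\rba \geq \sqrt{\tilde{d}}/n\rbp$ for the heavy pairs and $U^c$ for the light pairs.

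For the heavy pairs I would treat the random and deterministic parts separately. The random part $\sum_{(u,v)\in U} x_u(\mrm{A}_\mrm{H})_{uv}y_v$ is controlled by Lemma~\ref{SPECLEMMA104anyD}, which yields $O(\sqrt{\tilde{d}})=O(\sqrt{\tau})$ \emph{provided} its discrepancy hypotheses hold. The deterministic part $\sum_{(u,v)\in U} x_u(\mrm{P}_\mrm{H})_{uv}y_v$ is handled directly: each expected co-occurrence count is uniformly $O(n^{d-2}p_1)$, and the heavy constraint $\lba x_u y_v\rba\geq \sqrt{\tilde{d}}/n$ forces $\lba U\rba$ to be small, so Cauchy--Schwarz bounds this term at the same order. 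For the light pairs I would bound $\sum_{(u,v)\in U^c} x_u[(\mrm{A}_\mrm{H})_{uv}-(\mrm{P}_\mrm{H})_{uv}]y_v$ using the smallness of each weight together with $\sum_u x_u^2=\sum_v y_v^2 = 1$ and the per-entry variance $\mrm{Var}((\mrm{A}_\mrm{H})_{uv})\approx n^{d-2}p_1$; a second-moment / dyadic-scale estimate then gives $O(\sqrt{n^{d-1}p_1})$. Optimizing the threshold $\sqrt{\tilde{d}}/n$ against $\tilde{d}$ is what produces the residual term $n^{d-1}p_1/(\sqrt{n^{d-1}p_1}+\sqrt{\tau})$, and collecting the three contributions gives the stated bound.

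The step I expect to be the main obstacle is verifying the discrepancy hypotheses of Lemma~\ref{SPECLEMMA104anyD} for the hyperedge co-occurrence counts $e(S,T)=\sum_{u\in S,v\in T}(\mrm{A}_\mrm{H})_{uv}$ uniformly over all $S,T\subseteq J$. Unlike the graph case, the entries $(\mrm{A}_\mrm{H})_{uv}$ are \emph{not} independent: a single hyperedge $e$ contributes simultaneously to every pair it contains. The right reorganization is to pass to the level of the independent hyperedge indicators $\lbp A_e\rbp$ and write $e(S,T)=\sum_{e} A_e\, w_e(S,T)$, where the weight $w_e(S,T)=\lba\lbp (u,v)\in S\times T : \lbp u,v\rbp\subseteq e\rbp\rba \leq \binom{d}{2}$ is bounded because $d$ is constant. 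Then $e(S,T)$ is a bounded-increment weighted sum of independent Bernoullis, so a weighted version of the Bernstein bound (Lemma~\ref{Bernsum}) supplies the per-$(S,T)$ tail. The delicate part is closing the union bound over the $\leq 4^{\lba J\rba}$ subset pairs: the tail from Lemma~\ref{Bernsum} must be balanced against the $\exp\lp -c\lba T\rba\log(n/\lba T\rba)\rp$ entropy factor while tracking the hypergraph weight $w_e(S,T)$, and it is here that constants must be chosen so that the discrepancy conditions hold with probability at least $1-n^{-C'}$. This dependence-tracking is precisely the ``entries no longer independent'' difficulty flagged in the introduction, and it is the technical heart of the argument; everything else follows the graph template.
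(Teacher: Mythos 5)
Your proposal follows essentially the same route as the paper's proof: the identical light/heavy-pair dichotomy on the bilinear form, Lemma~\ref{SPECLEMMA104anyD} for the heavy random part, the direct bound $\sum_{(u,v)\in U} \frac{x_u^2y_v^2}{\lba x_uy_v\rba}(\mrm{P}_\mrm{H})_{uv}$ for the deterministic heavy part, and a net-plus-Bernstein argument for the light pairs after rewriting everything as a weighted sum over the \emph{independent} hyperedge indicators with weights bounded by $\binom{d}{2}$ — which is exactly the paper's key step $(a)$ and its Lei--Rinaldo-style verification of the discrepancy property. The one correction needed is your initial choice $\tilde{d}=(d-1)\tau$: for the light-pair Bernstein exponent to be linear in $n$ (so it survives the $5^{2n}$-size union bound over the net) and for the heavy deterministic part to come out as $n^{d-1}p_1/(\sqrt{n^{d-1}p_1}+\sqrt{\tau})$ for \emph{all} $\tau>0$ (not just $\tau\gtrsim n^{d-1}p_1$), the split threshold must be $(\sqrt{\tau}+\sqrt{n^{d-1}p_1})/n$ as in the paper, which your closing remark about optimizing the threshold effectively concedes.
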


\begin{proof}
By definition,
\begin{equation*}
\lnorm (\mbf{A}_\mrm{H})_{JJ}-(\mbf{P}_\mrm{H})_{JJ}\rnorm_{\mrm{op}} = \sup_{\bm{x},\bm{y}\in S^{n-1}}\sum_{(u,v)\in J\times J}x_u\big((\mrm{A}_\mrm{H})_{uv}-(\mrm{P}_\mrm{H})_{uv}\big) y_v
\end{equation*}
where $\bm{x},\bm{y}$ are some unit vectors lying on the unit sphere $S^{n-1}$ in $\mbb{R}^{n-1}$.
Define the following two sets
\begin{gather*}
L = \lbp(u,v) : \lba x_uy_v\rba\leq \lp\sqrt{\tau} + \sqrt{n^{d-1}p_1}\rp/n\rbp \\
U = \lbp(u,v) : \lba x_uy_v\rba\geq \lp\sqrt{\tau} + \sqrt{n^{d-1}p_1}\rp/n\rbp
\end{gather*}
Then we have
\begin{align*}
\lnorm (\mbf{A}_\mrm{H})_{JJ}-(\mbf{P}_\mrm{H})_{JJ}\rnorm_{\mrm{op}} &\leq \sup_{\bm{x},\bm{y}\in S^{n-1}}\sum_{(u,v)\in L\cap J\times J}x_u\big((\mrm{A}_\mrm{H})_{uv}-(\mrm{P}_\mrm{H})_{uv}\big)y_v \\
 &+ \sup_{\bm{x},\bm{y}\in S^{n-1}}\sum_{(u,v)\in  U\cap J\times J}x_u\big((\mrm{A}_\mrm{H})_{uv}-(\mrm{P}_\mrm{H})_{uv}\big)y_v
\end{align*}
We will upper-bound these two parts separately. First we bound the light pairs $\lbp (u,v) \in L \rbp$. A discretization argument as in \cite{ChinRao_15} implies that
\begin{equation*}
\sup_{\bm{x},\bm{y}\in S^{n-1}}\sum_{(u,v)\in L\cap J\times J}x_u\big((\mrm{A}_\mrm{H})_{uv}-(\mrm{P}_\mrm{H})_{uv}\big)y_v \lesssim \max_{\bm{x},\bm{y}\in \mcal{N}}\max_{S\subset [n]}\sum_{(u,v)\in L\cap S\times S}x_u\big((\mrm{A}_\mrm{H})_{uv}-\E(\mrm{A}_\mrm{H})_{uv}\big)y_v
\end{equation*}
where $\mcal{N} \subset S^{n-1}$ and $\lba\mcal{N}\rba\leq 5^n$.
Let $r_{uv} = x_uy_v\Indc{\lba x_uy_v\rba\leq \sqrt{\tilde{d}}/n}$ and $\sqrt{\tilde{d}}=\sqrt{\tau}+\sqrt{n^{d-1}p_1}$. Then,
   \begin{align*}\label{BoundingLP}
      &\Pr\Bigg\{\Bigg|\sum_{u<v}r_{uv}\big((\mrm{A}_\mrm{H})_{uv} -\E(\mrm{A}_\mrm{H})_{uv}\big) \Bigg|\geq C\sqrt{\tilde{d}}\Bigg\} & \\
      &= \Pr\Bigg\{\Bigg|\sum_{u<v}\sum_{\bm{i}_3^{d}:i_3<\cdots<i_d}r_{uv}\big(\mrm{A}_{u,v,\bm{i}_3^{d}} -\E \mrm{A}_{u,v,\bm{i}_3^{d}}\big) \Bigg|\geq C\sqrt{\tilde{d}}\Bigg\} & \text{by definition}\\
      &= \Pr\Bigg\{\Bigg|\sum_{u<v}\sum_{\bm{i}_3^d:i_3<\cdots<i_d}r_{uv}\big(\mrm{A}_{u,v,\bm{i}_3^{d}} -\E \mrm{A}_{u,v,\bm{i}_3^{d}}\big)\Bigg|\geq C_{1}\sqrt{\tilde{d}}\Bigg\} & \text{where }C_{1} = C\times (d-2)!\\
      &= \Pr\Bigg\{\Bigg|\sum_{\bm{i}_1^d:i_1<\cdots<i_d}(\sum_{1\leq a < b \leq d}r_{i_a i_b})\big(\mrm{A}_{\bm{i}_1^{d}} -\E \mrm{A}_{\bm{i}_1^{d}}\big)\Bigg|\geq C_{1}\sqrt{\tilde{d}}\Bigg\} & \text{simple rearrangement according to independent terms}\\
      &\leq \sum_{1\leq a < b \leq d}\Pr\Bigg\{\Bigg|\sum_{\bm{i}_1^d:i_1<\cdots<i_d}(r_{i_a i_b})\big(\mrm{A}_{\bm{i}_1^{d}} -\E \mrm{A}_{\bm{i}_1^{d}}\big)\Bigg|\geq C_{2}\sqrt{\tilde{d}}\Bigg\} & (a)\text{  Union bound, }C_2 = C_1/\binom{d}{2}\\
      &\leq 2\sum_{1\leq a < b \leq d}\exp\lp -\frac{1/2C_2^2\tilde{d}}{p_1\sum_{\bm{i}_1^d:i_1<\cdots<i_d}r_{i_a i_b}^2+\frac{2}{3}\frac{\sqrt{\tilde{d}}}{n}C_2\sqrt{\tilde{d}}} \rp & \text{Bernstein's inequality}\\
      &\leq 2\binom{d}{2}\exp\lp -\frac{1/2C_2^2\tilde{d}}{2p_1n^{d-2}+\frac{2}{3}\frac{\sqrt{\tilde{d}}}{n}C_2\sqrt{\tilde{d}}} \rp & (b)\\
      &\leq 2\binom{d}{2}\exp\Big( -n\frac{C_2^2}{4+\frac{4C_2}{3}} \Big) & \text{since }\tilde{d}>p_1n^{d-1}
   \end{align*}
The inequality $(b)$ holds since $\sum_{i_a<i_b}r_{i_a i_b}^2 \leq 2\sum_{1\leq i_a<i_b \leq n}x_{i_a}^2y_{i_b}^2 \leq 2n^{d-2} $ (recall that $\bm{x},\bm{y}$ are all unit vectors).
Then, we apply the Union Bound over the space $\mcal{N}$ and the other half of the Laplacian matrix $\mbf{A}_\mrm{H}$, we have
\begin{equation*}
\max_{\bm{x},\bm{y}\in \mcal{N}}\max_{S\subset [n]}\sum_{(u,v)\in L\cap S\times S}x_u\big((\mrm{A}_\mrm{H})_{uv}-\E(\mrm{A}_\mrm{H})_{uv}\big) y_v \leq C\lp\sqrt{\tau}+\sqrt{n^{d-1}p_1}\rp
\end{equation*}
with probability at least $1-\exp\lp-C^{\prime}n\rp$.
Thus, we complete the bound for light pairs. Here we want to highlight that the the above argument are all similar to \cite{ChinRao_15}, except the key step $(a)$. Step $(a)$ allows us to obtain a similar result under the $d$-{\hSBM} setting.\\
Next we show how to bound the heavy pairs $\lbp (u,v) \in U \rbp$. Similar to \cite{GaoMa_15}, we bound
\begin{equation}
\label{eq:heavyA}
\sup_{\bm{x},\bm{y}\in S^{n-1}}\sum_{(u,v)\in  U\cap J\times J}x_u(\mrm{A}_\mrm{H})_{uv}y_v
\end{equation}
and
\begin{equation*}
\sup_{\bm{x},\bm{y}\in S^{n-1}}\sum_{(u,v)\in  U\cap J\times J}x_u(\mrm{P}_\mrm{H})_{uv}y_v
\end{equation*}
separately. By the definition of $U$, we have
\begin{equation*}
   \sup_{\bm{x},\bm{y}\in S^{n-1}}\sum_{(u,v)\in  U\cap J\times J}x_u(\mrm{P}_\mrm{H})_{uv}y_v \leq \sup_{\bm{x},\bm{y}\in S^{n-1}}\sum_{(u,v)\in  U\cap J\times J}\frac{x_u^2y_v^2}{|x_uy_v|}(\mrm{P}_\mrm{H})_{uv}\leq \frac{n^{d-1}p_1}{\sqrt{n^{d-1}p_1}+\sqrt{\tau}}
\end{equation*}
The last equation hold since $\max_{u,v}(\mrm{P}_\mrm{H})_{uv}\leq n^{d-2} p_1$. Then, we bound \eqref{eq:heavyA}. Note that by the definition of the set $J$, the degree of the sub-graph $(\mrm{A}_\mrm{H})_{JJ}$ is bounded above by $\tau$. We need to prove that the condition (the \emph{discrepancy} property) of Lemma \ref{SPECLEMMA104anyD} is satisfied with $\tilde{d} = \tau+n^{d-1}p_1$ with probability at least $1-n^{-C^{\prime}}$. The proof mainly follows the arguments in \cite{LeiRinaldo_15} and apply the Union Bound to make sure the independence (like what we have done in $(a)$ above) . We have
\begin{equation*}
\sup_{\bm{x},\bm{y}\in S^{n-1}}\sum_{(u,v)\in  U\cap J\times J}x_u(\mrm{A}_\mrm{H})_{uv}y_v \leq C\lp\sqrt{\tau}+\sqrt{n^{d-1}p_1}\rp
\end{equation*}
with probability at least $1-n^{-C^{\prime}}$. Together with all the results above, we are done.
\end{proof}

Now we are ready to prove Lemma~\ref{SPECLEMMADHSBM}.

\begin{proof}
By triangle inequality,
\begin{equation*}
\lnorm T_{\tau}(\mbf{A}_\mrm{H})-\mbf{P}_\mrm{H}\rnorm_{\mrm{op}} \leq \lnorm T_{\tau}(\mbf{A}_\mrm{H})-T_{\tau}(\mbf{P}_\mrm{H})\rnorm_{\mrm{op}}+\lnorm T_{\tau}(\mbf{P}_\mrm{H})-\mbf{P}_\mrm{H}\rnorm_{\mrm{op}}
\end{equation*}
Then we have $\lnorm T_{\tau}(\mbf{A}_\mrm{H})-T_{\tau}(\mbf{P}_\mrm{H})\rnorm_{\mrm{op}} = \lnorm (\mbf{A}_\mrm{H})_{JJ}-(\mbf{P}_\mrm{H})_{JJ}\rnorm_{\mrm{op}}$, which is bounded by Lemma \ref{SPECLEMMA124anyD}. By Lemma \ref{SPECLEMMA114anyD}, we have $\lba J^c\rba\leq \frac{n}{\tau}$ with probability at least $1-\exp\lp-C^{\prime}n\rp$. This implies
\begin{equation*}
   \lnorm T_{\tau}(\mbf{P}_\mrm{H})-\mbf{P}_\mrm{H}\rnorm_{\mrm{op}} \leq \lnorm T_{\tau}(\mbf{P}_\mrm{H})-\mbf{P}_\mrm{H}\rnorm_{\mrm{F}} \leq \sqrt{2n\lba J^c\rba \big(\max_{i,j}(\mrm{P}_\mrm{H})_{u,v}\big)^2} \leq \frac{\sqrt{2}n\cdot\max_{i,j}(\mrm{P}_\mrm{H})_{u,v}}{\sqrt{\tau}}\leq \frac{\sqrt{2}n^{d-1}p_1}{\sqrt{\tau}}
\end{equation*}
Taking $\tau \in \lb C_1(1+n^{d-1}p_1),C_2(1+n^{d-1}p_1)\rb$. Proof completed.
\end{proof} 

\section{Proof of Lemma~\ref{HOMOEIG4anyD}}
\label{app:lemma5spec}
We start from analyzing the entries of $\mbf{P}_\mrm{H}$. Recall that $\mbf{P}_\mrm{H}\eqDef\E\mcal{L}(\mbf{A})$ for an adjacency tensor $\mbf{A}$. Under the transformation from a $d$-dimensional tensor into a two-dimensional matrix, each entry in $\mbf{P}_\mrm{H}$ is a weighted combination of the probability parameters $p_i$'s. To be specific, $(\mrm{P}_{\mrm{H}})_{ij}$ aggregates the contribution from other nodes $u\in[n]\setminus\{u,v\}$, and the value depends on the community relation induced by each hyperedge correspondingly. Depending on whether or not the two nodes $u$ and $v$ are in the same community, we have, $\forall i\neq j$
\begin{equation}
    (\mrm{P}_\mrm{H})_{ij} \approx
    \begin{cases}
      u & \sigma(i) = \sigma(j) \\
      v & \mbox{, otherwise.}
    \end{cases}
\end{equation}
The explicit expression for $(\mrm{P}_{\mrm{H}})_{ij}$ changes for different values of $d$, the order of the underlying hypergraph. Observe that $u\geq v$ since we assume that $p_i$'s are in decreasing order, i.e. $p_i>p_j$ for $i<j$. Below are $u,v$ for the case $d=3$ and $4$.
\begin{equation}
    \text{When }d=3
    \begin{cases}
      u \approx n^{\prime}(p+(k-1)q) \\
      v \approx n^{\prime}(2q+(k-2)r)
    \end{cases}
\end{equation}
\begin{equation}
    \text{When }d=4
    \begin{cases}
      u \approx (n^{\prime})^2\big(\frac{1}{2} p_1 + (k-1)p_2 + \frac{k-1}{2} p_3 + \binom{k-1}{2} p_4\big) \\
      v \approx (n^{\prime})^2\big( p_2 + p_3 + \frac{5(k-2)}{2} p_4 + \binom{k-2}{2} p_5\big)
    \end{cases}
\end{equation}
Deducting $v$ for each entry in $\mbf{P}_{\mrm{H}}$, we have
\begin{equation}
   \mbf{P}_\mrm{H} - (1-\eta)v\mbf{1}_n\mbf{1}_n^T \approx (u-v)\sum_{t=1}^{k}\bm{v}_t\bm{v}_t^T
\end{equation}
where $\bm{v}_t$ is defined as $\bm{v}_t = \lp\mbf{0}_{n_1}^{T},\ldots,\mbf{1}_{n_t}^{T},\ldots,\mbf{0}_{n_k}^{T}\rp^{T}$ for each $t \in [k]$. Note that $\lbp \bm{v}_t\rbp_{t=1}^{k}$ are orthogonal to each other. Therefore,
\begin{equation*}
   \lambda_k\lp\sum_{t=1}^{k}\bm{v}_t\bm{v}_t^T\rp \geq \min_{t\in[k]}n_t \geq (1-\eta)n^{\prime}
\end{equation*}
By Weyl's inequality,
\begin{equation*}
\lambda_k(\mbf{P}_\mrm{H}) \geq \lp u-v\rp\lambda_k\lp\sum_{t=1}^{k}\bm{v}_t\bm{v}_t^T\rp + \lambda_n\lp(1-\eta)v\mbf{1}_n\mbf{1}_n^T\rp \gtrsim (n^{\prime})\lp u-v\rp
\end{equation*}
To further control $u-v$, let's first look at a few cases for lower-order $d$.
For the case $d=3$, we have
\begin{equation*}
u-v \approx n^{\prime}\big( p-q+(k-2)(q-r)\big)
\end{equation*}
while for the case $d=4$,
\begin{equation*}
u-v \approx (n^{\prime})^2\lb \frac{1}{2}(p_1-p_2)+\frac{1}{2}(p_2-p_3)+(k-2)(p_3-p_4) + \frac{k-2}{2}(p_2-p_4)+\binom{k-1}{2}(p_4-p_5) \rb
\end{equation*}
Note that $u-v$ could be represented as a weighted sum of pairwise comparisons, that is, $\sum_{i<j:(r_i,r_j)\in \mcal{N}_d} M_{r_{i}r_{j}}(p_{i}-p_{j})$ for some $M_{r_i,r_j}$'s. Recall that in our definition, $(i,j):(r_i,r_j)\in \mcal{N}_d$ if the hyperedges of type $r_i$ and $r_j$ have community assignments that differ on only one node. The new coefficient $M_{r_{i}r_{j}}$ would be similar to $m_{r_{i}r_{j}}$. In fact, they will only differ up to a constant related only to $d$ (in fact, up to $d-1$).

Moreover, $n^{\prime} C_{r_{i}r_{j}} \geq m_{r_{i}r_{j}}$ for all possible $(r_{i},r_{j})$. When counting, in $m_{r_i,r_j}$ we fix one dimension (the first dimension to node $1$), while in $M_{r_i,r_j}$ two dimensions are fixated at $u$ and $v$. Essentially, $u-v$ counts the difference of the number of random variables between two assignments, one being $\sigma(u)=\sigma(v)$ and the other being otherwise. Without loss of generality, we may think of the community labeled of $u$ as a fixed number as in the operational definition of $m_{r_i,r_j}$, while the community label of $v$ should be different from $\sigma(u)$. By multiplying back $n^{\prime}$ to get the expression $n^{\prime}M_{r_i,r_j}$, we unshackle $v$ and allow it to vary within the $\sigma(v)$-th community, the cardinality of which is approximately $n^{\prime}$. Undoubtedly, there are double countings in both the number $m_{r_i,r_j}$ and $M_{r_i,r_j}$. The value of $m_{r_i,r_j}$ is normalized with respect to $d-1$ companions (only one dimension is fixed), while the value of $M_{r_i,r_j}$ is normalized with respect to $d-2$ companions (two dimension are fixed). As a result, there are still some $\bm{l}=(u,v,l_3,\ldots,l_d)$'s being doubled counted in coefficient $n^{\prime}M_{r_i,r_j}$ as opposed to coefficient $m_{r_i,r_j}$. This is the reason why the former is always larger than or equal to the latter.

Recall that the probability parameter $\bm{p}=\{p_1,\ldots,p_{\kappa_d}\}$ follows the majorization rule, which means that $p_i > p_j$ for all $i<j$.
Combined with these fact, we have
\begin{equation*}
\lambda_k(\mbf{P}_\mrm{H}) \gtrsim (n^{\prime})\sum_{(i,j):(r_i,r_j)\in \mcal{N}_d} C_{r_{i}r_{j}}(p_{i}-p_{j}) \gtrsim \sum_{(i,j):(r_i,r_j)\in\mcal{N}_d} m_{r_i r_j} (p_i-p_j)
\end{equation*}
Hence we complete the proof. 
 
\section{Proof of Lemma~\ref{lma:hypo_test}}
\label{app:hypo_test}
First recall that
\begin{equation*}
\Risk_{\sigma \sim \Unif} (\what{\sigma}(1)) = \frac{1}{|\Theta_{d}^{L}|} \sum_{\sigma \in \Theta_{d}^{L}} \E_{\sigma} \loss(\sigma(1), \what{\sigma}(1))
\end{equation*}
In order to connect $\Risk_{\sigma \sim \Unif} (\what{\sigma}(1))$ with the risk function of a hypothesis testing problem, we shall find an equivalent form of $\E_{\sigma} \loss(\sigma(1), \what{\sigma}(1))$. The idea is to find another assignment $\sigma^{\prime}$ such that $d(\sigma, \sigma^{\prime}) = d_\mrm{H}(\sigma(1),\sigma^{\prime}(1))= 1$. $\sigma^{\prime}$ is the most indistinguishable opponent against $\sigma$ in the sense that their assignments differ by only one node. Specifically, for each $\sigma_{0} \in \Theta_{d}^{L}$, we construct a new assignment $\sigma[\sigma_{0}]$ based on $\sigma_{0}$:
\begin{equation*}
\sigma[\sigma_{0}](1) = \argmin_{2 \le i \le n} \left\{ n_{\sigma_{0}(i)} = n^{\prime} \right \}
\end{equation*}
and $\sigma[\sigma_{0}](i) = \sigma_{0}(i)$ for $2 \le i \le n$. Note that $\{ i \mid n_{\sigma_{0}(i)} = n^{\prime} \} \neq \varnothing \, \forall \sigma_{0} \in \Theta_{d}^{L}$ and $\sigma[\sigma_{0}] \in \Theta_{d}^{L}$. In addition, for any $\sigma_{1}, \, \sigma_{2} \in \Theta_{d}^{L}$, we can see that $\sigma_{1} \neq \sigma_{2}$ if and only if $\sigma[\sigma_{1}] \neq \sigma[\sigma_{2}]$. Therefore, $\{ \sigma_{0} \mid \sigma_{0} \in \Theta_{d}^{L} \} = \{ \sigma_{0} \mid \sigma_{0} \in \Theta_{d}^{L} \}$ and thus
\begin{align*}
\Risk_{\sigma \sim \Unif} (\what{\sigma}(1)) &= \frac{1}{|\Theta_{d}^{L}|} \sum_{\sigma_{0} \in \Theta_{d}^{L}} \E_{\sigma_{0}} \loss(\sigma_{0}(1), \what{\sigma}(1)) \\
 &= \frac{1}{|\Theta_{d}^{L}|} \sum_{\sigma_{0} \in \Theta_{d}^{L}} \frac{1}{2} \big( \E_{\sigma_{0}} \loss(\sigma_{0}(1), \what{\sigma}(1)) + \E_{\sigma[\sigma_{0}]} \loss(\sigma[\sigma_{0}](1), \what{\sigma}(1)) \big)
\end{align*}

In the testing problem, we can use the optimal Bayes risk as a lower bound. Let $\what{\sigma}_{\mathrm{Bayes}}$ be an assignment that achieves the minimum Bayes risk $\inf_{\what{\sigma}} \frac{1}{2} \big( \E_{\sigma_{0}} \loss(\sigma_{0}(1), \what{\sigma}(1)) + \E_{\sigma[\sigma_{0}]} \loss(\sigma[\sigma_{0}](1), \what{\sigma}(1)) \big)$. Notice that $\what{\sigma}_{\mathrm{Bayes}}(1)$ is a Bayes estimator concerning the $0$-$1$ loss, indicating that $\what{\sigma}_{\mathrm{Bayes}}(1)$ must to be the mode of the posterior distribution. Roughly speaking, the team who has a larger value of sum of the supporting random variables wins the test.

Grouping terms together according to each community relation, the log-likelihood function under the true community assignment $\sigma_0$ given an observation $\mbf{A}$ becomes
\begin{equation*}
L(\sigma_{0};\mbf{A}) = \log \Pr\{\mbf{A}\mid\sigma_0\} = \sum_{\bm{l}=(1,l_2,\ldots,l_d)}\sum_{i=1}^{\mcal{K}_d} \mrm{A}_{\bm{l}}\Indc{\bm{l}\overset{\sigma_0}{\sim}r_i} \lp \log\frac{p_i}{\bar{p_i}} + \log\bar{p_i}\rp
\end{equation*}
Similarly, we can obtain the expression $L(\sigma[\sigma_{0}];\mbf{A})$ when the underlying community assignment changes to $\sigma[\sigma_0]$. Hence, the probability of error is
\begin{align}
\label{eq:hypotest}
\begin{split}
\E_{\sigma_{0}} \loss(\sigma_{0}(1), \what{\sigma}_{\mathrm{Bayes}}(1)) &= \Pr_{\sigma_{0}} \{ L(\sigma[\sigma_{0}];\mbf{A}) \ge L(\sigma_{0};\mbf{A}) \} \\
 &= \Pr\Bigg\{ \adjustlimits\sum_{i<j:(r_i,r_j)\in\mcal{N}_d}\sum_{u=1}^{m_{r_i r_j}} \lp X_u^{(r_j)}-X_u^{(r_i)} \rp \geq 0 \Bigg\}
\end{split}
\end{align}
where where $f(s) \eqDef \frac{s}{1-s}$ for $C_{xy} \eqDef \log\frac{f(x)}{f(y)}$, and for each $(r_i,r_j)$ pair in $\mcal{N}_d$, $X_u^{(r_j)}\diid\Ber(p_j), X_u^{(r_i)}\diid\Ber(p_i) \;\forall u=1,\ldots,m_{r_i r_j}$ are all mutually indepedent random variables. Note that when summing over all possible $\bm{l}$'s in the log-likelihood function, the indices can be partitioned into two kinds of set: one whose label changes from $r_i$ to $r_j$ for some $(r_i,r_j)\in\mcal{N}_d$ when there is exactly one node disagreement and one whose label does not change whether the community assignment is $\sigma_0$ or $\sigma[\sigma_0]$. Specifically,
\begin{equation*}
\{\bm{l}=(1,l_2,\ldots,l_d)\} = J \cup J^c
\end{equation*}
where
\begin{equation*}
J = \bigcup_{i<j:(r_i,r_j)\in\mcal{N}_d} \lbp \bm{l}\overset{\sigma_0}{\sim}r_i, \bm{l}\overset{\sigma[\sigma_0]}{\sim}r_j \rbp \quad\text{ and }\quad J^c = \bigcup_{i=1}^{\mcal{K}_d} \lbp \bm{l}\overset{\sigma_0}{\sim}r_i, \bm{l}\overset{\sigma[\sigma_0]}{\sim}r_i \rbp
\end{equation*}
The former contributes to the difference between two Bernoulli random variables with cardinality $m_{r_i r_j}$, while the latter is invariant to the hypothesis testing problem and its likelihood remains the same at both sides of the first inequality in \eqref{eq:hypotest}. Note also that we rearrange terms on the specific side of the inequality to make $C_{r_i r_j}\geq 0\ \forall(r_i,r_j)\in\mcal{N}_d$ due to the non-decreasing property of the probability parameters $p_i$'s.

By symmetry, the situation is exactly the same for $\E_{\sigma[\sigma_{0}]} \loss(\sigma[\sigma_{0}](1), \what{\sigma}_{\mathrm{Bayes}}(1))$. Finally, since \eqref{eq:hypotest} holds for all $\sigma_{0} \in \Theta_{d}^{L}$ and $\inf(\cdot)$ is a concave function, we have
\begin{align*}
\Risk_{\sigma \sim \Unif} (\what{\sigma}(1)) &\ge \inf_{\what{\sigma}} \Risk_{\sigma \sim \Unif} (\what{\sigma}(1)) \\
 &= \inf_{\what{\sigma}} \frac{1}{|\Theta_{d}^{L}|} \sum_{\sigma_{0} \in \Theta_{d}^{L}} \frac{1}{2} \big( \E_{\sigma_{0}} \loss(\sigma_{0}(1), \what{\sigma}(1)) + \E_{\sigma[\sigma_{0}]} \loss(\sigma[\sigma_{0}](1), \what{\sigma}(1)) \big) \\
 &\ge \frac{1}{|\Theta_{d}^{L}|} \sum_{\sigma_{0} \in \Theta_{d}^{L}} \inf_{\what{\sigma}} \frac{1}{2} \big( \E_{\sigma_{0}} \loss(\sigma_{0}(1), \what{\sigma}(1)) + \E_{\sigma[\sigma_{0}]} \loss(\sigma[\sigma_{0}](1), \what{\sigma}(1)) \big) \\
 &= \frac{1}{|\Theta_{d}^{L}|} \sum_{\sigma_{0} \in \Theta_{d}^{L}} \Pr\Bigg\{ \adjustlimits\sum_{i<j:(r_i,r_j)\in\mcal{N}_d}\sum_{u=1}^{m_{r_i r_j}} \lp X_u^{(r_j)}-X_u^{(r_i)} \rp \geq 0 \Bigg\} \\
 &= \Pr\Bigg\{ \adjustlimits\sum_{i<j:(r_i,r_j)\in\mcal{N}_d}\sum_{u=1}^{m_{r_i r_j}} \lp X_u^{(r_j)}-X_u^{(r_i)} \rp \geq 0 \Bigg\}
\end{align*} 
 
\section{Proof of Lemma~\ref{lma:rozovsky}}
\label{app:rozovsky}
We can break the L.H.S. of \eqref{eq:lmaroz} dirctly into
\begin{equation*}
\Pr\Bigg\{ \adjustlimits\sum_{i<j:(r_i,r_j)\in\mcal{N}_d}\sum_{u=1}^{m_{r_i r_j}} \lp X_u^{(r_j)}-X_u^{(r_i)} \rp \geq 0 \Bigg\} \geq \prod_{i<j:(r_i,r_j)\in\mcal{N}_d} \Pr\Bigg\{ \sum_{u=1}^{m_{r_i r_j}} \lp X_u^{(r_j)}-X_u^{(r_i)} \rp \geq 0 \Bigg\}
\end{equation*}
Note that there are only finitely many terms involving in the product since we assume the order $d$ is constant and so does the total number of community relations $\kappa_d=|\mcal{K}_d|$ in $d$-{\hSBM}. Though na\"{i}ve, we could still arrive at the same order as the minimax rate. By symmetry, it suffices to focus on the first term in the above equation.
\begin{equation*}
\Pr\Bigg\{ \sum_{u=1}^{m_{r_1 r_2}} C_{r_1 r_2} \lp X_u^{(r_2)}-X_u^{(r_1)} \rp \geq 0 \Bigg\}
\end{equation*}
Here, we utilize a result from large deviation.

Conseder i.i.d. random variables $\{ X_{i} \}_{i=1}^{n}$ where each $X_{i} \sim X$. We assume $X$ is nondegenerate and that
\begin{equation}
\label{eq:lambda}
\mathbb{E} X^{2} e^{\lambda X} < \infty
\end{equation}
for some $\lambda > 0$. The former condition ensures, for $0 < u \le \lambda$, the existence of the functions $m(u) \eqDef \big( \log L_{X}(u) \big)^{\prime}$, $\sigma^{2}(u) \eqDef m^{\prime}(u)$ and $Q(u) \eqDef um(u) - \log L_{X}(u)$ where $L_{X}(u) \eqDef \mathbb{E} e^{uX}$ is the \emph{Moment Generating Function} (MGF) of the random variable $X$. Recall some known results:
\begin{equation*}
\lim_{u \downarrow 0} m(u) = m(0) = \mathbb{E} X < \infty
\end{equation*}
and
\begin{equation}
\label{eq:Qsup}
\sup_{0 < u \le \lambda} (ux - \log L_{X}(u)) = Q(u^{\ast})
\end{equation}
for $m(0) < x \le m(\lambda)$, where $u^{\ast}$ is the unique solution of the equation
\begin{equation}
\label{eq:ustar}
m(u) = x
\end{equation}
Note that it is the sup-achieving condition in \eqref{eq:Qsup}. The main theorem goes as follows.
\begin{theorem}[\textit{Theorem 1} in \cite{Rozovsky_03}]
$\forall x$ such that $m(0) < x \le m(\lambda)$ and $\forall n \ge 1$, the relation
\begin{equation*}
e^{-nQ(u^{\ast})} \ge \Pr \left\{ \sum_{i=1}^{n} X_{i} \ge nx \right\} \ge e^{-nQ(u^{\ast}) - c\left( 1+\sqrt{nQ(u^{\ast})} \right)}
\end{equation*}
holds, where the constant $c$ does not depend on $x$ and $n$.
\end{theorem}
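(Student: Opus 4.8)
The plan is to prove the two inequalities separately: the upper (left) bound by a one-line exponential Markov inequality, and the lower (right) bound by Cramér's exponential change of measure. Write $S_n \eqDef \sum_{i=1}^n X_i$, and recall that $x > m(0)$ forces the root $u^*$ of \eqref{eq:ustar} to lie in $(0,\lambda]$, so that $u^* > 0$.

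\textbf{Upper bound.} I would apply Markov's inequality to $e^{uS_n}$ for $u\in(0,\lambda]$. Independence gives $\E e^{uS_n}=L_X(u)^n$, hence $\Pr\{S_n\ge nx\}\le e^{-unx}L_X(u)^n = e^{-n(ux-\log L_X(u))}$ for every admissible $u$. Optimizing over $u$ and invoking \eqref{eq:Qsup}--\eqref{eq:ustar}, the exponent is maximized at $u=u^*$ with value $Q(u^*)$, which yields $\Pr\{S_n\ge nx\}\le e^{-nQ(u^*)}$.

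\textbf{Lower bound.} For the right inequality I would tilt the law of $X$ by $u^*$, defining $\tilde P(dy)\eqDef e^{u^*y}L_X(u^*)^{-1}P(dy)$. Under $\tilde P$ the tilted variable has mean $m(u^*)=x$ (by \eqref{eq:ustar}) and variance $\sigma^2(u^*)$, so the event $\{S_n\ge nx\}$ becomes typical. Writing $\tilde S_n$ for a sum of $n$ i.i.d.\ $\tilde P$-variables, the likelihood-ratio identity reads
\begin{equation*}
\Pr\{S_n\ge nx\} = L_X(u^*)^n\,\E_{\tilde P}\!\big[e^{-u^*\tilde S_n}\mathbf{1}\{\tilde S_n\ge nx\}\big] = e^{-nQ(u^*)}\,\E_{\tilde P}\!\big[e^{-u^*(\tilde S_n-nx)}\mathbf{1}\{\tilde S_n\ge nx\}\big],
\end{equation*}
using $L_X(u^*)^n e^{-u^*nx}=e^{-nQ(u^*)}$. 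I would then restrict the event to a window $\{nx\le\tilde S_n\le nx+\Delta_n\}$ lying just above the mean, where the integrand is at least $e^{-u^*\Delta_n}$, giving
\begin{equation*}
\Pr\{S_n\ge nx\} \ge e^{-nQ(u^*)}\,e^{-u^*\Delta_n}\,\tilde P\{nx\le\tilde S_n\le nx+\Delta_n\}.
\end{equation*}

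\textbf{The crux.} The main obstacle is choosing $\Delta_n$ so that the penalty $u^*\Delta_n$ together with the shortfall $-\log\tilde P(\text{window})$ amount to $O\!\big(1+\sqrt{nQ(u^*)}\big)$ \emph{uniformly} in $x$ (equivalently in $u^*\in(0,\lambda]$) and in $n$. The natural scale is $\Delta_n\approx\sigma(u^*)\sqrt{n}$: a one-sided central-limit / Chebyshev estimate under $\tilde P$ — valid uniformly because the hypothesis $\E X^2e^{\lambda X}<\infty$ keeps $\sigma^2(u)$ bounded on $(0,\lambda]$ — bounds the tilted mass of the window below by a positive constant, contributing only to the additive ``$1$''. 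The residual penalty $u^*\Delta_n\approx u^*\sigma(u^*)\sqrt{n}$ is then matched to the claimed correction through the identity $Q'(u)=u\,\sigma^2(u)$ (from differentiating $Q(u)=um(u)-\log L_X(u)$ and $m'(u)=\sigma^2(u)$), whence $Q(u^*)=\int_0^{u^*}t\,\sigma^2(t)\,dt$; this is exactly what lets one compare $u^*\sigma(u^*)$ with $\sqrt{Q(u^*)}$ and fold the penalty into $c\sqrt{nQ(u^*)}$. The delicate point is the regime where $\sigma^2$ varies appreciably over $(0,u^*)$, so this comparison is not automatic: there one must take an adaptively sized $\Delta_n$ and a local-limit refinement of the window probability, and the real work is to carry out all these estimates with a single constant $c$ independent of $x$ and $n$.
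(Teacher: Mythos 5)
A preliminary remark on the comparison you asked for: the paper contains no proof of this statement. It is imported verbatim as Theorem~1 of \cite{Rozovsky_03} and used as a black box inside the proof of Lemma~\ref{lma:rozovsky}, so your attempt can only be measured against the standard literature argument. Within that frame, your upper bound is complete and correct (exponential Markov plus the optimality condition \eqref{eq:Qsup}--\eqref{eq:ustar}), and your lower-bound skeleton --- tilting by $u^{\ast}$, the likelihood-ratio identity $L_X(u^{\ast})^n e^{-u^{\ast}nx}=e^{-nQ(u^{\ast})}$, and the reduction to a window just above the tilted mean --- is the canonical Cram\'{e}r route, set up without error.

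The gap is the one you name yourself and then defer: the uniform window estimate is not a routine step but the entire content of the theorem, so what you have is a plan rather than a proof. Three specific points fail as written. (i) A Chebyshev/Cantelli bound cannot lower-bound $\tilde{P}\{\tilde{S}_n\ge nx\}$: such inequalities bound tails from above, and the probability that a sum exceeds its mean admits no universal positive lower bound for skewed laws --- for a tilted Bernoulli law with small success probability $\tilde{p}$ one has $\tilde{P}\{\tilde{S}_n\ge n\tilde{p}\}=1-(1-\tilde{p})^n\approx n\tilde{p}$ when $n\tilde{p}$ is small, which is arbitrarily far from a constant. (ii) Upgrading to a one-sided CLT via Berry--Esseen needs the tilted third-moment ratio $\E\lb|X-x|^3e^{u^{\ast}X}\rb/\lp L_X(u^{\ast})\sigma^3(u^{\ast})\rp$ bounded uniformly in $u^{\ast}\in(0,\lambda]$; the hypothesis $\E X^2e^{\lambda X}<\infty$ supplies no such control (nothing beyond $\E X<\infty$ is assumed on the negative tail, and in the near-degenerate regime the ratio behaves like $\sigma(u^{\ast})^{-1}\to\infty$), and the resulting error $O\lp\rho/(\sigma^3\sqrt{n})\rp$ is not small precisely when $n\sigma^2(u^{\ast})=O(1)$ --- the Poissonian regime that motivates Rozovsky's refinement and that is exactly the neighborhood in which the surrounding paper operates, with $p_i=O(n^{-(d-1)})$. (iii) The statement must hold for \emph{every} $n\ge1$ with one constant $c$; at $n=1$ it is a pointwise claim about the law of $X$ that no normal-approximation argument can reach. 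What replaces (i)--(iii) in Rozovsky's actual proof is a local-limit/Cram\'{e}r-series analysis carried out uniformly in the conjugate parameter; that machinery, not the tilting identity, is where $c$ independent of $x$ and $n$ comes from. Your auxiliary observations are sound --- in particular $Q(u^{\ast})=\int_0^{u^{\ast}}t\,\sigma^2(t)\,dt$ is correct, and in the tilted-Bernoulli example the worrisome ratio $u^{\ast}\sigma(u^{\ast})/\sqrt{Q(u^{\ast})}$ is of order $\sqrt{u^{\ast}}\le\sqrt{\lambda}$, so the penalty-matching step is salvageable with $c$ depending on the law --- but, as your final sentence concedes, the estimates constituting the theorem's actual difficulty are left undone.
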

\noindent
The first inequality is essentially the Chernoff Bound, while here we use the second one, i.e. the lower bound result.

First, we identify that $X = C_{r_1 r_2} ( X_u^{(r_2)}-X_u^{(r_1)} )$ and $n = m_{r_1 r_2}$ for our problem. Besides, since $X < \infty$, we can take $\lambda$ large enough so that \eqref{eq:lambda} holds. The MGF now becomes
\begin{equation*}
L_{X}(u) = \mathbb{E} e^{uX} = \mathbb{E} \big[ e^{uC_{r_1 r_2}X_u^{(r_2)}} \big] \cdot \mathbb{E} \big[ e^{-uC_{r_1 r_2}X_u^{(r_1)}} \big]
\end{equation*}
Also, since $m(0) = \mathbb{E} X < 0$, we make a trick here to take $x = 0$. The corresponding optimalilty condition \eqref{eq:ustar} becomes
\begin{align*}
m(u) = x = 0 &\Leftrightarrow \frac{L_{X}^{\prime}(u)}{L_{X}(u)} = 0 \\
 &\Leftrightarrow L_{X}^{\prime}(u) = 0
\end{align*}
It can be shown that $u^{\ast} = \frac{1}{2}$ and the supremum achieved is
\begin{align*}
Q(u^{\ast}) &= \sup_{0 < u \le \lambda} (ux - \log L_{X}(u)) \\
 &= - \log L_{X}(u^{\ast}) \\
 &= I_{p_1 p_2}
\end{align*}
Combining the expressions for each $C_{r_i r_j}$ corresponding to a $(r_i,r_j)\in\mcal{N}_d$, we can conclude that
\begin{align*}
 &\Pr\Bigg\{ \adjustlimits\sum_{i<j:(r_i,r_j)\in\mcal{N}_d}\sum_{u=1}^{m_{r_i r_j}} \lp X_u^{(r_j)}-X_u^{(r_i)} \rp \geq 0 \Bigg\} \\
 \geq\> &\prod_{i<j:(r_i,r_j)\in\mcal{N}_d} \Pr\Bigg\{ \sum_{u=1}^{m_{r_i r_j}} \lp X_u^{(r_j)}-X_u^{(r_i)} \rp \geq 0 \Bigg\} \\
 \geq\> &\prod_{i<j:(r_i,r_j)\in\mcal{N}_d} e^{-m_{r_i r_j}I_{p_i p_j} - c_{r_i r_j}\left( 1+\sqrt{m_{r_i r_j}I_{p_i p_j}} \right)} \\
 =\> &\exp \Bigg( -\sum_{i<j:(r_i,r_j)\in\mcal{N}_d}\Big( m_{r_i r_j}I_{p_i p_j} + c_{r_i r_j}\big( 1+\sqrt{m_{r_i r_j}I_{p_i p_j}} \big) \Big) \Bigg) \\
 \geq\> &\exp \Bigg( -\sum_{i<j:(r_i,r_j)\in\mcal{N}_d}\bigg( m_{r_i r_j}I_{p_i p_j} + c\Big( \kappa_d + \sqrt{\kappa_d\sum_{i<j:(r_i,r_j)\in\mcal{N}_d}m_{r_i r_j}I_{p_i p_j}} \Big) \bigg) \Bigg)
\end{align*}
where $c = \max_{i<j:(r_i,r_j)\in\mcal{N}_d} \{c_{r_i r_j}\}$ is independent of $n^{\prime}$. Finally, since we assume that $\sum_{i<j:(r_i,r_j)\in\mcal{N}_d}m_{r_i r_j}I_{p_i p_j}$ goes to infinity as $n$ becomes large, the second term with the constant $c$ in the above equation would be dominated by the first term. We have the desired asymtotic result consequently.

\end{document}